\begin{document}
\title{Boundary topological entanglement entropy in two and three dimensions}

\author{Jacob C. Bridgeman}

\email{jcbridgeman1@gmail.com}
\homepage{jcbridgeman.bitbucket.io}
\orcid{0000-0002-5638-6681}

\affiliation{Perimeter Institute for Theoretical Physics, Waterloo, Ontario, Canada}

\author{Benjamin J. Brown}
\affiliation{Centre for Engineered Quantum Systems, School of Physics, The University of Sydney, Sydney, NSW 2006, Australia}

\author{Samuel J. Elman}
\affiliation{Centre for Engineered Quantum Systems, School of Physics, The University of Sydney, Sydney, NSW 2006, Australia}
\affiliation{Department of Physics, Imperial College London, London, SW7 2AZ, United Kingdom}
\affiliation{School of Physics and Astronomy, University of Leeds, Leeds LS2 9JT, United Kingdom}

\commoncomment{Authors listed alphabetically}

\begin{abstract}
	The topological entanglement entropy is used to measure long-range quantum correlations in the ground space of topological phases.
	Here we obtain closed form expressions for the topological entropy of (2+1)- and (3+1)-dimensional loop gas models, both in the bulk and at their boundaries, in terms of the data of their input fusion categories and algebra objects. Central to the formulation of our results are generalized $\S$-matrices.
	We conjecture a general property of these $\S$-matrices, with proofs provided in many special cases. This includes constructive proofs for categories up to rank 5.
\end{abstract}

\maketitle

% !TeX spellcheck = en_US
% This line sets the project root file.
% !TEX root = ../WW_EntanglementEntropy.tex
%

\section{Introduction}\label{sec:introduction}

The classification of topological phases is fundamental to the study of modern condensed matter physics~\cite{haldane1984periodic,Wen1989vacuum, wen1990ground, wen2004quantum}. Moreover, they have properties that may be valuable for the robust storage and manipulation of quantum information~\cite{kitaev1997fault, Brown2016Quantum}. Their characteristics include a stable gap at zero temperature and quasiparticle excitations with non-trivial braid statistics~\cite{wilczek1982magnetic,wilczek1982quantum}.

An important class of topological phases are represented by topological loop-gas models~\cite{levin2005string,Walker2012TQFT}. These models can be defined in terms of an input unitary fusion category, and their ground states by superpositions of string diagrams labeled by objects from the category. The categorical framework provides a collection of local relations that ensure topological invariance of the ground states. In (2+1)-dimensions, these models are called Levin-Wen models~\cite{levin2005string}. Levin-Wen models have point-like excitations, commonly called anyons, with non-trivial fusion rules and braid statistics.
In (3+1)-dimensions, the input category must be equipped with a premodular braiding, leading to a Walker-Wang model~\cite{Walker2012TQFT}. Generically, Walker-Wang models support point-like and loop-like excitations. In contrast to Levin-Wen models, the excitations in the bulk of a Walker-Wang model may be trivial, specifically if the input category is modular.

Loop-gas models can be defined on manifolds with boundaries by modifying the local relations governing the strings in the vicinity of the boundary. One way to define a boundary to a topological loop-gas is to allow some strings to terminate on the boundary. This is captured in the current work using particular objects called algebras~\cite{1706.03329,1706.00650}. Despite their trivial bulk excitations, Walker-Wang models may have highly non-trivial boundary excitations.

Intimately connected to the topological properties of these phases is the long-range entanglement present in the ground state of the Hamiltonians describing these phases~\cite{chen2010local, wang2017twisted}. The long-range quantum correlations found in the ground states of topological phases can be measured using the topological entanglement entropy~\cite{hamma2005bipartite, kitaev2006topological, levin2006detecting}. We typically expect that the entanglement entropy shared between two subsystems of the ground state of a gapped many body system to respect an area law~\cite{eisert2010area}. However, supposing a sensible choice of bipartition, the entanglement entropy of the ground state of topological phases has a constant universal correction~\cite{hamma2005bipartite}. In (2+1)-dimensions, it is known that this correction relates to the total quantum dimension of the quasiparticle excitations supported by the phase~\cite{kitaev2006topological, levin2006detecting}. We can also evaluate the quantum dimensions of individual excitations~\cite{dong2008topological} and defects~\cite{brown2013topological, Bonderson2017Anyonic} of a phase using topological entanglement entropy. Other work has shown we can use the topological entanglement entropy to calculate the fusion rules~\cite{Shi2020} and braid statistics~\cite{zhang2012quasi} of (2+1)-dimensional phases.

Generalizations of topological entanglement entropy diagnostics have been found~\cite{castelnovo2008topological, grover2011entanglement} for (3+1)-dimensional phases with bulk topological order. These diagnostics were first demonstrated using the (3+1)-dimensional toric code model~\cite{hamma2005string} as an example. This phase gives rise to one species of bosonic excitation that braids non-trivially with a loop-like excitation in the bulk of the system.
In contrast, particular classes of Walker-Wang models~\cite{Walker2012TQFT} have been shown to behave differently using the same diagnostics. Modular examples of these models demonstrate zero bulk topological entanglement entropy~\cite{von2013three, bullivant2016entropic}, even though, at their boundary, they realize quasiparticle excitations with non-trivial braid statistics~\cite{von2013three}.

In \onlinecite{kim2015ground}, two new diagnostics were found to interrogate the long-range entanglement at the boundary of a (3+1)-dimensional-dimensional topological phase. The behavior of the diagnostics was determined by making quite generic considerations of the support of creation operators for topological excitations, without assuming any knowledge of the underlying particle theory of the phase. It was shown that the diagnostics will show a null outcome only if all the particles that can be created at the boundary have trivial braid statistics. Conversely, boundary topological order must necessarily show positive topological entanglement entropy if quasi-particles that demonstrate non-trivial braid statistics can be created. In that work, the diagnostics were tested at the different boundaries of the (3+1)-dimensional-dimensional toric code where null outcomes were obtained at boundaries where the appropriate types of particles condense. However, a limitation of the diagnostics presented in that paper is that the meaning of a positive outcome is not well understood.

From the input fusion category perspective, the topological entanglement entropies can be understood as arising from constraints on the `string flux' passing through a surface. In (3+1)-dimensions, there are also additional corrections due to braiding. Allowing strings to terminate in the vicinity of a physical boundary alters the flux (and braiding) constraints in the vicinity, thereby altering the topological entropy.

In this work, we obtain closed form expressions for bulk and boundary topological entanglement entropy diagnostics for topological loop-gas models. We obtain our results by evaluating the entanglement entropy of various regions of ground states of Levin-Wen and Walker-Wang models. This requires careful analysis of various string diagrams, such as generalized $\S$-matrices which encode the braiding properties of the input category. Additionally, we examine how the inclusion of boundaries, via algebra objects, alter these diagrams, and so the topological entropy. In all cases, we find that the entropy can be expressed in terms of the quantum dimension of the input category and the quantum dimension of the algebra object. In the bulk of (3+1)-dimensional models, we conjecture, and prove in many cases, that the entropy is the logarithm of the total quantum dimension of the particle content of the theory, extending the results of \onlinecite{bullivant2016entropic}.

\subsubsection*{Overview}

Following a brief summary of our results, the remainder of this paper is structured as follows.
In \cref{sec:preliminaries}, we provide some mathematical definitions and minor results that are required for the remainder of the paper. In \cref{sec:models}, we briefly review the models of interest, and discuss the class of boundaries we consider.
In \cref{sec:entropydiagnostics}, we explain the origin and meaning of topological entanglement entropy, and define the diagnostics used to detect boundary topological entanglement entropy.
In \cref{sec:bulkentropy}, we compute the entropy of bulk regions for Levin-Wen models. These computations are required for the Walker-Wang models, and provide a good warm-up. We then discuss the additional considerations for Walker-Wang models, and extend the computations to these.
In \cref{sec:boundaryentropy}, we compute the boundary entropy diagnostics for Levin-Wen models with boundary, followed by some classes of Walker-Wang models with boundary.
We summarize in \cref{sec:remarks}.

We include two appendices. In \cref{app:FC_pfs}, we provide proofs of some lemmas concerning (generalized) $\S$-matrices.
In \cref{app:SN_results}, we provide proofs of some results concerning loop-gas models, and their entropies.

\subsection{Summary of results}

\begin{table}\renewcommand{\arraystretch}{1.2}\setlength{\tabcolsep}{15pt}
	\centering
	\begin{threeparttable}
		\begin{tabular}{!{\vrule width 1pt}>{\columncolor[gray]{.9}[\tabcolsep]}c!{\vrule width 1pt}c!{\vrule width 1pt}c!{\vrule width 1pt}c!{\vrule width 1pt}}
			\toprule[1pt]
			\rowcolor[gray]{.9}[\tabcolsep]
			Model                         & Bulk strings                      & Boundary algebra object                                        & Topological entropy                                \\
			\toprule[1pt]
			                              &                                   &                                                                & $\LWTEE=\log\D^2_{\drinfeld{\C}}$                  \\
			\greycline{3-4}
			\multirow{-2}{*}{Levin-Wen}   & \multirow{-2}{*}{$\C$ fusion}     & $A$                                                            & $\LWbndTEE=\log\D^2$                               \\
			\toprule[1pt]
			                              &                                   &                                                                & $\WWTEE=\log\D^2_{\mug{\C}}$\tnote{a}              \\
			\greycline{3-4}
			                              &                                   &                                                                & $\WWbndTEEPoint=?$\tnote{b}                        \\
			                              &                                   & \multirow{-2}{*}{$A$}                                          & $\WWbndTEELoop=\log d_A^2-\log\D^2+\WWbndTEEPoint$ \\
			\greycline{3-4}
			                              &                                   &                                                                & $\WWbndTEEPoint=\log\D^2$                          \\
			\multirow{-5}{*}{Walker-Wang} & \multirow{-5}{*}{$\C$ premodular} & \multirow{-2}{*}{$A=1$}                                        & $\WWbndTEELoop=0$                                  \\
			\toprule[1pt]
			                              &                                   &                                                                & $\WWTEE=\log\D^2_{\mug{\C}}$                       \\
			\greycline{3-4}
			                              &                                   &                                                                & $\WWbndTEEPoint=\log\D^2-\log d_A$                 \\
			\multirow{-3}{*}{Walker-Wang} & \multirow{-3}{*}{$\C$ symmetric}  & \multirow{-2}{*}{$A$}                                          & $\WWbndTEELoop=\log d_A$                           \\
			\toprule[1pt]
			                              &                                   &                                                                & $\WWTEE=\log\D^2_{\mug{\C}}$                       \\
			\greycline{3-4}
			                              &                                   &                                                                & $\WWbndTEEPoint=\log\D^2-2\log d_A$                \\

			                              &                                   & \multirow{-2}{*}{$A$ such that $A\cap\mug{\C}=\{1\}$\tnote{c}} & $\WWbndTEELoop=0$                                  \\
			\greycline{3-4}
			                              &                                   &                                                                & $\WWbndTEEPoint=\log\D^2-\log d_A$                 \\
			\multirow{-5}{*}{Walker-Wang} & \multirow{-5}{*}{$\C$ pointed}    & \multirow{-2}{*}{$A$ such that $A\cap\mug{\C}=A$}              & $\WWbndTEELoop=\log d_A$                           \\
			\toprule[1pt]
		\end{tabular}
		\footnotesize
		\begin{tablenotes}
			\item[a] Conjectured, proven in many cases as indicated in \cref{thm:WWentropyexamples}.
			\item[b] We do not have a general form at present.
			\item[c] Includes the case $\C$ modular.
		\end{tablenotes}
	\end{threeparttable}
	\caption{
		Summary of results, technical terms defined in \cref{sec:preliminaries}. The bulk strings are labeled by a unitary fusion category $\C$, possibly with extra structure. The number $\D$ denotes the total quantum dimension of $\C$, $A$ is an algebra object (with extra structure, see \cref{sec:preliminaries}) of dimension $d_A$, $\drinfeld{\C}$ and $\mug{\C}$ are the Drinfeld and M\"uger centers of $\C$ respectively. Topological entanglement entropies for Levin-Wen models are denoted $\LWTEE$ and $\LWbndTEE$ for the bulk and near the boundary respectively. These quantities are defined in \cref{eqn:LWent}, \cref{eqn:LWbndent} and \cref{fig:LWregions}. The corresponding quantities in (3+1)-dimensions are denoted $\WWTEE$ for the bulk entropy (\cref{eqn:WWent} and \cref{fig:WWregionsblk}), and $\WWbndTEEPoint$ ($\WWbndTEELoop$) for the boundary entropy detecting point-like (loop-like) excitations as defined in \cref{eqn:WWptdef} and \cref{fig:WWregionsbnd_pt} (\cref{eqn:WWloopdef} and \cref{fig:WWregionsbnd_lp}).
	}
	\label{tab:resultsummary}
\end{table}

In \cref{tab:resultsummary}, we summarize our main results. The models we discuss will be introduced in the following sections, followed by the proofs of these results. We note that many of these results were previously known, for example the bulk Levin-Wen appears in \onlinecites{kitaev2006topological,levin2006detecting}. When the Levin-Wen model is defined by the fusion category $\vvec{G}$, the boundary Levin-Wen results appear in \onlinecite{1801.01519}. The bulk results for symmetric and modular Walker-Wang models appear in \onlinecite{bullivant2016entropic}. We extend this to include all pointed inputs (all quantum dimensions equal to 1), as well as all input categories up to rank 5. This allows us to conjecture a general result. To the best of our knowledge, there are no results concerning boundary entropies of Walker-Wang models beyond the (3+1)-dimensional toric code~\cite{kim2015ground}.
% !TeX spellcheck = en_US
% This line sets the project root file.
% !TEX root = ../WW_EntanglementEntropy.tex
%

\section{Preliminaries}\label{sec:preliminaries}

Each of the physical models of interest in this work is defined by a collection of algebraic data. In the case of the (2+1)-dimensional models, this data can be conveniently packaged into a \emph{fusion category}. In one higher dimension, additional data is required, so the package is a \emph{premodular category}. Boundaries of these models can be specified using particular objects, called \emph{algebra objects} in the input category.

In this section, we provide some (standard) definitions of these constructions, and introduce the notation we will use in the remainder of the manuscript. Many of the definitions that appear in this section are adapted from \onlinecite{0111204,0804.3587}. Readers familiar with braided fusion categories and their algebra objects can skip to \cref{sec:models}.

Throughout this work, we find it helpful to define a generalized Kronecker delta function. For this purpose, we use the \define{indicator function} $\indicator{X}=1\iff X=\mathrm{true}$, and zero otherwise.

\begin{definition}[Unitary fusion category]\label{def:UFC}
	We sketch the definition of a unitary (skeletal) fusion category. For a more complete definition, we refer to \onlinecite{MR3242743}, or for the physically minded reader \onlinecites{kitaev2006anyons,Bondersonthesis,Beer2018}. For our purposes, a unitary fusion category $\C$ consists of:
	\begin{itemize}
		\item A finite set of simple objects $\{1,a_1,a_2,\ldots a_k\}$, where $1$ is the distinguished object called the unit, and $k+1=\rk{\cat{C}}$ is called the \define{rank} of the category.
		\item For each triple of simple objects $a,b,c\in \C$, a finite dimensional vector space $\C(a\otimes b,c)$, called a fusion space. The dimension of $\C(a\otimes b,c)$ defines the integer $N_{ab}^c$. For any simple objects $a$ and $b$, fusion with the unit obeys $N_{1,a}^{b}=N_{a,1}^{b}=\indicator{a=b}$.
		\item Associator isomorphisms $(a\otimes b)\otimes c\cong a\otimes(b\otimes c)$
		\item To every object $a\in \C$, a dual object $\dual{a}$ so that $\C(a\otimes\dual{a},1)\cong\mathbb{C}\cong\C(\dual{a}\otimes a,1)$.
	\end{itemize}

	We refer to the integers $N_{ab}^c$ as fusion rules. If all $N_{a,b}^c\in\{0,1\}$, we call the category \define{multiplicity free}.
	The unique positive solution to the set of equations
	\begin{align}
		d_ad_b & =\sum_c N_{ab}^c d_c
	\end{align}
	defines the \define{quantum dimensions} $d_a$ of the simple objects. The \define{total quantum dimension} of $\C$ is defined as
	\begin{align}
		\mathcal{D}^2:=\sum_c d_c^2.
	\end{align}
	If all $d_a=1$ (equivalently $\D^2=\rk{\C}$), the category is called \define{pointed}. We remark that in the physics literature, this property is commonly called Abelian.

	It is common to use a diagrammatic calculus of string diagrams to discuss fusion categories. Strings are labeled by objects from the category.
	If, in particular, a string is labeled by a simple object, it cannot change its labeling to another simple object except at a vertex, since there are no morphisms between distinct simple objects. Frequently, we neglect to draw strings labeled by the unit object for simplicity.

	In string diagrams, the quantum dimensions are assigned to loops
	\begin{align}
		\begin{array}{c}
			\includeTikz{bubbleA}{
				\begin{tikzpicture}[scale=.5];
					\draw (0,0) circle (.75);
					\node[anchor=east] at (-.75,0) {$a$};
				\end{tikzpicture}
			}
		\end{array}
		 & =d_a,\label{eqn:bubble}
	\end{align}
	and we refer to the insertion or removal of a loop as a \define{loop move}.

	Once we choose a basis for $\C(a\otimes b,c)$, we denote a basis vector $\mu\in\C(a\otimes b,c)$ using a trivalent vertex\footnote{We refrain from drawing arrows on the diagrams, instead using the convention that all lines are oriented upwards.}
	\begin{align}
		\begin{array}{c}
			\includeTikz{trivalent1}{
				\begin{tikzpicture}[scale=.5]
					\pgfmathsetmacro{\s}{sqrt(2)}
					\draw (0,0)--(1,1) node[below,pos=0,inner sep=.1]{\strut$a$} (1,1)--(2,0)node[below,pos=1,inner sep=.1]{\strut$b$} (1,1)--(1,1+\s)node[above,pos=1,inner sep=.1]{\strut$c$};
					\node[left] at (1,1) {\strut$\mu$};
				\end{tikzpicture}
			}
		\end{array}.
	\end{align}
	With these bases fixed, the associators can be expressed as unitary matrices
	\begin{align}
		\begin{array}{c}
			\includeTikz{FLHS}{
				\begin{tikzpicture}[scale=.5,yscale=-1];
					\pgfmathsetmacro{\s}{sqrt(2)}
					\draw (0,0)--(0,1)--(-1,2)--(-2,3) (-1,2)--(0,3) (0,1)--(2,3);
					\node[anchor=south,inner sep=.1] at (0,0) {\strut$d$};
					\node[anchor=north,inner sep=.1] at (-2,3) {\strut$a$};
					\node[anchor=north,inner sep=.1] at (0,3) {\strut$b$};
					\node[anchor=north,inner sep=.1] at (2,3) {\strut$c$};
					\node[anchor=south east,inner sep=.1] at (-.5,1.5) {\strut$e$};
					\node[anchor=west] at (0,1) {\strut$\alpha$};
					\node[anchor=west] at (-1,2) {\strut$\beta$};
				\end{tikzpicture}
			}
		\end{array}
		 & =
		\sum_{(\mu,f,\nu)}\bigg[F^d_{abc}\bigg]_{(\alpha,e,\beta), (\mu,f,\nu)}
		\begin{array}{c}
			\includeTikz{FRHS}{
				\begin{tikzpicture}[scale=.5,xscale=-1,yscale=-1];
					\pgfmathsetmacro{\s}{sqrt(2)}
					\draw (0,0)--(0,1)--(-1,2)--(-2,3) (-1,2)--(0,3) (0,1)--(2,3);
					\node[anchor=south,inner sep=.1] at (0,0) {\strut$d$};
					\node[anchor=north,inner sep=.1] at (-2,3) {\strut$c$};
					\node[anchor=north,inner sep=.1] at (0,3) {\strut$b$};
					\node[anchor=north,inner sep=.1] at (2,3) {\strut$a$};
					\node[anchor=south west,inner sep=.1] at (-.5,1.5) {\strut$f$};
					\node[anchor=east] at (0,1) {\strut$\mu$};
					\node[anchor=east] at (-1,2) {\strut$\nu$};
				\end{tikzpicture}
			}
		\end{array},
	\end{align}
	with $F^d_{abc}$ a unitary matrix for each valid choice of labels. This re-association is commonly referred to as an \define{$F$-move}. These matrices must obey the pentagon equations
	\begin{align}
		\sum_\delta \bigg[F^e_{fcd}\bigg]_{(\alpha,g,\beta), (\delta,x,\rho)}\bigg[F^e_{abx}\bigg]_{(\delta,f,\gamma), (\mu,y,\nu)} & =
		\sum_{\substack{(\sigma,z,\tau)                                                                                                 \\\epsilon}}
		\bigg[F^g_{abc}\bigg]_{(\beta,f,\gamma), (\sigma,z,\tau)}\bigg[F^e_{azd}\bigg]_{(\alpha,g,\sigma), (\mu,y,\epsilon)}
		\bigg[F^y_{bcd}\bigg]_{(\epsilon,z,\tau), (\nu,x,\rho)}.
	\end{align}
	Additionally, the unit obeys the triangle equation~\cite{MR3242743}, however we always, without loss of generality, choose the unit to be strict.

	The category $\C$ is equipped with a dagger structure, so we also have dual spaces to each fusion space. These are represented using upside-down vertices. We normalize the basis for $\C(a\otimes b,c)$ and $\C(c,a\otimes b)$ so that
	\begin{align}
		\begin{array}{c}
			\includeTikz{TVV_norm_LHS}{
				\begin{tikzpicture}[scale=.5]
					\pgfmathsetmacro{\s}{sqrt(2)}
					\begin{scope}[yscale=-1]
						\draw (1,1)--(0,0) node[pos=1,left,inner sep=.5] {\strut$a$} node[pos=0,left]{\strut$\mu$} (1,1)--(1,1+\s)node[pos=.9,right,inner sep=.5]{\strut$c$} (1,1)--(2,0) node[pos=1,right,inner sep=.5] {\strut$b$};
					\end{scope}
					\draw (1,1)--(0,0)  node[pos=0,left]{\strut$\nu$} (1,1)--(1,1+\s)node[pos=.9,right,inner sep=.5]{\strut$d$} (1,1)--(2,0);
				\end{tikzpicture}
			}
		\end{array}
		 & =\indicator{c=d}\indicator{\mu=\nu}
		\sqrt{\frac{d_ad_b}{d_c}}
		\begin{array}{c}
			\includeTikz{TVV_norm_RHS}{
				\begin{tikzpicture}[scale=.5]
					\pgfmathsetmacro{\s}{sqrt(2)}
					\draw (0,-1-\s)--(0,1+\s) node[right,pos=.5,inner sep=.5]{\strut$c$};
				\end{tikzpicture}
			}
		\end{array},
		\\
		\begin{array}{c}
			\includeTikz{identitynormLHS}{
				\begin{tikzpicture}[scale=.5]
					\pgfmathsetmacro{\s}{sqrt(2)}
					\draw (0,-1-\s)--(0,1+\s) node[left,pos=0,inner sep=.5]{\strut$a$};\node[left,inner sep=.5] at (0,1+\s){\phantom{\strut$a$}};
					\draw (1,-1-\s)--(1,1+\s) node[right,pos=0,inner sep=.5]{\strut$b$};\node[right,inner sep=.5] at (1,1+\s){\phantom{\strut$b$}};
				\end{tikzpicture}
			}
		\end{array}
		 & =\sum_{c,\mu}
		\sqrt{\frac{d_c}{d_ad_b}}
		\begin{array}{c}
			\includeTikz{identitynormRHS}{
				\begin{tikzpicture}[scale=.5]
					\pgfmathsetmacro{\s}{sqrt(2)};
					\draw (0,-1-\s)--(.5,-1)--(.5,1)--(0,1+\s) ;
					\draw (1,-1-\s)--(.5,-1)--(.5,1)--(1,1+\s);
					\node[left,inner sep=.5] at (0,-1-\s){\strut$a$};\node[left,inner sep=.5] at (0,1+\s){\strut$a$};
					\node[right,inner sep=.5] at (1,-1-\s){\strut$b$};\node[right,inner sep=.5] at (1,1+\s){\strut$b$};
					\node[right,inner sep=.5] at (.5,0){\strut$c$};
					\node[left] at (.5,-1){\strut$\mu$};\node[left] at (.5,1){\strut$\mu$};
				\end{tikzpicture}
			}
		\end{array}.
	\end{align}
	Unitary fusion categories can be equipped with a unique unitary \define{spherical structure}~\cite{Etingof2005}. This allows all diagrams to behave as though they were drawn on the surface of a sphere, for example
	\begin{align}
		\begin{array}{c}
			\includeTikz{sphericalLHS}{
				\begin{tikzpicture}[scale=1]
					\draw (0,.2) to[out=90,in=0] (-.25,.35)to[out=180,in=90] (-.5,0) to[out=270,in=180](-.25,-.35) to[out=0,in=270] (0,-.2);
					\filldraw[fill=white] (0,0) circle (.2);\node at (0,0) {$x$};
				\end{tikzpicture}
			}
		\end{array}
		 & =
		\begin{array}{c}
			\includeTikz{sphericalRHS}{
				\begin{tikzpicture}[scale=1]
					\draw (0,.2) to[out=90,in=180] (.25,.35)to[out=0,in=90] (.5,0) to[out=270,in=0](.25,-.35) to[out=180,in=270] (0,-.2);
					\filldraw[fill=white] (0,0) circle (.2);\node at (0,0) {$x$};
				\end{tikzpicture}
			}
		\end{array},\label{eqn:spherical}
	\end{align}
	where $x$ is any subdiagram.

	We refer to \onlinecites{kitaev2006anyons,Bondersonthesis} for a more detailed overview of these diagrams.
\end{definition}

\begin{definition}[Unitary braided fusion category]\label{def:BFC}
	Given a unitary fusion category $\C$, a braiding is a map $a\otimes b\cong b\otimes a$, which is compatible with the associator of $\C$. Graphically, the braiding is encoded in the unitary $R$ matrices
	\begin{align}
		\begin{array}{c}
			\includeTikz{braid_LHS}{
				\begin{tikzpicture}[scale=.4]
					\pgfmathsetmacro{\s}{sqrt(2)}
					\draw[draw=white,double=black,ultra thick] (2,-2) to [out=135,in=225] (0,0)--(1,1)--(2,0)  to [out=315,in=45] (0,-2);
					\draw[draw=white,double=black,ultra thick](2,0)  to [out=315,in=45] (0,-2);
					\draw [thick](1,1)--(1,1+\s) node[pos=1,above,inner sep=.1] {\strut$c$};
					\node[below,inner sep=.1] at (0,-2) {\strut$a$};\node[below,inner sep=.1] at (2,-2) {\strut$b$};
					\node[left] at (1,1) {\strut$\mu$};
				\end{tikzpicture}
			}
		\end{array}
		 & =\sum_\nu\bigg[R_{ab}^c\bigg]_{\mu,\nu}
		\begin{array}{c}
			\includeTikz{braid_RHS}{
				\begin{tikzpicture}[scale=.4]
					\pgfmathsetmacro{\s}{sqrt(2)}
					\draw	[thick](1,1)--(1,1+\s)node[pos=1,above,inner sep=.1] {\strut$c$};;
					\draw	[thick] (1,1)--(0,-2);
					\draw	[thick] (1,1)--(2,-2);
					\node[below,inner sep=.1] at (0,-2) {\strut$a$};\node[below,inner sep=.1] at (2,-2) {\strut$b$};
					\node[left] at (1,1) {\strut$\nu$};
				\end{tikzpicture}
			}
		\end{array}.
	\end{align}
	Compatibility with the (given) $F$ matrices is encoded in the hexagon equations
	\begin{align}
		\sum_{\gamma,\delta}
		\bigg[R_{ac}^e\bigg]_{\beta,\gamma}
		\bigg[F^d_{acb}\bigg]_{(\alpha,e,\gamma), (\sigma,f,\delta)}
		\bigg[R_{bc}^f\bigg]_{\delta,\tau}
		 & =
		\sum_{\substack{(\epsilon,g,\rho)                     \\\mu}}
		\bigg[F^d_{cab}\bigg]_{(\alpha,e,\beta), (\epsilon,g,\rho)}
		\bigg[R_{gc}^d\bigg]_{\epsilon,\mu}
		\bigg[F^d_{abc}\bigg]_{(\mu,g,\rho), (\sigma,f,\tau)} \\
		\sum_{\gamma,\delta}
		\bigg[R_{ca}^e\bigg]_{\gamma,\beta}^*
		\bigg[F^d_{acb}\bigg]_{(\alpha,e,\gamma), (\sigma,f,\delta)}
		\bigg[R_{cb}^f\bigg]_{\tau,\delta}^*
		 & =
		\sum_{\substack{(\epsilon,g,\rho)                     \\\mu}}
		\bigg[F^d_{cab}\bigg]_{(\alpha,e,\beta), (\epsilon,g,\rho)}
		\bigg[R_{cg}^d\bigg]_{\mu,\epsilon}^*
		\bigg[F^d_{abc}\bigg]_{(\mu,g,\rho), (\sigma,f,\tau)}.
	\end{align}
\end{definition}

\begin{definition}[Premodular category]\label{def:Premodular}
	A unitary braided fusion category can be equipped with a unique unitary ribbon structure~\cite{Galindo2014}. This is defined by a set of \define{twists}
	\begin{align}
		\begin{array}{c}
			\includeTikz{thetaop1}{
				\begin{tikzpicture}[scale=.5]
					\pgfmathsetmacro{\s}{sqrt(2)};
					\draw[draw=white,double=black,ultra thick](0,0)to[out=180,in=270](-.5,1);
					\draw[draw=white,double=black,ultra thick](-.5,-1)to[out=90,in=180](0,.5)to[out=0,in=0](0,0);
					\node[anchor=north,inner sep=.1] at (-.5,-1) {\strut$a$};
				\end{tikzpicture}
			}
		\end{array}
		 & =\theta_a
		\begin{array}{c}
			\includeTikz{thetaop2}{
				\begin{tikzpicture}[scale=.5]
					\pgfmathsetmacro{\s}{sqrt(2)};
					\draw[draw=white,double=black,ultra thick](-.5,-1)--(-.5,1);
					\node[anchor=north,inner sep=.1] at (-.5,-1) {\strut$a$};
				\end{tikzpicture}
			}
		\end{array},
	\end{align}
	obeying the ribbon equations
	\begin{align}
		\sum_\nu \left[R^c_{ba}\right]_{\mu,\nu}\left[R^c_{ab}\right]_{\nu,\rho} & =\frac{\theta_c}{\theta_a\theta_b}\indicator{\mu=\rho}.
	\end{align}
	When $\C$ is equipped with this structure, we call $\C$ premodular.

	In a premodular category, we can define the $\S$-matrix
	\begin{align}
		\S_{a,b}:=\frac{1}{\D}\tr B_{a,\dual{b}}
		 & =\frac{1}{\D}
		\begin{array}{c}
			\includeTikz{Smatrixdef}{
				\begin{tikzpicture}[scale=.75]
					\centerarc[draw=white,double=black,ultra thick](-.75,0)(0:180:1);
					\draw[draw=white,double=black,ultra thick] (.75,0) circle (1);
					\centerarc[draw=white,double=black,ultra thick](-.75,0)(180:360:1);% Syntax: [draw options] (center) (initial angle:final angle:radius)
					\node[anchor=west,inner sep=.5] at(1.75,0) {\strut$b$};
					\node[anchor=west,inner sep=.5] at(.25,0) {\strut$a$};
					\node[anchor=east,inner sep=.5] at(-.25,0) {\strut$\dual{b}$};
					\node[anchor=east,inner sep=.5] at(-1.75,0) {\strut$\dual{a}$};
				\end{tikzpicture}
			}
		\end{array}        \label{eqn:Smatrix}                                  \\
		 & =\frac{1}{\D}\sum_x N_{a,\dual{b}}^x \frac{\theta_x}{\theta_a\theta_{\dual{b}}}d_x,
	\end{align}
	where
	\begin{align}
		B_{a,b}:=
		\begin{array}{c}
			\includeTikz{RoperatorLHS}{
				\begin{tikzpicture}[scale=.75]
					\draw[draw=white,double=black,ultra thick] (.5,0) to [out=90,in=270] (-.5,1);
					\draw[draw=white,double=black,ultra thick] (-.5,0) to [out=90,in=270] (.5,1) to[out=90,in=270] (-.5,2);
					\draw[draw=white,double=black,ultra thick] (-.5,1) to [out=90,in=270] (.5,2);
					\node[anchor=north,inner sep=.1] at (-.5,0) {\strut$a$};\node[anchor=north,inner sep=.1] at (.5,0) {\strut$b$};
					\node[anchor=south,inner sep=.1] at (-.5,2) {\phantom{\strut$a$}};\node[anchor=south,inner sep=.1] at (.5,2) {\phantom{\strut$b$}};
				\end{tikzpicture}
			}
		\end{array}
		 & =
		\sum_{x,\mu}\sqrt{\frac{d_x}{d_ad_b}}\frac{\theta_x}{\theta_a\theta_b}
		\begin{array}{c}
			\includeTikz{RoperatorRHS}{
				\begin{tikzpicture}[scale=.75]
					\draw (-.5,0)--(0,.5) (.5,0)--(0,.5)--(0,1.5)--(-.5,2) (0,1.5)--(.5,2);
					\node[anchor=north,inner sep=.1] at (-.5,0) {\strut$a$};\node[anchor=north,inner sep=.1] at (.5,0) {\strut$b$};
					\node[anchor=south,inner sep=.1] at (-.5,2) {\strut$a$};\node[anchor=south,inner sep=.1] at (.5,2) {\strut$b$};
					\node[anchor=west,inner sep=.1] at (0,1) {$x$};
					\node[left] at (0,.5) {\strut$\mu$};
					\node[left] at (0,1.5) {\strut$\mu$};
				\end{tikzpicture}
			}
		\end{array}.
	\end{align}
\end{definition}

For the results in this manuscript, it is important to understand which strings can be `uncrossed'. This is captured by the M\"uger center.

\begin{definition}[M\"uger center~\cite{0804.3587}]\label{def:mugC}
	Let $\C$ be a unitary premodular category. The \define{symmetric} or \define{M\"uger} center of $\C$ is the full subcategory of $\C$ with objects
	\begin{align}
		\mug{\C}:=\{X\in\C|B_{X,Y}=\id_{X\otimes Y}\forall Y\in \C\}.
	\end{align}
	A premodular category is \define{symmetric} if $\mug{\C}=\C$, and \define{modular} if $\mug{\C}=\vvec{}$. We will refer to premodular categories which are neither symmetric nor modular as \define{properly premodular}.

	The $\S$-matrix acts as a witness for these properties. Symmetric categories have (matrix) rank 1 $\S$-matrices, while modular categories have invertible (unitary) $\S$-matrices.
\end{definition}

It will be convenient to define a slight generalization of the $\S$-matrix.
\begin{definition}[Connected $\S$-matrix]\label{def:commectedS}
	Recall that the trivalent vertices define a vector space, with Greek labels indicating basis vectors. We can therefore define the operator $\S_c$ by its action on the fusion space~\cite{kitaev2006anyons}
	\begin{align}
		\S_c
		\begin{array}{c}
			\includeTikz{Scop1}{
				\begin{tikzpicture}[scale=.5]
					\pgfmathsetmacro{\s}{sqrt(2)};
					\draw (0,-\s)--(0,0)--(-1,1) (0,0)--(1,1);
					\node[anchor=north,inner sep=.1] at (0,-\s) {\strut$c$};
					\node[anchor=south,inner sep=.1] at (-1,1) {\strut$b$};\node[anchor=south,inner sep=.1] at (1,1) {\strut$\dual{b}$};
					\node[anchor=north west,inner sep=.3] at(0,0) {\strut$\beta$};
				\end{tikzpicture}
			}
		\end{array}
		 & =\frac{\sqrt{d_c}}{\D}\sum_x d_x
		\begin{array}{c}
			\includeTikz{Scop2}{
				\begin{tikzpicture}[scale=.5]
					\pgfmathsetmacro{\s}{sqrt(2)};
					\draw[draw=white,double=black,ultra thick] (-.25,0) to [out=90,in=270] (-1,2);
					\draw[draw=white,double=black,ultra thick] (0,0) circle (1);
					\draw[draw=white,double=black,ultra thick] (-2,2) to [out=270,in=180] (-1,-2) to [out=0,in=270] (-.25,0);
					\draw (0,-1)--(0,-3);
					\node[anchor= north west,inner sep=.5] at(0,-1) {\strut$\beta$};
					\node[anchor=west,inner sep=.5] at(1,0) {\strut$\dual{b}$};
					\node[anchor=west,inner sep=1] at(-1,2) {\strut$x$};
					\node[anchor=east,inner sep=1] at(-2,2) {\strut$\dual{x}$};
					\node[anchor=north,inner sep=.5] at(0,-3) {\strut$c$};
				\end{tikzpicture}
			}
		\end{array}.\label{eqn:Scaction}
	\end{align}
	The matrix elements of this operator are
	\begin{align}
		\left[\S_c\right]_{(a,\alpha),(b,\beta)} & =\frac{1}{\D}
		\begin{array}{c}
			\includeTikz{ConnectedSmatrix}{
				\begin{tikzpicture}[scale=.65]
					\centerarc[draw=white,double=black,ultra thick](-.75,0)(0:180:1);
					\draw[draw=white,double=black,ultra thick] (.75,0) circle (1);
					\centerarc[draw=white,double=black,ultra thick](-.75,0)(180:360:1);
					\node[anchor=west,inner sep=.5] at(1.75,0) {\strut$b$};
					\node[anchor=west,inner sep=.5] at(.25,0) {\strut$a$};
					\node[anchor=east,inner sep=.5] at(-.25,0) {\strut$\dual{b}$};
					\node[anchor=east,inner sep=.5] at(-1.75,0) {\strut$\dual{a}$};
					\draw[thick] (.75,-1)--(.75,-1.2);\draw[thick] (-.75,1)--(-.75,1.2);
					\draw[thick] (.75,-1.2)to[out=270,in=270] (2.5,0)to[out=90, in=90] (-.75,1.2);
					\node[anchor=south,inner sep=.5] at(.75,-1) {\strut$\beta$};
					\node[anchor=north,inner sep=.75] at(-.75,1) {\strut$\alpha$};
					\node[anchor=east,inner sep=.5] at(-.75,1.25) {\strut$c$};
					\node[anchor=west,inner sep=.5] at(2.5,0) {\strut$\dual{c}$};
				\end{tikzpicture}
			}
		\end{array}.\label{eqn:connectedS}
	\end{align}
	The usual $\S$-matrix (defined in \cref{eqn:Smatrix}) occurs as a special case, namely $\S_{a,b}=\left[\S_1\right]_{a,b}$.
	The connected $\S$-matrix appears in Theorem~3.1.17 of \onlinecite{bakalov2001lectures}, and is closely related to the punctured $\S$-matrix of \onlinecite{Bonderson_2019}.
\end{definition}

The (connected) $\S$-matrix has a number of properties that we require.
\begin{lemma}\label{lem:productS}
	Let $\C$ be a unitary premodular category. The matrix elements of $\S$ obey
	\begin{align}
		\frac{\D}{d_c}\S_{a,c}\S_{b,c} & =\S_{a\otimes b,c}=\sum_x N_{a,b}^x \S_{x,c}.
	\end{align}
	\begin{proof}
		Provided in Appendix~\hyperref[lem:productS_pf]{\ref*{app:FC_pfs}}.
	\end{proof}
\end{lemma}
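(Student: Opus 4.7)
The plan is to treat the two equalities separately. The second, $\S_{a\otimes b, c} = \sum_x N_{ab}^x \S_{x, c}$, is immediate: the decomposition $a\otimes b \cong \bigoplus_x N_{ab}^x\,x$ makes $B_{a\otimes b, \bar c}$ block-diagonal with $B_{x, \bar c}$ appearing with multiplicity $N_{ab}^x$, so linearity of the trace and division by $\D$ yield the claim. Diagrammatically this is the $F$-move expansion of the $(a\otimes b)$-loop into a sum of simple $x$-loops.

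The first equality, $(\D/d_c)\S_{a,c}\S_{b,c} = \S_{a\otimes b, c}$, carries the real content, and I would prove it via partial traces. Since $\bar c$ is simple, Schur's lemma gives $\tr_a B_{a, \bar c} = \phi_a \id_{\bar c}$ for a scalar $\phi_a$, and closing the $\bar c$-strand yields $\D \S_{a,c} = d_c \phi_a$, so $\phi_a = \D \S_{a,c}/d_c$ (and analogously $\phi_b$). The target identity is then equivalent to the partial-trace statement
\[
\tr_{a\otimes b} B_{a\otimes b, \bar c} = \phi_a \phi_b \, \id_{\bar c},
\]
since a final trace over $\bar c$ gives $\D\S_{a\otimes b, c} = d_c \phi_a \phi_b = \D^2 \S_{a,c}\S_{b,c}/d_c$, which rearranges to the claim.

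For this partial-trace identity I would expand $B_{a\otimes b, \bar c} = \sigma_{\bar c, a\otimes b}\circ \sigma_{a\otimes b, \bar c}$ (writing $\sigma$ for the single braiding) using the two hexagon axioms of \cref{def:BFC}, obtaining
\[
B_{a\otimes b, \bar c} = (\id_a \otimes \sigma_{\bar c, b})(B_{a, \bar c} \otimes \id_b)(\id_a \otimes \sigma_{b, \bar c})
\]
after recognising $\sigma_{\bar c, a} \circ \sigma_{a, \bar c} = B_{a, \bar c}$. Only the middle factor depends on $a$, so partial-tracing over $a$ replaces it by $\phi_a \id_{\bar c \otimes b}$ while the two surrounding factors compose into $B_{b, \bar c}$, giving $\tr_a B_{a\otimes b, \bar c} = \phi_a B_{b, \bar c}$. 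Partial-tracing now over $b$ yields $\phi_a \phi_b \id_{\bar c}$, as required.

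The main obstacle is the hexagon decomposition above and the compatibility of partial traces with the surrounding single braidings. Geometrically the content is transparent: an $(a\otimes b)$-cable wound once around a $\bar c$-loop links $a$ and $b$ with $\bar c$ in exactly the same way as two successive windings would. Formalising it uses the hexagon axioms of \cref{def:BFC} together with the spherical structure of \cref{def:UFC}, which ensures partial traces interact correctly with the tensor and braided structures; the remainder is bookkeeping.
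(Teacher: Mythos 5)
Your proposal is correct, and while your treatment of the second equality matches the paper's (both resolve the $a\otimes b$ cable into simples via the completeness relation and use linearity of the trace), your proof of the first equality takes a genuinely different route. The paper works in the twist basis: it expands the double braiding between the $a$-ring and the $c$-ring using the ribbon expansion of $B_{a,b}$, contracts the resulting bubble, and recognizes the scalar $\frac{1}{\D d_c}\sum_x N_{a,\dual{c}}^x\frac{\theta_x}{\theta_a\theta_{\dual{c}}}d_x=\S_{a,c}/d_c$ from the $\theta$-formula for the $\S$-matrix given in \cref{def:Premodular}. You instead never touch the twists: you establish $\tr_a B_{a,\dual{c}}=\phi_a\,\id_{\dual{c}}$ with $\phi_a=\D\S_{a,c}/d_c$ by Schur's lemma plus closure of the loop --- which is precisely the paper's ``premodular trap'' (\cref{cor:premodulartrap}), here derived independently rather than as a consequence of the lemma, so there is no circularity --- and then unlink the cable via the hexagon decomposition $B_{a\otimes b,\dual{c}}=(\id_a\otimes\sigma_{\dual{c},b})(B_{a,\dual{c}}\otimes\id_b)(\id_a\otimes\sigma_{b,\dual{c}})$ together with the sliding property of the partial trace. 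Your check of the cabling identity is right ($\sigma_{a\otimes b,\dual{c}}=(\sigma_{a,\dual{c}}\otimes\id_b)(\id_a\otimes\sigma_{b,\dual{c}})$ and its mate compose to exactly your three-factor form), and the self-consistency of $\phi_a$ under full closure means no orientation-convention pitfalls. What each approach buys: yours is conceptually cleaner --- it exhibits $a\mapsto\D\S_{a,c}/d_c$ as a character of the fusion ring and, notably, uses only the braiding and spherical structure, so it holds in any braided spherical fusion category without a ribbon structure; the paper's computation is more explicit, reuses the already-derived twist formula for $\S$, and stays entirely within the diagrammatic calculus it sets up, at the cost of requiring the ribbon equations. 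The only gaps you leave are the associator bookkeeping in the cabling identity and the naturality of monodromy under the direct-sum decomposition in the second equality, both of which are standard, as you note.
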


\begin{lemma}\label{lem:sumS}
	Let $\C$ be a unitary premodular category, then
	\begin{align}
		\sum_b d_b \S_{a,b} & =\indicator{a\in\mug{\C}} d_a \D,
	\end{align}
	where $\mug{\C}$ is the M\"uger center.
	\begin{proof}
		Provided in Appendix~\hyperref[lem:sumS_pf]{\ref*{app:FC_pfs}}.
	\end{proof}
\end{lemma}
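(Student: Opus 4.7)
Let $t_a := \sum_b d_b\,\S_{a,b}$. The plan is to split into two cases via a dichotomy forced by \cref{lem:productS}, and verify the $a \in \mug{\C}$ branch by direct diagrammatic evaluation.

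First, I would dispose of the case $a \in \mug{\C}$ by direct computation. If $a \in \mug{\C}$ then $B_{a,\dual{b}} = \id_{a \otimes \dual{b}}$ for every $b$, so the Hopf link in \cref{eqn:Smatrix} splits into two unlinked loops and $\S_{a,b} = \frac{1}{\D}\tr B_{a,\dual b} = \frac{d_a d_{\dual b}}{\D} = \frac{d_a d_b}{\D}$. Consequently
\begin{align}
t_a \;=\; \frac{d_a}{\D}\sum_b d_b^2 \;=\; d_a\D,
\end{align}
using the definition $\D^2 = \sum_b d_b^2$.

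Next, I would derive the key constraint by taking a dimension-weighted sum of the product formula $\frac{\D}{d_c}\S_{a,c}\S_{b,c} = \sum_x N_{a,b}^x \S_{x,c}$ from \cref{lem:productS}. Summing over $a$ with weight $d_a$, the left hand side becomes $\frac{\D \S_{b,c}}{d_c}\,t_c$ (using the symmetry $\S_{a,c} = \S_{c,a}$ of the $\S$-matrix in a premodular category). On the right hand side, I would use the duality identity $N_{a,b}^x = N_{b,\dual{x}}^{\dual{a}}$ together with $d_a = d_{\dual a}$ and the defining relation $\sum_{c} d_c N_{b,\dual{x}}^{c} = d_b d_{\dual{x}} = d_b d_x$ to collapse $\sum_a d_a N_{a,b}^x = d_b d_x$. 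The right hand side becomes $d_b \sum_x d_x \S_{x,c} = d_b\, t_c$, yielding the identity
\begin{align}
\bigl(\D\,\S_{b,c} - d_b d_c\bigr)\, t_c \;=\; 0 \qquad\text{for all }b,c.
\end{align}

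Fixing $c$, this dichotomy gives either $t_c = 0$ or $\S_{b,c} = d_b d_c/\D$ for every $b$. In the latter case $\tr B_{b,\dual c} = d_b d_{\dual c}$ for every $b$; in a unitary premodular category this saturation of the quantum trace forces $B_{b,\dual c} = \id$ (the unitary $B_{b,\dual c}$ has eigenvalues of unit modulus given by ratios of twists $\theta_x/(\theta_b \theta_{\dual c})$, and their $d_x$-weighted sum equals the sum of weights only when every eigenvalue is $1$). Hence $\dual c \in \mug{\C}$, and closure of the M\"uger center under duals gives $c \in \mug{\C}$. Combining with the first step, $t_c = \indicator{c \in \mug{\C}} d_c \D$, as claimed. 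The main obstacle I anticipate is the unitarity step characterizing $\mug{\C}$ via saturation of the S-matrix row; the rest is a clean manipulation of \cref{lem:productS} plus the standard Frobenius-duality identity for fusion coefficients.
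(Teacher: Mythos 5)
Your proof is correct and takes essentially the same route as the paper's: the $a\in\mug{\C}$ case by direct evaluation of the unlinked Hopf link giving $\S_{a,b}=d_ad_b/\D$, and a dimension-weighted sum of \cref{lem:productS} (together with $\S$-matrix symmetry and Frobenius duality of the fusion coefficients) yielding the dichotomy $\left(\D\,\S_{b,c}-d_bd_c\right)t_c=0$ for all $b,c$. If anything, you are more careful than the paper at the final step: where the paper closes with the terse remark ``Unless $\C=\mug{\C}$ \ldots this implies the result,'' you explicitly justify why saturation of an $\S$-matrix row forces transparency — the ribbon equation makes the monodromy act on each fusion channel by the phase $\theta_x/(\theta_b\theta_{\dual{c}})$, so $\tr B_{b,\dual{c}}=d_bd_{\dual{c}}$ for all $b$ forces every phase to be $1$ and hence $c\in\mug{\C}$ — which is exactly the witness property the paper invokes implicitly.
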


\begin{corollary}[Premodular trap]\label{cor:premodulartrap}
	For any premodular theory, we have
	\begin{align}
		\begin{array}{c}
			\includeTikz{trapLHS}{
				\begin{tikzpicture}
					\draw[draw=white,double=black,ultra thick] (.25,-.75)--(.25,0);
					\draw[draw=white,double=black,ultra thick] (0,0) circle (.5);
					\draw[draw=white,double=black,ultra thick] (.25,0)--(.25,.75);
					\node[anchor=west] at (.5,0) {\strut$a$};
					\node[anchor=north,inner sep=.1] at (.25,-.75) {\strut$b$};
				\end{tikzpicture}
			}
		\end{array}
		 & =\frac{\S_{a,\dual{b}}}{\S_{1,\dual{b}}}
		\begin{array}{c}
			\includeTikz{trapRHS}{
				\begin{tikzpicture}
					\draw[draw=white,double=black,ultra thick] (.25,-.75)--(.25,0);
					\draw[draw=white,double=black,ultra thick] (.25,0)--(.25,.75);
					\node[anchor=north,inner sep=.1] at (.25,-.75) {\strut$b$};
				\end{tikzpicture}
			}
		\end{array}=\frac{\D \S_{a,\dual{b}}}{d_b}.
	\end{align}
	Using \cref{lem:productS,lem:sumS}, this gives
	\begin{align}
		\frac{1}{\D^2}\sum_a d_a
		\begin{array}{c}
			\includeTikz{trapLHS2}{
				\begin{tikzpicture}
					\draw[draw=white,double=black,ultra thick] (-.5,-.75)to[out=45,in=270](-.25,0) (.5,-.75)to[out=90+45,in=270](.25,0);
					\draw[draw=white,double=black,ultra thick] (0,0) circle (.5);
					\draw[draw=white,double=black,ultra thick] (-.25,0)to[out=90,in=270+45](-.5,.75) (.25,0)to[out=90,in=180+45](.5,.75);
					\node[anchor=west] at (.5,0) {\strut$a$};
					\node[anchor=north,inner sep=.1] at (-.5,-.75) {\strut$x$};\node[anchor=north,inner sep=.1] at (.5,-.75) {\strut$y$};
					\node[anchor=south,inner sep=.1] at (-.5,.75) {\phantom{\strut$x$}};\node[anchor=south,inner sep=.1] at (.5,.75) {\phantom{\strut$y$}};
				\end{tikzpicture}
			}
		\end{array}
		 & =\sum_{z\in\mug{\C},\mu} \sqrt{\frac{d_z}{d_xd_y}}
		\begin{array}{c}
			\includeTikz{trapRHS2}{
				\begin{tikzpicture}
					\draw (-.5,-.75)--(0,-.5)--(0,.5)--(-.5,.75) (.5,-.75)--(0,-.5)--(0,.5)--(.5,.75);
					\node[anchor=north,inner sep=.1] at (-.5,-.75) {\strut$x$};\node[anchor=north,inner sep=.1] at (.5,-.75) {\strut$y$};
					\node[anchor=south,inner sep=.1] at (-.5,.75) {\strut$x$};\node[anchor=south,inner sep=.1] at (.5,.75) {\strut$y$};
					\node[anchor=west,inner sep=.1] at(0,0) {\strut$z$};
					\node[anchor=south east,inner sep=.3] at(0,-.5) {\strut$\mu$};\node[anchor=north east,inner sep=.3] at(0,.5) {\strut$\mu$};
				\end{tikzpicture}
			}
		\end{array}.
	\end{align}
\end{corollary}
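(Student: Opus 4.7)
The plan is to prove the two displayed equations in turn, with the second reducing to the first by inserting a resolution of the identity on $x\otimes y$ and applying the lemmas on the $\S$-matrix.

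For the first equation, I would observe that the diagram on the left hand side is a morphism $b\to b$. Since $b$ labels a simple object, the endomorphism space $\C(b,b)$ is one-dimensional, so this morphism must be a scalar multiple of the identity strand. To pin down the scalar, I would close the $b$-strand (i.e.\ take the categorical trace) on both sides. Using sphericality, the closure of the left hand side is exactly the Hopf link that defines $\S_{a,\bar b}$ via \cref{eqn:Smatrix}, giving $\mathcal{D}\,\S_{a,\bar b}$, while the closure of a bare $b$-loop is $d_b$ by \cref{eqn:bubble}. Dividing, the scalar is $\mathcal{D}\S_{a,\bar b}/d_b$, and this coincides with $\S_{a,\bar b}/\S_{1,\bar b}$ since $\S_{1,\bar b}=d_b/\mathcal{D}$ (from either direct computation or \cref{lem:productS} at $a=1$).

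For the second equation, I would first use the identity-resolution move from \cref{def:UFC} to expand the two parallel strands passing through the $a$-loop as
\begin{align*}
\begin{array}{c}
\includeTikz{proptwostrand}{\begin{tikzpicture}[scale=.5]
\draw (0,-1)--(0,1); \draw (1,-1)--(1,1);
\node[anchor=north,inner sep=.1] at (0,-1) {\strut$x$};
\node[anchor=north,inner sep=.1] at (1,-1) {\strut$y$};
\end{tikzpicture}}
\end{array}
=\sum_{z,\mu}\sqrt{\frac{d_z}{d_xd_y}}\;
\begin{array}{c}
\includeTikz{proptwostrand2}{\begin{tikzpicture}[scale=.5]
\draw (0,-1)--(.5,-.5)--(.5,.5)--(0,1);
\draw (1,-1)--(.5,-.5)--(.5,.5)--(1,1);
\node[anchor=north,inner sep=.1] at (0,-1) {\strut$x$};
\node[anchor=north,inner sep=.1] at (1,-1) {\strut$y$};
\node[anchor=west,inner sep=.1] at(.5,0){\strut$z$};
\node[left] at(.5,-.5){\strut$\mu$};\node[left] at(.5,.5){\strut$\mu$};
\end{tikzpicture}}
\end{array}
\end{align*}
in a region inside the $a$-loop. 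This converts each $a$-loop configuration into an $a$-loop encircling a single $z$-strand (with a fusion vertex above and below the loop), to which the first equation applies, contributing a factor $\mathcal{D}\,\S_{a,\bar z}/d_z$.

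Carrying out the sum over $a$ with weight $d_a/\mathcal{D}^2$ then requires evaluating $\tfrac{1}{\mathcal{D}^2}\sum_a d_a\cdot\mathcal{D}\,\S_{a,\bar z}/d_z$. By \cref{lem:sumS} applied to $\bar z$ and using $d_{\bar z}=d_z$, this sum equals $\indicator{\bar z\in\mug{\C}}=\indicator{z\in\mug{\C}}$, restricting the remaining sum to the M\"uger center and leaving precisely the claimed expression $\sum_{z\in\mug{\C},\mu}\sqrt{d_z/(d_xd_y)}$ times the projector onto the fusion channel $z$. The only step that requires any care is verifying the orientation/duality conventions when closing the $b$-strand in the first equation, so that the resulting link is identified with $\S_{a,\bar b}$ rather than $\S_{a,b}$; everything else is a bookkeeping combination of the identity decomposition and the two $\S$-matrix lemmas.
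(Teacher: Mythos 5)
Your proposal is correct and takes essentially the same route the paper intends for this corollary: the first identity is the standard evaluation of an endomorphism of a simple object, with the scalar fixed by closing the strand into the Hopf link of \cref{eqn:Smatrix}, and the second follows exactly as you describe, by resolving the identity on $x\otimes y$ so a single $z$-strand threads the loop and then applying \cref{lem:productS,lem:sumS} to the weighted sum over $a$. The one detail worth making explicit is that \cref{lem:sumS} sums over the \emph{second} index of $\S$, so your evaluation of $\sum_a d_a\,\S_{a,\dual{z}}$ tacitly uses the symmetry $\S_{a,b}=\S_{b,a}$ of the premodular $\S$-matrix --- a standard fact, which the paper itself invokes elsewhere via the symmetries of $\S_1$ proven in \onlinecite{kitaev2006anyons}.
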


\begin{lemma}\label{lem:TrScSc}
	Let $\C$ be a unitary premodular category, then
	\begin{align}
		\sum_{c\in\C}\Tr\S_c^\dagger\S_c & =\D^2,
	\end{align}
	where $\S_c$ is the connected $\S$-matrix and $\D$ is the total dimension of $\C$.
	\begin{proof}
		Provided in Appendix~\hyperref[lem:TrScSc_pf]{\ref*{app:FC_pfs}}.
	\end{proof}
\end{lemma}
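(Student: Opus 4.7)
The plan is to interpret $\sum_c\Tr(\S_c^\dagger\S_c)$ as the evaluation of a single closed diagram in $\C$ and then reduce it using completeness of fusion-space bases together with \cref{lem:productS,lem:sumS}. The approach mirrors the style of the preceding lemma proofs: use the premodular trap to trivialise internal loops, then collapse dimension-weighted sums via the $\S$-matrix identities.

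First, I would expand the trace as
\begin{align*}
	\sum_c \Tr(\S_c^\dagger \S_c) = \sum_{c,a,\alpha,b,\beta}\bigl|[\S_c]_{(a,\alpha),(b,\beta)}\bigr|^2,
\end{align*}
where by \cref{def:commectedS} each matrix element equals $1/\D$ times the closed diagram of \cref{eqn:connectedS}. The squared modulus is the product of this diagram with its complex conjugate, which in a unitary spherical category is the diagram obtained by vertical reflection and inversion of all crossings. Viewed together, the two factors form a single scalar produced by evaluating a doubled closed diagram that contains two Hopf-linked pairs joined through two matched pairs of fusion vertices $(\alpha,\alpha^\dagger)$ and $(\beta,\beta^\dagger)$ on internal $c$-strands.

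Second, I would carry out the vertex sums. Summing over $\alpha\in\C(a\otimes\bar a,c)$ uses the identitynorm completeness relation of \cref{def:UFC} to collapse each paired vertex $(\alpha,\alpha^\dagger)$ to a $c$-projector on $a\otimes\bar a$, weighted by $\sqrt{d_a d_{\bar a}/d_c}$ from the normalization convention, and likewise for $\beta$. The resulting diagram is a closed configuration involving only $a$-, $b$-, and $c$-labelled loops in a specific linked pattern. Summing next over $c$ via $\sum_{c,\mu}\sqrt{d_c/(d_a d_b)}(\text{bubble})=\mathrm{id}_{a\otimes b}$ contracts the internal $c$-structure, leaving a sum over $a,b$ of a simple linked-loop diagram weighted by $d_a d_b$. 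Applying \cref{cor:premodulartrap} rewrites this as a Müger-center-supported combination, and \cref{lem:productS,lem:sumS} then collapse the remaining dimension-weighted sums to exactly $\D^2$, matching the $1/\D^2$ prefactor inherited from the definition of the $\S_c$-entries.

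The main obstacle is the careful diagrammatic bookkeeping in the second step. After merging the $\alpha$ and $\beta$ vertex pairs, the resulting diagram carries overlapping $c$-strands whose linking with the $a$- and $b$-loops is not immediately in a form compatible with the premodular trap; isotopy, spherical moves, and $F$-transformations are needed to bring it into such a form before the identities from the earlier lemmas can be applied. A more conceptual alternative is to recognise the rescaled entries $\{\D\cdot[\S_c]_{(a,\alpha),(b,\beta)}\}$ as the structure coefficients of an orthonormal basis (with respect to the Frobenius inner product) for a finite-dimensional space of total dimension $\D^2$—a natural candidate being a punctured-sphere or tube-algebra-type Hilbert space associated to $\C$—from which the identity follows as a Plancherel-type statement.
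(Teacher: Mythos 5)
Your main route is, in outline, the same as the paper's actual proof: the paper expands $\sum_c\Tr\S_c^\dagger\S_c$ over the diagonal entries, inserts the operator action \cref{eqn:Scaction} to obtain a closed chain-of-loops diagram carrying an internal sum $\sum_x d_x$, isotopes it into trap-compatible position using only sphericality and cyclicity of the trace, and then applies the premodular trap (\cref{cor:premodulartrap}). Your expansion via $\sum_{c,a,\alpha,b,\beta}\left|[\S_c]_{(a,\alpha),(b,\beta)}\right|^2$ together with completeness sums over $(c,\alpha,\beta)$ is an equivalent reorganization of the same trace and, executed carefully, lands on the same configuration. But the step you defer as ``the main obstacle'' is essentially the entire content of the proof, and your endgame is misstated in a substantive way: after the trap, \cref{lem:productS,lem:sumS} are \emph{not} applied again (they enter only through the derivation of \cref{cor:premodulartrap} itself). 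What actually closes the argument is a direct evaluation of the post-trap diagram: the Müger-center line $z$ ends on a tadpole-like closure and therefore vanishes unless $z=1$, yielding the factor $\indicator{z=1}$, and each closed $a$-loop then contributes $|\varkappa_a|^2 d_a^2=d_a^2$, where the Frobenius--Schur indicator $\varkappa_a$ appears because the evaluation requires bending the $a$-lines --- a pivotal-structure subtlety your plan never confronts. Summing over $a$ then gives $\D^2$. Without this $z=1$ mechanism, your ``Müger-center-supported combination'' has no visible reason to collapse, so as written the proposal stops one genuine idea short of a proof; your asserted intermediate weights (e.g.\ the $d_ad_b$-weighted linked-loop diagram) are also left unverified, though they are plausibly recoverable with careful bookkeeping.

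Your fallback ``Plancherel'' interpretation, however, is false as stated and should be dropped. Orthonormality (under any single global rescaling of the entries) would force the eigenvalues of $\S_c^\dagger\S_c$ to be a common constant, but the paper's own Case 1 computation in \cref{thm:WWentropyexamples} shows that for modular $\C$ the eigenvalues are the quantum dimensions $d_c$, which are generally irrational and non-constant. Concretely, an orthonormal-basis reading would make the (suitably rescaled) trace sum an integer counting fusion vertices, $\sum_{b,c}N_{b\dual{b}}^c$; for the Fibonacci category $\FR{2}{0}{2}{0}$ this count is $3$, while $\D^2=\tfrac{1}{2}\left(5+\sqrt{5}\right)$. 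The identity $\sum_c\Tr\S_c^\dagger\S_c=\D^2$ is genuinely a weighted spectral statement mixing non-unit eigenvalues --- which is precisely why the entropy formula in \cref{eqn:conjecture}, built from the individual $\lambda_c$, carries nontrivial information beyond this lemma.
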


\begin{definition}[Algebra object]\label{def:Alg}
	An algebra $(A,m,\eta)$ in a fusion category is an object $A=1\oplus a_1\oplus a_2\oplus \cdots$, along with morphisms $m:A\times A\to A$ and $\eta:1\to A$. For simplicity, we restrict $\C$ and $A$ to be multiplicity free, that is each simple object $a_i$ occurs at most once. We represent the multiplication morphism $m$ as
	\begin{align}
		m=
		\begin{array}{c}
			\includeTikz{Algm_1}{
				\begin{tikzpicture}[scale=.5]
					\pgfmathsetmacro{\s}{sqrt(2)}
					\draw (0,0)--(1,1) node[below,pos=0,inner sep=.1]{\strut$A$} (1,1)--(2,0)node[below,pos=1,inner sep=.1]{\strut$A$}  (1,1)--(1,1+\s)node[above,pos=1,inner sep=.1]{\strut$A$};
					\fill [shift={(1,1)}] (-.1,-.1) rectangle (.1,.1);
				\end{tikzpicture}
			}
		\end{array}
		 & =\sum_{a,b,c\in A}m_{ab}^c
		\begin{array}{c}
			\includeTikz{Algm_2}{
				\begin{tikzpicture}[scale=.5]
					\pgfmathsetmacro{\s}{sqrt(2)}
					\draw (0,0)--(1,1) node[below,pos=0,inner sep=.1]{\strut$a$} (1,1)--(2,0)node[below,pos=1,inner sep=.1]{\strut$b$} (1,1)--(1,1+\s)node[above,pos=1,inner sep=.1]{\strut$c$};
				\end{tikzpicture}
			}
		\end{array}.
	\end{align}
	To simplify the notation, we define $m_{xy}^z=0$ whenever any of the labels do not occur in the decomposition of $A$. This allows us to always sum over simple objects in $\C$. Additionally, we suppress the $A$ label. Any unlabeled lines carry an implicit $A$.
	The multiplication of the algebra should be associative (in $\C$)
	\begin{align}
		\begin{array}{c}
			\includeTikz{AAssociativeLHS}{
				\begin{tikzpicture}[scale=.5]
					\pgfmathsetmacro{\s}{sqrt(2)}
					\draw (0,0)--(1,1) (1,1)--(2,0) (1,1)--(2,2)--(2,2+\s) (2,2)--(4,0);
					\fill [shift={(1,1)}] (-.1,-.1) rectangle (.1,.1);
					\fill [shift={(2,2)}] (-.1,-.1) rectangle (.1,.1);
				\end{tikzpicture}
			}
		\end{array}
		 & =
		\begin{array}{c}
			\includeTikz{AAssociativeRHS}{
				\begin{tikzpicture}[scale=.5,xscale=-1]
					\pgfmathsetmacro{\s}{sqrt(2)}
					\draw (0,0)--(1,1) (1,1)--(2,0) (1,1)--(2,2)--(2,2+\s) (2,2)--(4,0);
					\fill [shift={(1,1)}] (-.1,-.1) rectangle (.1,.1);
					\fill [shift={(2,2)}] (-.1,-.1) rectangle (.1,.1);
				\end{tikzpicture}
			}
		\end{array},
	\end{align}
	or in components
	\begin{align}
		\sum_x m_{ab}^xm_{xc}^d
		\begin{array}{c}
			\includeTikz{AAssociativeLHS_2}{
				\begin{tikzpicture}[scale=.5]
					\pgfmathsetmacro{\s}{sqrt(2)}
					\draw (0,0)--(1,1)node[below,pos=0,inner sep=.1]{\strut$a$} (1,1)--(2,0)node[below,pos=1,inner sep=.1]{\strut$b$} (1,1)--(2,2)--(2,2+\s)node[above,pos=1,inner sep=.1]{\strut$d$} (2,2)--(4,0)node[below,pos=1,inner sep=.1]{\strut$c$};
					\node[above left,inner sep=.1] at(1.5,1.5) {\strut$x$};
				\end{tikzpicture}
			}
		\end{array}
		 & =
		\sum_y m_{ay}^dm_{bc}^y
		\begin{array}{c}
			\includeTikz{AAssociativeRHS_2}{
				\begin{tikzpicture}[scale=.5,xscale=-1]
					\pgfmathsetmacro{\s}{sqrt(2)}
					\draw (0,0)--(1,1)node[below,pos=0,inner sep=.1]{\strut$c$} (1,1)--(2,0)node[below,pos=1,inner sep=.1]{\strut$b$} (1,1)--(2,2)--(2,2+\s)node[above,pos=1,inner sep=.1]{\strut$d$} (2,2)--(4,0)node[below,pos=1,inner sep=.1]{\strut$a$};
					\node[above right,inner sep=.1] at(1.5,1.5) {\strut$y$};
				\end{tikzpicture}
			}
		\end{array}
		=
		\sum_{x,z} m_{ab}^xm_{xc}^d \bigg[F_{abc}^{d}\bigg]_{xz}
		\begin{array}{c}
			\includeTikz{AAssociativeLHS_3}{
				\begin{tikzpicture}[scale=.5,xscale=-1]
					\pgfmathsetmacro{\s}{sqrt(2)}
					\draw (0,0)--(1,1)node[below,pos=0,inner sep=.1]{\strut$c$} (1,1)--(2,0)node[below,pos=1,inner sep=.1]{\strut$b$} (1,1)--(2,2)--(2,2+\s)node[above,pos=1,inner sep=.1]{\strut$d$} (2,2)--(4,0)node[below,pos=1,inner sep=.1]{\strut$a$};
					\node[above right,inner sep=.1] at(1.5,1.5) {\strut$z$};
				\end{tikzpicture}
			}
		\end{array},
	\end{align}
	where the final equality is obtained by using the $F$-move on the left hand side. The components of $m$ therefore obey
	\begin{align}
		m_{ay}^{d}m_{bc}^y & =\sum_x m_{ab}^xm_{xc}^d \bigg[F_{abc}^{d}\bigg]_{xy}.\label{eqn:algCompatibility}
	\end{align}
	Multiplication by the unit obeys
	\begin{align}
		\begin{array}{c}
			\includeTikz{AUnitLHS}{
				\begin{tikzpicture}[scale=.5]
					\pgfmathsetmacro{\s}{sqrt(2)}
					\draw (0,0)--(1,1) (1,1)--(3,-1) (1,1)--(1,1+\s);
					\fill [shift={(1,1)}] (-.1,-.1) rectangle (.1,.1);
					\fill [shift={(0,0)}] (0,0) circle (.1);
				\end{tikzpicture}
			}
		\end{array}
		 & =
		\begin{array}{c}
			\includeTikz{AUnitMid}{
				\begin{tikzpicture}[scale=.5]
					\pgfmathsetmacro{\s}{sqrt(2)}
					\draw (0,-1)--(0,1+\s);
				\end{tikzpicture}
			}
		\end{array}
		=
		\begin{array}{c}
			\includeTikz{AUnitRHS}{
				\begin{tikzpicture}[scale=.5,xscale=-1]
					\pgfmathsetmacro{\s}{sqrt(2)}
					\draw (0,0)--(1,1) (1,1)--(3,-1) (1,1)--(1,1+\s);
					\fill [shift={(1,1)}] (-.1,-.1) rectangle (.1,.1);
					\fill [shift={(0,0)}] (0,0) circle (.1);
				\end{tikzpicture}
			}
		\end{array},
	\end{align}
	or in components
	\begin{align}
		\eta m_{1x}^{x} & =1=\eta m_{x1}^x.
	\end{align}
	A \define{coalgebra} is the same, with everything flipped upside down.

	Two algebras $(A,m,\eta)$ and $(A,n,\theta)$ are isomorphic if they can be related by
	\begin{align}
		m_{ab}^c & =n_{ab}^c\frac{\beta_c}{\beta_a\beta_b}, \\
		\eta     & =\frac{1}{\beta_1}\theta,
	\end{align}
	where $\beta_a$ are nonzero complex numbers. We can (and will) always use this to normalize $\eta=1$. When it does not cause confusion, we will indicate an algebra by its object, for example $A=1$, the `unit algebra'.
\end{definition}

\begin{definition}[Frobenius algebra]\label{def:FrobAlg}
	A Frobenius algebra is a quintuple $(A,m,\eta,\mu,\epsilon)$, where $(A,m,\eta)$ is an algebra, $(A,\mu,\epsilon)$ is a coalgebra. The algebra and coalgebra maps obey
	\begin{align}
		\begin{array}{c}
			\includeTikz{Frob_1}{
				\begin{tikzpicture}[scale=.5]
					\pgfmathsetmacro{\s}{sqrt(2)}
					\clip (-2,-1-\s/2) rectangle (2,1+\s/2);
					\begin{scope}[shift={(0,-\s/2)}]
						\draw (0,0)--(-1,-1)  (0,0)--(1,-1) (0,0)--(0,\s);
						\fill [] (-.1,-.1) rectangle (.1,.1);
					\end{scope}
					\begin{scope}[shift={(0,\s/2)},yscale=-1]
						\draw (0,0)--(-1,-1)  (0,0)--(1,-1) (0,0)--(0,\s);
						\fill [] (-.1,-.1) rectangle (.1,.1);
					\end{scope}
				\end{tikzpicture}
			}
		\end{array}
		 & =
		\begin{array}{c}
			\includeTikz{Frob_2}{
				\begin{tikzpicture}[scale=.5]
					\pgfmathsetmacro{\s}{sqrt(2)}
					\clip (-2,-1-\s/2) rectangle (2,1+\s/2);
					\begin{scope}[shift={(-.5,.5)}]
						\draw (0,0)--(-1,-1)--(-1,-2.5)  (0,0)--(1,-1) (0,0)--(0,\s);
						\fill [] (-.1,-.1) rectangle (.1,.1);
					\end{scope}
					\begin{scope}[shift={(.5,-.5)},yscale=-1]
						\draw (0,0)--(-1,-1)  (0,0)--(1,-1)--(1,-2.5) (0,0)--(0,\s);
						\fill [] (-.1,-.1) rectangle (.1,.1);
					\end{scope}
				\end{tikzpicture}
			}
		\end{array}
		=
		\begin{array}{c}
			\includeTikz{Frob_3}{
				\begin{tikzpicture}[scale=.5,xscale=-1]
					\pgfmathsetmacro{\s}{sqrt(2)}
					\clip (-2,-1-\s/2) rectangle (2,1+\s/2);
					\begin{scope}[shift={(-.5,.5)}]
						\draw (0,0)--(-1,-1)--(-1,-2.5)  (0,0)--(1,-1) (0,0)--(0,\s);
						\fill [] (-.1,-.1) rectangle (.1,.1);
					\end{scope}
					\begin{scope}[shift={(.5,-.5)},yscale=-1]
						\draw (0,0)--(-1,-1)  (0,0)--(1,-1)--(1,-2.5) (0,0)--(0,\s);
						\fill [] (-.1,-.1) rectangle (.1,.1);
					\end{scope}
				\end{tikzpicture}
			}
		\end{array}.
	\end{align}
	A Frobenius algebra is called \define{unitary}, or a Q-system, when the coalgebra structure is the adjoint of the algebra structure, or in components $\mu^{ab}_c=\left(m_{ab}^c\right)^*$ and $\epsilon=\eta^*$. Unitarity is assumed in all mentions of Frobenius algebras in this manuscript.

\end{definition}

A Frobenius algebra is said to be \define{strongly separable} if
\begin{align}
	\begin{array}{c}
		\includeTikz{A_strsep_1}{
			\begin{tikzpicture}[scale=.5]
				\draw (0,-1)--(0,1);
				\fill [shift={(0,-1)}] (0,0) circle (.1);
				\fill [shift={(0,1)}] (0,0) circle (.1);
			\end{tikzpicture}
		}
	\end{array} & =\alpha_A
	\\
	\begin{array}{c}
		\includeTikz{AspecialLHS}{
			\begin{tikzpicture}[scale=.5]
				\pgfmathsetmacro{\s}{sqrt(2)}
				\begin{scope}[yscale=-1]
					\draw (1,1)--(0,0) (1,1)--(1,1+\s) (1,1)--(2,0);
					\fill [shift={(1,1)}] (-.1,-.1) rectangle (.1,.1);
				\end{scope}
				\draw (1,1)--(0,0) (1,1)--(1,1+\s) (1,1)--(2,0);
				\fill [shift={(1,1)}] (-.1,-.1) rectangle (.1,.1);
			\end{tikzpicture}
		}
	\end{array} & =	\beta_A
	\begin{array}{c}
		\includeTikz{AspecialRHS}{
			\begin{tikzpicture}[scale=.5]
				\pgfmathsetmacro{\s}{sqrt(2)}
				\draw (0,-1-\s)--(0,1+\s);
			\end{tikzpicture}
		}
	\end{array},\label{eqn:algnorm}
\end{align}
where $\alpha_A\beta_A\neq 0$. We normalize $\alpha_A=1$ and $\beta_A=d_A$, where $d_A=\sum_{a\in A}d_a$.

If the underlying category $\C$ is braided, we say the algebra $A$ is \define{commutative} if
\begin{align}
	\begin{array}{c}
		\includeTikz{AcommLHS}{
			\begin{tikzpicture}[scale=.35]
				\pgfmathsetmacro{\s}{sqrt(2)}
				\draw[draw=white,double=black,ultra thick] (2,-2) to [out=135,in=225] (0,0)--(1,1)--(2,0)  to [out=315,in=45] (0,-2);
				\draw[draw=white,double=black,ultra thick](2,0)  to [out=315,in=45] (0,-2);
				\draw [thick](1,1)--(1,1+\s);
				\fill [shift={(1,1)}] (-.1,-.1) rectangle (.1,.1);
			\end{tikzpicture}
		}
	\end{array}
	 & =
	\begin{array}{c}
		\includeTikz{AcommMid}{
			\begin{tikzpicture}[scale=.35]
				\pgfmathsetmacro{\s}{sqrt(2)}
				\draw	[thick](1,1)--(1,1+\s);
				\draw	[thick] (1,1)--(0,-2);
				\draw	[thick] (1,1)--(2,-2);
				\fill [shift={(1,1)}] (-.1,-.1) rectangle (.1,.1);
			\end{tikzpicture}
		}
	\end{array}
	=
	\begin{array}{c}
		\includeTikz{AcommRHS}{
			\begin{tikzpicture}[scale=.35,xscale=-1]
				\pgfmathsetmacro{\s}{sqrt(2)}
				\draw[draw=white,double=black,ultra thick] (2,-2) to [out=135,in=225] (0,0)--(1,1)--(2,0)  to [out=315,in=45] (0,-2);
				\draw[draw=white,double=black,ultra thick](2,0)  to [out=315,in=45] (0,-2);
				\draw [thick](1,1)--(1,1+\s);
				\fill [shift={(1,1)}] (-.1,-.1) rectangle (.1,.1);
			\end{tikzpicture}
		}
	\end{array},\label{eqn:commutativealg}
\end{align}
or in components
\begin{align}
	R_{ab}^cm_{ba}^c=m_{ab}^c=m_{ba}^c\left(R_{ba}^c\right)^*.
\end{align}

\subsection{Examples}\label{sec:examples}

To aid understanding, and illustrate our results, we will refer to the following examples throughout the remainder of the manuscript.

The unitary fusion category $\vvectwist{\ZZ{2}}{\omega}$ is the category of finite dimensional $\ZZ{2}$-graded vector spaces. The simple objects are labeled by the group elements $\ZZ{2}:=\set{1,x}{x^2=1}$. Since we neglect to draw the unit object, corresponding to the identity group element, the only nonzero trivalent vertex is
\begin{align}
	\begin{array}{c}
		\includeTikz{trivalentVecZ2_1}{
			\begin{tikzpicture}[scale=.25]
				\draw[thick] (0,0)--(1,1) (1,1)--(2,0);
			\end{tikzpicture}
		}
	\end{array}.
\end{align}
These are pointed categories, and so $d_x=1$. The total quantum dimension is $\D^2=2$.

It is straightforward to check that there are exactly two inequivalent associators compatible with this fusion rule, namely
\begin{align}
	\begin{array}{c}
		\includeTikz{FVecZ2_LHS}{
			\begin{tikzpicture}[scale=.3,yscale=-1];
				\draw[thick] (0,0)--(0,1) (-1,2)--(-2,3) (-1,2)--(0,3) (0,1)--(2,3);
			\end{tikzpicture}
		}
	\end{array}
	 & =
	\omega
	\begin{array}{c}
		\includeTikz{FVecZ2_RHS}{
			\begin{tikzpicture}[scale=.3,xscale=-1,yscale=-1];
				\draw[thick] (0,0)--(0,1) (-1,2)--(-2,3) (-1,2)--(0,3) (0,1)--(2,3);
			\end{tikzpicture}
		}
	\end{array},
\end{align}
with $\omega=\pm 1$. With the associator fixed, there are two compatible braidings
\begin{align}
	\begin{array}{c}
		\includeTikz{braidVecZ2_LHS}{
			\begin{tikzpicture}[scale=.3]
				\draw[draw=white,double=black,ultra thick] (2,-2) to [out=135,in=225] (0,0)--(1,1)--(2,0)  to [out=315,in=45] (0,-2);
				\draw[draw=white,double=black,ultra thick](2,0)  to [out=315,in=45] (0,-2);
			\end{tikzpicture}
		}
	\end{array}
	 & =\phi
	\begin{array}{c}
		\includeTikz{braidVecZ2_RHS}{
			\begin{tikzpicture}[scale=.3]
				\draw	[thick] (1,1)--(0,-2);
				\draw	[thick] (1,1)--(2,-2);
			\end{tikzpicture}
		}
	\end{array},
\end{align}
with $\phi^2 = \omega$. The twists and $\S$-matrices of these models are given by
\begin{align}
	\theta_x & =\phi &  &  & \S & =\frac{1}{\sqrt{2}}\begin{pmatrix}
		1 & 1 \\1&\omega
	\end{pmatrix},
\end{align}
so the categories are modular when $\omega=-1$, and symmetric when $\omega=+1$. We denote by $\vvectwist{\ZZ{2}}{\omega,\phi}$ the braided category, with associator $\omega$ and braiding $\phi$.

These four examples are included in the attached Mathematica file~\cite{premoddata} as $\vvectwist{\ZZ{2}}{1,1}=\FR{2}{0}{1}{0}$, $\vvectwist{\ZZ{2}}{1,-1}=\FR{2}{0}{1}{2}$, $\vvectwist{\ZZ{2}}{-1,i}=\FR{2}{0}{1}{1}$, and $\vvectwist{\ZZ{2}}{-1,-i}=\FR{2}{0}{1}{3}$.

For these examples, there are two possible algebras, namely $A_0:=1$, with trivial $m$ morphism, and $A_1:=1\oplus x$. The algebra $A_0$ is compatible as a commutative algebra. For $A_1$, compatibility as a Frobenius algebra (\cref{eqn:algCompatibility}) reduces to
\begin{align}
	m_{x,x}^{1}\omega & = m_{x,x}^{1},
\end{align}
so is only a valid algebra object when $\omega = 1$. In that case, $A_1$ is commutative when
\begin{align}
	m_{x,x}^1 \phi = m_{x,x}^{1} \iff \phi = 1.
\end{align}

% !TeX spellcheck = en_US
% This line sets the project root file.
% !TEX root = ../WW_EntanglementEntropy.tex
%
\section{Loop-gas models in (2+1) and (3+1) dimensions}\label{sec:models}

In this work, we study loop-gas models. In their most general form, these models have ground states described by superpositions of string diagrams subject to a collection of rules, for example a diagram may be declared invalid in a ground state superposition if it contains an `open string'. We focus on topological loop-gas models. In this case, the rules are a collection of local manipulations or moves under which states must be invariant. These are designed to ensure invariance under diffeomorphism. Given a triangulation of the manifold, which provides a lattice structure on which a condensed matter model can be defined~\footnote{The lattice is the dual of the triangulation.}, the local moves ensure retriangulation invariance.

Levin-Wen~\cite{wen2004quantum,levin2005string} and Walker-Wang~\cite{Walker2012TQFT,williamson2017hamiltonian,crane1993categorical,crane1997state} models are, respectively, two- and three-dimensional Hamiltonian models that give rise to topological loop-gas states as their ground states.
Hamiltonians for these models are given in \onlinecite{levin2005string} and \onlinecite{Walker2012TQFT} for Levin-Wen and Walker-Wang models respectively. Our results do not depend on the particular form of the Hamiltonian, rather on universal properties of the ground states in the associated topological phase.

Quasiparticle excitations in these models are defects in the ground state, corresponding to a local change in the rules. Far from the excitations, the excited state remains invariant under the original moves but, for example, a string may be allowed to terminate at the location of the excitation.

\subsection{Bulk}

We now briefly introduce the categorical description of the models of interest far from any physical boundary. We begin by discussing the bulk for the the (2+1)-dimensional case, namely, the Levin-Wen models before proceeding to discuss the bulk of the (3+1)-dimensional Walker Wang models.

\subsubsection*{Levin-Wen models}

Given a unitary spherical fusion category $\C$ (\cref{def:UFC}), and a given lattice embedded on a two-dimensional manifold, the ground states of Levin-Wen models are superpositions of closed diagrams from the category. Strings lie along the edges of the lattice, and closed means they cannot terminate.
To produce a lattice model, $\rk{\C}$-dimensional vector spaces are assigned to each edge. These vector spaces are equipped with an orthogonal basis consisting of the objects of $\C$.  If the category is multiplicity free, vector spaces corresponding to the fusion spaces are assigned to the vertices~\protect\footnote{The global vector space can be made a tensor product space by, for example, choosing the dimension of the vertex space to be the largest of all the fusion spaces.}. An abstract string configuration is realized by the vector consisting of the appropriate basis vector on each edge.
At the vertices, the fusion rules of the category dictate which strings can fuse.
If a given configuration occurs in a particular ground state, then any other configuration that is obtainable by local moves (i.e. $F$- or loop-moves) also occurs in that ground state. The relative coefficients are dictated by the $F$-symbols, and consistency is ensured by the pentagon equation. Since the allowed moves are all local, there may be multiple ground states on manifolds with nontrivial genus. For example, a loop enclosing a cycle of the torus cannot be removed with the local loop move.

The collection of excitations (anyons) resulting from the Levin-Wen construction is called the Drinfeld center, denoted $\drinfeld{\C}$. We refer to \onlinecite{MR3242743} for a formal definition.

\subsubsection*{Walker-Wang models}

In (3+1)-dimensions, for Walker-Wang models, the diagrams can also include crossings. This required additional data to be added to the category, in particular a braiding (\cref{def:BFC}). If we picture these diagrams embedded in 3 dimensional space, there is an ambiguity involved in these crossing. For example, if we look at a crossing from `the side', there is no crossing. This ambiguity can be resolved by widening the strings into ribbons. This is implemented by insisting that the braided category is premodular (\cref{def:Premodular}).

Given a premodular category $\C$, and a lattice embedded in a 3-dimensional manifold, a Walker-Wang model is defined in essentially the same way as a Levin-Wen model. In addition to the $F$- and loop- moves, $R$- moves and insertion of links (or knots), such as the $\S$-matrix are allowed. Again, given any closed string configuration, any other configuration that can be reached via these rules is included in the ground state superposition.

Within the collection of string types, the subset that can be `unlinked' from any other string is called the M\"uger center of $\C$, denoted $\mug{\C}$ (\cref{def:mugC}). The M\"uger center labels the particle excitations of the Walker-Wang model~\cite{ZWang}.

\subsection{Boundaries}

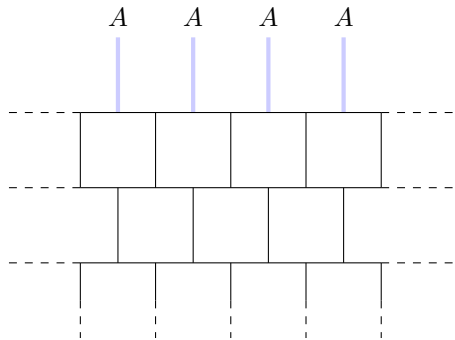
\begin{figure}
	\includeTikz{2DAlgebraLattice}{
		\begin{tikzpicture}
			\foreach \x in {-2,...,1} \draw[ultra thick,draw=blue!20] (\x+1/2,1)--(\x+1/2,0) node [pos=0,above,inner sep=.1] {\strut$A$};
			\begin{scope}[shift={(0,0)}]
				\draw (-2,0)--(2,0);\draw[dashed] (2,0)--(3,0);\draw[dashed] (-2,0)--(-3,0);
				\foreach \x in {-2,...,2} \draw (\x,0)--(\x,-1);
			\end{scope}
			\begin{scope}[shift={(0,-1)}]
				\draw (-2,0)--(2,0);\draw[dashed] (2,0)--(3,0);\draw[dashed] (-2,0)--(-3,0);
				\foreach \x in {-2,...,1} \draw (\x+1/2,0)--(\x+1/2,-1);
			\end{scope}
			\begin{scope}[shift={(0,-2)}]
				\draw (-2,0)--(2,0);\draw[dashed] (2,0)--(3,0);\draw[dashed] (-2,0)--(-3,0);
				\foreach \x in {-2,...,2} {\draw (\x,0)--(\x,-.5);\draw[dashed] (\x,-.5)--(\x,-1);};
			\end{scope}
		\end{tikzpicture}
	}
	\caption{An algebra specifies a boundary for a Levin-Wen model on a `comb lattice'. Dashed lines indicate the lattice continues. The top, thick blue lines are labeled by an algebra $A$ that defines a physical boundary to the lattice.}\label{fig:Alattice2D}
\end{figure}

To include a physical boundary in a loop-gas model, the rules must be modified. For the topological loop-gas models, these rules are again defined by local moves in the vicinity of the boundary. These must be compatible with the bulk moves, and ensure topological/retriangulation invariance at the boundary. We restrict our attention to gapped boundaries.

\subsubsection*{Levin-Wen models}

There are various equivalent classifications for the gapped boundaries of Levin-Wen models~\cite{Fuchs2002,kitaev2012models,Fuchs2013,Fuchs2015,1706.03329,1706.00650}. In this work, we use an internal classification. In this framework, gapped boundary conditions for Levin-Wen models are labeled by \define{indecomposable strongly separable, Frobenius algebra objects} (\cref{def:FrobAlg}) in $\mathcal{C}$ up to Morita equivalence~\cite{1706.03329,1706.00650}. We restrict to multiplicity free algebras for simplicity. These algebra objects are (not necessarily simple) objects in $\C$, and their simple subjects are roughly the string types that are allowed to terminate on the boundary.

On the comb lattice (\cref{fig:Alattice2D}), for example, the dangling edges only take values in the chosen algebra. Far from the boundary, the ground states look just like those with no boundary. Near the boundary, loops are no longer required to be closed, rather they can terminate on the boundary if their label occurs within the algebra. We refer to \onlinecites{1706.00650,1706.03329} for more details, including an explicit Hamiltonian.

\subsubsection*{Walker-Wang models}

Just as in the bulk, when me move to (3+1)-dimensions, the braiding must be taken into account. A general classification of gapped boundaries for Walker-Wang models has not been established, so we proceed for a class of boundaries generalizing those for Levin-Wen introduced above. As before, a boundary is labeled by an algebra object. Since the bulk is braided, an additional compatibility condition is required, namely that the algebra is commutative (\cref{eqn:commutativealg}).
Finally, in this work, an \define{indecomposable, strongly separable, commutative, Frobenius algebra object} labels a gapped boundary condition of a Walker-Wang model~\footnote{Private communications with David Aasen}.

\subsection{Examples}\label{sec:examples_phys}

Recall the examples from \cref{sec:examples}. In (2+1)-dimensions, $\vvectwist{\ZZ{2}}{1,\pm 1}$ lead to the same loop-gas model, since the Levin-Wen construction doesn't make use of the braiding. This model is the equally weighted superposition of all loop diagrams (with no branching due to the fusion rules). This is the ground state of the toric code model~\cite{kitaev1997fault}.

Likewise, $\vvectwist{-1}{\pm i}$ correspond to the same loop-gas model. Due to the nontrivial Frobenius-Schur indicator~\cite{kitaev2006anyons}, it is convenient to associate $-1$ to a loop rather than $+1$ (otherwise we can take extra care when bending lines). The ground state is therefore a superposition of loops, but weighted by $(-1)^{\text{number of loops}}$. This is commonly called the double-semion model.

There are two possible (gapped) boundaries for the toric code, the `smooth' boundary, corresponding to the algebra $A_0$, and the `rough' boundary, corresponding to $A_1$. We refer to \onlinecite{brayvi1998quantum} for more details. The double-semion model only allows for one kind of (gapped) boundary, labeled by $A_0$.

In (3+1)-dimensions, each of these models labels a distinct Walker-Wang model. The models $\vvectwist{\ZZ{2}}{1,1},\,\vvectwist{\ZZ{2}}{1,-1}$ are commonly called the \define{bosonic-} and \define{fermionic-} toric code models respectively~\cite{hamma2005string,von2015walker}. Since these categories both have $\mug{\C}=\{1,x\}$, they both have a single particle excitation, in addition to the trivial excitation, whose self-statistics lead to the names of the models. The two models $\vvectwist{-1}{\pm i}$ are both referred to as semion models.

In (3+1)-dimensions, the bosonic toric code still has two kinds of boundaries, but the remaining models are only compatible with the trivial boundary labeled by $A_0$.
% !TeX spellcheck = en_US
% This line sets the project root file.
% !TEX root = ../WW_EntanglementEntropy.tex
%

\section{Entropy diagnostics}\label{sec:entropydiagnostics}

In what follows we describe the universal correction to the area law that we expect for topological phases.
We then define two diagnostics that can be used to probe the properties of the excitations at the boundary of (3+1)-dimensional topological phases.

\subsection{The universal correction to the area law}

The ground states of topological phases of matter demonstrate robust long-range entanglement that is not present in trivial phases~\cite{hamma2005bipartite, kitaev2006topological, levin2006detecting}.
Typically, we expect the entanglement entropy shared between a subsystem of a ground state of a gapped phase with the rest of the system to respect an area law, i.e., the entanglement will scale with the size of the surface area of the subsystem.
The long-range entanglement manifests as a universal correction to the area law.
More precisely, we expect that if we partition the ground state of a system into two subsystems, $R$ and its complement $\comp{R}$, the entanglement entropy, $S_R$, will satisfy
\begin{equation}
	S_R=\alpha|\partial R|-b_R \LWTEE.
	\label{eqn:arealaw}
\end{equation}
Here $\alpha$ is a non-universal coefficient that depends on the microscopic details of the system, $|\partial R|$ is the surface area of the interface between the partitioned regions, $b_R$ is the number of disjoint components of the interface between $R$ and $\comp{R}$, and $\LWTEE$ is a universal constant commonly known as the topological entanglement entropy. We have assumed that $R$ is large compared to the correlation length of the system, and its shape has no irregular features.

\subsection{(2+1)-dimensional models}

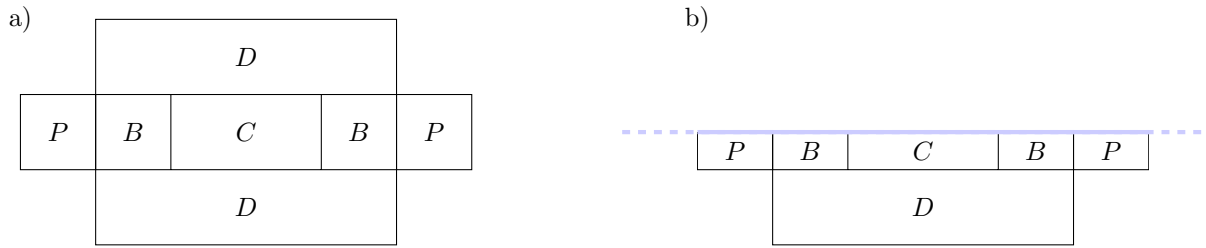
\begin{figure}
	\includeTikz{TwoDimEntropyRegions}{
		\begin{tikzpicture}
			\begin{scope}
				\node at (-3,1.5) {a)};
				\draw (-3,-.5) rectangle (-2,.5);\draw (3,-.5) rectangle (2,.5);
				\draw (-2,-1.5) rectangle (2,1.5);
				\draw (-2,-.5)--(2,-.5) (-2,.5)--(2,.5) (-1,-.5)--(-1,.5)  (1,-.5)--(1,.5);
				\node at (0,0) {$C$};\node at (-1.5,0) {$B$};\node at (1.5,0) {$B$};
				\node at (0,1) {$D$};\node at (0,-1) {$D$};
				\node at (-2.5,0) {$P$};\node at (2.5,0) {$P$};
			\end{scope}
			\begin{scope}[shift = {(9,0)}]
				\node at (-3,1.5) {b)};
				\begin{scope}
					\clip (-3,-2) rectangle (3,00);
					\draw (-3,-.5) rectangle (-2,.5);\draw (3,-.5) rectangle (2,.5);
					\draw (-2,-1.5) rectangle (2,1.5);
					\draw (-2,-.5)--(2,-.5) (-2,.5)--(2,.5) (-1,-.5)--(-1,.5)  (1,-.5)--(1,.5);
					\node at (0,-.25) {$C$};\node at (-1.5,-.25) {$B$};\node at (1.5,-.25) {$B$};
					\node at (0,1) {$D$};\node at (0,-1) {$D$};
					\node at (-2.5,-.25) {$P$};\node at (2.5,-.25) {$P$};
				\end{scope}
				\draw[blue!20,ultra thick](-3,0)--(3,0);
				\draw[blue!20,ultra thick,dashed](-4,0)--(4,0);
			\end{scope}
		\end{tikzpicture}}
	{\phantomsubcaption\label{fig:LWregionsblk}}
	{\phantomsubcaption\label{fig:LWregionsbnd}}
	\caption{Example of subsystems that can be used to find topological entropies in (2+1)-dimensions. The region $A$ is the complement of $BCD$.
		The regions \subref*{fig:LWregionsblk}) are used to find the bulk entropy $\LWTEE$, and the regions \subref*{fig:LWregionsbnd}) are used for the boundary entropy $\LWbndTEE$.
	}\label{fig:LWregions}
\end{figure}

Intimately connected to the long-range entanglement of a topological phase are the properties of its low-energy excitations.
A large class of topological models in (2+1)-dimensions are the Levin-Wen string-net models~\cite{levin2005string}. These models support topological point-like excitations that can be braided to change the state of the system.

Throughout this work we will be interested in the boundaries of topological phases. Importantly, topological particles can behave differently in the vicinity of the boundary of a phase. For instance, topological particles found in the bulk may become trivial particles close to certain boundaries. This is because topological particles can condense at the boundary such that non-trivial charges can be created locally.

As the physics of the quasi-particles of a topological phase can change close to its boundary, so to do we expect that the nature of its long-range entanglement to change. In \onlinecite{kim2015ground}, several topological entanglement entropy diagnostics were found to probe long-range entanglement of a model, both in the bulk and near to a boundary.
The first is the bulk topological entanglement entropy
\begin{align}
	\LWTEE:=S_{BC}+S_{CD}-S_B-S_D,\label{eqn:LWent}
\end{align}
where the regions are depicted in \cref{fig:LWregionsblk}, and $XY:=X\cup Y$. If $\LWTEE=0$, all point-like excitations can be created on the distinct parts of $P$ with a creation operator that has no support on $ACD$, where $A$ is the region that is complement to those shown in the figure. In this case, we declare them trivial. Conversely, if there are non-trivial topological excitations, for example created with string-like operators, $\LWTEE$ is necessarily non-zero.

In the presence of a gapped boundary, the excitations may differ. If a bulk topological excitation can be discarded or `condensed' on the boundary, it is possible to locally create such an excitation near the boundary. This is detected using the diagnostic
\begin{align}
	\LWbndTEE:=S_{BC}+S_{CD}-S_B-S_D,\label{eqn:LWbndent}
\end{align}
where the regions are depicted in \cref{fig:LWregionsbnd}. If $\LWbndTEE=0$, all point-like excitations on $P$ can be created with an operator that has no support on $ACD$, while non-trivial excitations require non-zero $\LWbndTEE$.

\subsection{\protect(3+1)-dimensional models}
Walker-Wang models give rise to both point- and line-like topological particles in the bulk, in addition to boundary excitations. Unlike Levin-Wen models, in some instances topological particles are only found at the boundary.

Since there are two kinds of topological excitations in (3+1)-dimensions, we might expect that there are two bulk diagnostics generalizing $\LWTEE$. However, as it has been shown~\cite{grover2011entanglement,bullivant2016entropic}, these coincide. We define the bulk topological entanglement entropy
\begin{align}
	\WWTEE:=S_{BC}+S_{CD}-S_B-S_D,\label{eqn:WWent}
\end{align}
where the regions are depicted in \cref{fig:WWregionsblk}. We obtained this choice of region following intuition given in Ref.~\cite{kim2015ground} where we consider the creating point excitations at the distinct parts of region $P$ using a string operator supported on $ACD$. We find $\WWTEE$ is zero only if all the excitations can be created using an operator with local support. The boundary diagnostics that we describe next are obtained by bisecting the regions shown in \cref{fig:WWregionsblk} along different planes where the boundary lies.

\begin{figure}
	\centering
	\includeTikz{bulk_point_A1}{
		\begin{tikzpicture}[scale=0.65]
			\begin{scope}
				\draw[black!20,dashed] (4.5,-.75,-.75)--(-4.5,-.75,-.75);
				\draw[black!20,dashed] (3,-.75,.75)--(-4.5,-.75,.75);
				\draw[black!20,dashed] (3,.75,-.75)--(-4.5,.75,-.75);
				\draw[black!20,dashed] (3,.75,.75)--(-4.5,.75,.75);
				\draw (4.5,-.75,.75)--(3,-.75,.75);
				\draw (4.5,.75,-.75)--(3,.75,-.75);
				\draw (4.5,.75,.75)--(3,.75,.75);
				\draw (-3.475,-.75,.75)--(-4.5,-.75,.75);
				\draw (-4.05,.75,-.75)--(-4.5,.75,-.75);
				\draw (-3.475,.75,.75)--(-4.5,.75,.75);
				\draw[black!20,dashed] (-4.5,-.75,.75)--(-4.5,.75,.75)--(-4.5,.75,-.75)--(-4.5,-.75,-.75)--cycle;
				\draw (4.5,-.75,.75)--(4.5,.75,.75)--(4.5,.75,-.75)--(4.5,-.75,-.75)--cycle;
				\draw[black!20,dashed] (1.5,-.75,.75)--(1.5,.75,.75)--(1.5,.75,-.75)--(1.5,-.75,-.75)--cycle;
				\draw[black!20,dashed] (-1.5,-.75,.75)--(-1.5,.75,.75)--(-1.5,.75,-.75)--(-1.5,-.75,-.75)--cycle;
				\draw (-4.5,-.75,.75)--(-4.5,.75,.75)--(-4.5,.75,-.75);
				\node[circle,inner sep=.01pt] at (0,0,.75) {$C$};
				\node[inner sep=0pt] at (-2.25,0,.75) {$D$};\node[inner sep=0] at (2.25,0,.75) {$D$};
				\node[circle,inner sep=.01pt] at (-4,0,.75) {$P$};\node[circle,inner sep=.01pt] at (3.75,0,.75) {$P$};
			\end{scope}
			\draw (-3,-2,2)--(-3,2,2)--(-3,2,-2);
			\draw (-3,-2,2)--(3,-2,2);
			\draw (-3,2,2)--(3,2,2);
			\draw (-3,2,-2)--(3,2,-2);
			\draw[black!20,dashed] (3,-.75,.75)--(-3,-.75,.75);
			\draw[black!20,dashed] (3,.75,.75)--(-3,.75,.75);
			\draw[black!20,dashed] (3,.75,-.75)--(-3,.75,-.75);
			\draw[black!20,dashed] (-3,-.75,-.75)--(-3,.75,-.75)--(-3,.75,.75)--(-3,-.75,.75)--cycle;
			\draw[black!20,dashed] (3,.75,-.75)--(3,-.75,-.75)--(3,-.75,.75);
			\draw (3,-.75,.75)--(3,.75,.75)--(3,.75,-.75)--(4.5,.75,-.75);
			\draw[] (3,.275,-2)--(3,2,-2)--(3,2,2)--(3,-2,2)--(3,-2,-2)--(3,-1.8,-2);
			\node[circle,inner sep=.1pt,fill=white] at (-.5,-1.75,0) {$B$};
			\node[circle,inner sep=.1pt,fill=white] at (6.4,2.5,5) {\color{white}.};
		\end{tikzpicture}
	}
	\caption{Partitioning of the lattice for detecting excitations in the bulk. $B$ encircles $CD$, and $A$ is the complement of $BCD$.
		If $\WWTEE$ is small, excitations on $P$ can be created by only acting on $PD$ and so have trivial statistics.}
	\label{fig:WWregionsblk}
\end{figure}
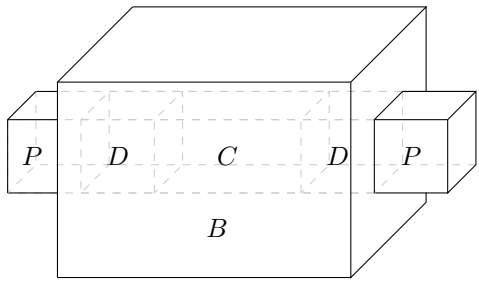

In Ref.~\cite{kim2015ground} two topological entanglement entropy diagnostics were found to probe long-range entanglement of a model near to a boundary. The first boundary diagnostic is an indicator that point-like topological particles can be created at the boundary of the system, and the second indicates that the boundary supports extended one-dimensional `loop-like' topological particles. Unlike in the bulk, these diagnostics do not necessarily coincide. The first
\begin{align}
	\WWbndTEEPoint:=S_{BC}+S_{CD}-S_B-S_D,\label{eqn:WWptdef}
\end{align}
defined using the regions in \cref{fig:WWregionsbnd_pt}, is non-zero if non-trivial point-like excitations can be created near the boundary.
If $\WWbndTEEPoint=0$, all point-like excitations on $P$ can be created with a local operator, so they are necessarily trivial. Conversely, if there are non-trivial point-like particles near the boundary, $\WWbndTEEPoint>0$.

The final diagnostic is designed to detect nontrivial loop-like excitations. Using the regions depicted in \cref{fig:WWregionsbnd_lp}, this diagnostic is
\begin{align}
	\WWbndTEELoop:=S_{BC}+S_{CD}-S_B-S_D.\label{eqn:WWloopdef}
\end{align}
Similarly to the other diagnostics, if $\WWbndTEELoop$ is zero, then line-like excitations must be trivial. Conversely, $\WWbndTEELoop$ must be nonzero if non-trivial loop excitations can be created at the boundary.

\begin{figure}
	\centering
	\includeTikz{Boundary_point_A}{
		\begin{tikzpicture}[scale=0.65]
			\draw (-3,-2,2)--(-3,0,2)--(-3,0,.75) (-3,0,-.75)--(-3,0,-2);
			\draw (-3,0,-.75)--(-3,-.6,-.75);
			\draw (3,-.75,-.75)--(-3,-.75,-.75);
			\fill[blue!20] (-3,0,2)--(3,0,2)--(3,0,.75)--(-3,0,.75)--cycle;
			\fill[blue!20] (-3,0,-2)--(3,0,-2)--(3,0,-.75)--(-3,0,-.75)--cycle;
			\filldraw[fill=white] (3,-2,2)--(3,0,2)--(3,0,.75)--(3,-.75,.75)--(3,-.75,-.75)--(3,0,-.75)--(3,0,-2)--(3,-2,-2)--cycle;
			\draw[black!20,dashed] (2.4,-.75,-.75)--(-3,-.75,-.75);
			\draw[black!20,dashed] (3,-.75,.75)--(-3,-.75,.75)--(-3,0,.75) (-3,-.75,.75)--(-3,-.75,-.75)--(-3,-.54,-.75);
			\draw (-3,-2,2)--(3,-2,2);
			\draw (-3,0,2)--(3,0,2);
			\draw (-3,0,-2)--(3,0,-2);
			\draw (-3,0,.75)--(3,0,.75);
			\draw (-3,0,-.75)--(3,0,-.75);
			\draw[] (3,-2,2)--(3,0,2)--(3,0,.75)--(3,-.75,.75)--(3,-.75,-.75)--(3,0,-.75)--(3,0,-2)--(3,-2,-2)--cycle;
			\node[circle,inner sep=.1pt,fill=white] at (-.5,-1.75,0) {$B$};
			\node[circle,inner sep=.1pt,fill=white] at (6.4,2.5,5) {\color{white}.};
		\end{tikzpicture}
	}
	\hspace{1.5cm}
	\includeTikz{Boundary_point_B}{
		\begin{tikzpicture}[scale=0.65]
			\begin{scope}[shift={(-7.5,0,0)}]
				\filldraw[fill=white] (3,-.75,-.75)--(4.5,-.75,-.75)--(4.5,0,-.75)--(3,0,-.75)--cycle;
				\filldraw[fill=white] (3,-.75,.75)--(4.5,-.75,.75)--(4.5,0,.75)--(3,0,.75)--cycle;
				\filldraw[fill=white] (4.5,-.75,-.75)--(4.5,-.75,.75)--(4.5,0,.75)--(4.5,0,-.75)--cycle;
				\filldraw[fill=blue!20] (3,0,-.75)--(4.5,0,-.75)--(4.5,0,.75)--(3,0,.75)--cycle;
			\end{scope}
			\filldraw[fill=white] (-3,0,2)--(3,0,2)--(3,0,.75)--(-3,0,.75)--cycle;
			\fill[blue!20] (-3,0,-2)--(3,0,-2)--(3,0,2)--(-3,0,2)--cycle;
			\draw (-1.5,0,-.75)--(-1.5,0,.75);\draw (1.5,0,-.75)--(1.5,0,.75);
			\draw (-3,0,-.75)--(-3,0,.75);\draw (3,0,-.75)--(3,0,.75);
			\draw (-3,-2,2)--(-3,0,2)--(-3,0,.75)--(-3,0,-.75)--(-3,0,-2);
			\fill[white] (-3,-2,2)--(3,-2,2)--(3,0,2)--(-3,0,2)--cycle;
			\draw (-3,-2,2)--(3,-2,2);\draw (-3,0,2)--(3,0,2);\draw (-3,0,-2)--(3,0,-2);
			\draw (-3,0,.75)--(3,0,.75);\draw (-3,0,-.75)--(3,0,-.75);
			\draw (3,-2,2)--(3,0,2)--(3,0,.75)--(3,0,-.75)--(3,0,-2)--(3,-2,-2)--cycle;
			\begin{scope}
				\filldraw[fill=white] (3,-.75,-.75)--(4.5,-.75,-.75)--(4.5,0,-.75)--(3,0,-.75)--cycle;
				\filldraw[fill=white] (3,-.75,.75)--(4.5,-.75,.75)--(4.5,0,.75)--(3,0,.75)--cycle;
				\filldraw[fill=white] (4.5,-.75,-.75)--(4.5,-.75,.75)--(4.5,0,.75)--(4.5,0,-.75)--cycle;
				\filldraw[fill=blue!20] (3,0,-.75)--(4.5,0,-.75)--(4.5,0,.75)--(3,0,.75)--cycle;
			\end{scope}
			\draw[] (-3,-2,2)--(3,-2,2)--(3,0,2)--(-3,0,2)--cycle;
			\node[circle,inner sep=.01pt,fill=white] at (-.5,-1.75,0) {$B$};
			\node[circle,inner sep=.01pt,fill=blue!20] at (0,0,0) {$C$};
			\node[inner sep=0pt,fill=blue!20] at (-2.25,0,0) {$D$};\node[inner sep=0,fill=blue!20] at (2.25,0,0) {$D$};
			\node[circle,inner sep=.01pt,fill=blue!20] at (-3.75,0,0) {$P$};\node[circle,inner sep=.01pt,fill=blue!20] at (3.75,0,0) {$P$};
		\end{tikzpicture}
	}
	\caption{Partitioning of the lattice for detecting point-like excitations on the boundary. The top (blue) surface is on the physical boundary of the lattice.
		If $\WWbndTEEPoint$ is small, excitations on $P$ can be created by only acting on $PD$ and so have trivial statistics.}
	\label{fig:WWregionsbnd_pt}
\end{figure}

\begin{figure}
	\centering
	\includeTikz{Boundary_line_A}{
		\begin{tikzpicture}[scale=0.65]
			\draw[fill=blue!20] (-2,0,-2)--(-2,0,2)--(2,0,2)--(2,0,-2)--cycle;
			\draw[fill=blue!20] (-1,0,-1)--(-1,0,1)--(1,0,1)--(1,0,-1)--cycle;
			\draw (2,-2,-2)--(2,0,-2) (2,-2,2)--(2,0,2) (-2,-2,2)--(-2,0,2);
			\draw(-2,-2,2)--(2,-2,2)--(2,-2,-2);
			\draw(-2,-1,2)--(2,-1,2)--(2,-1,-2);
			\node[] at (0,0,0) {$C$};
			\node[] at (0,-.5,2) {$B$};
			\node[] at (0,-1.5,2) {$D$};
		\end{tikzpicture}
	}
	\hspace{1.5cm}
	\includeTikz{Boundary_line_B}{
		\begin{tikzpicture}[scale=0.65]
			\draw[fill=blue!20] (-2,-1,-2)--(-2,-1,2)--(2,-1,2)--(2,-1,-2)--cycle;
			\draw[fill=white] (-1,-1,-1)--(-1,-1,1)--(1,-1,1)--(1,-1,-1)--cycle;
			\draw (2,-2,-2)--(2,-1,-2) (2,-2,2)--(2,-1,2) (-2,-2,2)--(-2,-1,2);
			\draw(-2,-1,-2)--(-2,-1,2)--(2,-1,2)--(2,-1,-2)--cycle;
			\draw (-2,-2,2)--(2,-2,2)--(2,-2,-2);
			\begin{scope}
				\clip (-1,-1,-1)--(-1,-1,1)--(1,-1,1)--(1,-1,-1)--cycle;
				\draw (-1,-1,-1)--(-1,-2,-1);
			\end{scope}
			\node[] at (0,-1.5,2) {$B$};
		\end{tikzpicture}
	}
	\hspace{1.5cm}
	\includeTikz{Boundary_line_C}{
		\begin{tikzpicture}[scale=0.65]
			\draw (2,-2,-2)--(2,-1,-2) (2,-2,2)--(2,-1,2) (-2,-2,2)--(-2,-1,2);
			\draw(-2,-1,-2)--(-2,-1,2)--(2,-1,2)--(2,-1,-2)--cycle;
			\draw (-2,-2,2)--(2,-2,2)--(2,-2,-2);
			\node[] at (0,-1.5,2) {$D$};
		\end{tikzpicture}
	}
	\caption{Partitioning of the lattice for detecting line-like excitations on the boundary.
		The top (blue) surface is on the boundary of the lattice.
		If $\WWbndTEELoop$ is small, excitations on $B$ can be created without acting on $C$ and so have trivial statistics.}\label{fig:WWregionsbnd_lp}
\end{figure}

The diagnostics presented in \onlinecite{kim2015ground} were found using generic arguments about the support of deformable operators that are used to create excitations. As such, it was shown rigorously that the null outcome is obtained only if a boundary does not give rise to topological particles. Conversely, a boundary that gives rise to topological excitations must give a positive reading for these diagnostics. However, due to spurious contributions~\cite{BravyiUnpublished,Cano15, Zou16, Williamson19, Kato19}, the generic arguments cannot guarantee that the diagnostics do not give false positives and, moreover, the work gives no interpretation for the magnitude of a positive reading. In our work, we restrict to loop-gas models. In that setting, for a large class of models, we obtain expressions for the topological entanglement entropy near the boundary.
% !TeX spellcheck = en_US
% This line sets the project root file.
% !TEX root = ../WW_EntanglementEntropy.tex
%

\section{Bulk entropy of topological loop-gasses}\label{sec:bulkentropy}

We now show how the entanglement entropy of ground states of Levin-Wen models is computed far from any boundary, before moving on to Walker-Wang models. To make the calculation we take the Schmidt decomposition of the ground state
\begin{align}
	\ket{\psi} & =\sum_{\lambda=1}^{r}\Phi_{\lambda} \ket{\psi_R^\lambda}\ket{\psi_{\comp{R}}^\lambda},\label{eqn:Schmidt}
\end{align}
for regions $R$, where the sets $\{\ket{\psi_R^\lambda}\}$ and  $\{\ket{\psi_{\comp{R}}^\lambda}\}$ are orthonormal, and $r$ is the Schmidt rank of the state $\ket{\psi}$. This allows us to compute the reduced state $\rho_R$ on $R$. Diagonalizing this matrix yields the entanglement entropy.

In what follows, we will need to parameterize the states $\ket{\psi_R^\lambda}$ and $\ket{\psi_{\comp{R}}^\lambda}$. Recall from \cref{sec:models} that ground states of the loop-gas models can be understood as classes of diagrams which are related by local moves. It is convenient to parameterize $\ket{\psi_R^\lambda}$ in a similar way. Far from the interface (since the correlation length is 0, far means one site), the state behaves exactly like the ground state. Unlike in the bulk, the interface defines a fixed boundary condition for the diagram in $R$. States $\ket{\psi_R^\lambda}$ will therefore be represented by some fiducial diagram $T$, and are understood to consist of a superposition of all diagrams that can be obtained from $T$ by local moves \emph{restricted to $R$} as indicated in \cref{fig:genericToTree}. Particular lattices may provide geometric complications, but the topological invariance of the ground state will mean these are of no consequence. In all cases, we will choose a particular class of fusion trees as fiducial diagrams.
For the following, we will therefore need several results concerning fusion trees. Consider fusing $n$ strings labeled $\vec{x}:=(x_1,x_2,\ldots,x_n)$ to a fixed object $a$. Using $F$-moves, we can bring the fusion tree for this process into the canonical form
\begin{align}
	\begin{array}{c}
		\includeTikz{treeA}{
			\begin{tikzpicture}
				\draw(0,0)--(1.75,1.75);
				\draw[dotted](1.75,1.75)--(2,2);
				\draw(2,2)--(3,3);
				\begin{scope}
					\clip(0,0)--(3,3)--(6,3)--(6,0)--(0,0);
					\draw(1,0)--(0,1);
					\draw(2,0)--(0,2);
					\draw(3,0)--(0,3);
					\draw(4.5,0)--(0,4.5);
					\draw(5.5,0)--(0,5.5);
				\end{scope}
				\node[below] at (0,0) {$x_1$};
				\node[below] at (1,0) {$x_2$};
				\node[below] at (2,0) {$x_3$};
				\node[below] at (3,0) {$x_4$};
				\node[below] at (4.5,0) {$x_{n-1}$};
				\node[below] at (5.5,0) {$x_{n}$};
				\node[above right] at (3,3) {$a$};
				\node[above left] at (.75,.75) {$y_1$};
				\node[above left] at (1.25,1.25) {$y_2$};
				\node[above left] at (2.5,2.5) {$y_{n-2}$};
				\node[below] at (.5,.5) {\tiny{$\mu_1$}};
				\node[below] at (1,1) {\tiny{$\mu_2$}};
				\node[below] at (1.5,1.5) {\tiny{$\mu_3$}};
				\node[right] at (2.25,2.25) {\tiny{$\mu_{n-2}$}};
			\end{tikzpicture}
		}
	\end{array},\label{eqn:treeA}
\end{align}
where $1\leq\mu\leq N_{a,b}^{c}$ parameterizes the distinct fusion channels $a\otimes b\to c$. In the following, sums over $x_i,\,y_i$ are over all simple objects in $\C$.
First, we need two results concerning summing over trees.
\begin{lemma}\label{lem:summingds}
	Let $\C$ a unitary fusion category, then for a fixed simple fusion outcome $a$,
	\begin{align}
		\sum_{\vec{x},\vec{y}}N_{x_1x_2}^{y_1}N_{y_1x_3}^{y_2}\ldots N_{y_{n-2}x_n}^{a}\prod_{j\leq n}d_{x_j} & =d_a\D^{2(n-1)},
	\end{align}
	where $\D=\sqrt{\sum_i d_i^2}$ is the total quantum dimension of $\C$.
	\begin{proof}
		Provided in Appendix~\hyperref[lem:summingds_pf]{\ref*{app:SN_results}}.
	\end{proof}
\end{lemma}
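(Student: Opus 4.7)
The plan is to prove the identity by induction on $n$. Let $T_n(a)$ denote its left-hand side. The base case $n=2$ is the identity $\sum_{x_1,x_2}N_{x_1x_2}^{a}\,d_{x_1}d_{x_2}=d_a\D^2$, which I would derive from two successive applications of the auxiliary identity $\sum_{y}N_{ay}^{c}\,d_y=d_a d_c$. That auxiliary identity in turn follows by combining the defining fusion relation $d_a d_b=\sum_c N_{ab}^{c}\,d_c$ with Frobenius reciprocity $N_{ab}^{c}=N_{\bar a c}^{b}$, which is a consequence of the duality on $\C$: one writes $\sum_y N_{ay}^{c}\,d_y=\sum_y N_{\bar a c}^{y}\,d_y=d_{\bar a}d_c=d_a d_c$. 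Summing over $x_2$ first yields $\sum_{x_2}N_{x_1x_2}^{a}\,d_{x_2}=d_{x_1}d_a$, and then summing $d_{x_1}^{2}$ over $x_1$ produces $\D^2$.

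For the inductive step, assume $T_{n-1}(b)=d_b\D^{2(n-2)}$ for every simple $b$. I would peel off the rightmost external leg $x_n$ of the tree using exactly the same identity, $\sum_{x_n}N_{y_{n-2}x_n}^{a}\,d_{x_n}=d_{y_{n-2}}d_a$. This factors $d_a$ out in front and, because the factor attached to $y_{n-2}$ is now $d_{y_{n-2}}$ while the remainder of the sum is precisely the smaller tree with root label $y_{n-2}$, gives
\begin{align*}
T_n(a)=d_a\sum_{y_{n-2}}d_{y_{n-2}}\,T_{n-1}(y_{n-2}).
\end{align*}
Substituting the inductive hypothesis and using $\sum_{y_{n-2}}d_{y_{n-2}}^{2}=\D^{2}$ then yields $T_n(a)=d_a\D^{2(n-1)}$, completing the induction.

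No serious obstacle is anticipated. The only ingredient beyond the fusion rules themselves is the Frobenius reciprocity $N_{ab}^{c}=N_{\bar a c}^{b}$, which is standard in any unitary (spherical) fusion category. All remaining manipulations are immediate consequences of the defining relation for the quantum dimensions together with $\D^{2}=\sum_{x}d_{x}^{2}$. The key structural observation driving the induction is that the two $d_{x_n}$ factors needed to reduce $T_n$ to $T_{n-1}$ arise naturally: one from the product $\prod_j d_{x_j}$ and the other from the $\sum_{x_n}$ collapse, which together furnish the $\D^2$ jump between successive levels.
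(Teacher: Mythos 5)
Your proposal is correct and takes essentially the same route as the paper's proof: an induction on the number of legs that peels off the last external line via the identity $\sum_x N_{yx}^{a}d_x=d_yd_a$ (the fusion relation combined with Frobenius reciprocity, which the paper phrases as cyclic invariance of the indices on $N_{bc}^{a}$) and then collapses $\sum_y d_y^2=\D^2$. The only cosmetic differences are that you apply the vertex identity before invoking the inductive hypothesis rather than after, and you anchor the induction at $n=2$ instead of the paper's $n=1$.
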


\begin{figure}
	\includeTikz{interfaceConfig}{
		\begin{tikzpicture}
			\def\r{.4};
			\pgfmathsetmacro{\dx}{\r*cos(30)};
			\begin{scope}
				\clip (-3,-1.5) rectangle (3,1.5);
				\draw[black!20] (-3,0)--(3,0);
				\begin{scope}[shift={(0,0)}]
					\draw[shift={(10*\dx,0)}] (210:\r)--(150:\r)--(90:\r)--(30:\r)--(-30:\r)--(-90:\r)--cycle;
					\draw[shift={(8*\dx,0)}] (210:\r)--(150:\r)--(90:\r)--(30:\r)--(-30:\r)--(-90:\r)--cycle;
					\draw[shift={(6*\dx,0)}] (210:\r)--(150:\r)--(90:\r)--(30:\r)--(-30:\r)--(-90:\r)--cycle;
					\draw[shift={(4*\dx,0)}] (210:\r)--(150:\r)--(90:\r)--(30:\r)--(-30:\r)--(-90:\r)--cycle;
					\draw[shift={(2*\dx,0)}] (210:\r)--(150:\r)--(90:\r)--(30:\r)--(-30:\r)--(-90:\r)--cycle;
					\draw[shift={(0,0)}] (210:\r)--(150:\r)--(90:\r)--(30:\r)--(-30:\r)--(-90:\r)--cycle;
					\draw[shift={(-2*\dx,0)}] (210:\r)--(150:\r)--(90:\r)--(30:\r)--(-30:\r)--(-90:\r)--cycle;
					\draw[shift={(-4*\dx,0)}] (210:\r)--(150:\r)--(90:\r)--(30:\r)--(-30:\r)--(-90:\r)--cycle;
					\draw[shift={(-6*\dx,0)}] (210:\r)--(150:\r)--(90:\r)--(30:\r)--(-30:\r)--(-90:\r)--cycle;
					\draw[shift={(-8*\dx,0)}] (210:\r)--(150:\r)--(90:\r)--(30:\r)--(-30:\r)--(-90:\r)--cycle;
					\draw[shift={(-10*\dx,0)}] (210:\r)--(150:\r)--(90:\r)--(30:\r)--(-30:\r)--(-90:\r)--cycle;
				\end{scope}
				\begin{scope}[shift={(\dx,3*\r/2)}]
					\draw[shift={(10*\dx,0)}] (210:\r)--(150:\r)--(90:\r)--(30:\r)--(-30:\r)--(-90:\r)--cycle;
					\draw[shift={(8*\dx,0)}] (210:\r)--(150:\r)--(90:\r)--(30:\r)--(-30:\r)--(-90:\r)--cycle;
					\draw[shift={(6*\dx,0)}] (210:\r)--(150:\r)--(90:\r)--(30:\r)--(-30:\r)--(-90:\r)--cycle;
					\draw[shift={(4*\dx,0)}] (210:\r)--(150:\r)--(90:\r)--(30:\r)--(-30:\r)--(-90:\r)--cycle;
					\draw[shift={(2*\dx,0)}] (210:\r)--(150:\r)--(90:\r)--(30:\r)--(-30:\r)--(-90:\r)--cycle;
					\draw[shift={(0,0)}] (210:\r)--(150:\r)--(90:\r)--(30:\r)--(-30:\r)--(-90:\r)--cycle;
					\draw[shift={(-2*\dx,0)}] (210:\r)--(150:\r)--(90:\r)--(30:\r)--(-30:\r)--(-90:\r)--cycle;
					\draw[shift={(-4*\dx,0)}] (210:\r)--(150:\r)--(90:\r)--(30:\r)--(-30:\r)--(-90:\r)--cycle;
					\draw[shift={(-6*\dx,0)}] (210:\r)--(150:\r)--(90:\r)--(30:\r)--(-30:\r)--(-90:\r)--cycle;
					\draw[shift={(-8*\dx,0)}] (210:\r)--(150:\r)--(90:\r)--(30:\r)--(-30:\r)--(-90:\r)--cycle;
					\draw[shift={(-10*\dx,0)}] (210:\r)--(150:\r)--(90:\r)--(30:\r)--(-30:\r)--(-90:\r)--cycle;
				\end{scope}
				\begin{scope}[shift={(\dx,-3*\r/2)}]
					\draw[shift={(10*\dx,0)}] (210:\r)--(150:\r)--(90:\r)--(30:\r)--(-30:\r)--(-90:\r)--cycle;
					\draw[shift={(8*\dx,0)}] (210:\r)--(150:\r)--(90:\r)--(30:\r)--(-30:\r)--(-90:\r)--cycle;
					\draw[shift={(6*\dx,0)}] (210:\r)--(150:\r)--(90:\r)--(30:\r)--(-30:\r)--(-90:\r)--cycle;
					\draw[shift={(4*\dx,0)}] (210:\r)--(150:\r)--(90:\r)--(30:\r)--(-30:\r)--(-90:\r)--cycle;
					\draw[shift={(2*\dx,0)}] (210:\r)--(150:\r)--(90:\r)--(30:\r)--(-30:\r)--(-90:\r)--cycle;
					\draw[shift={(0,0)}] (210:\r)--(150:\r)--(90:\r)--(30:\r)--(-30:\r)--(-90:\r)--cycle;
					\draw[shift={(-2*\dx,0)}] (210:\r)--(150:\r)--(90:\r)--(30:\r)--(-30:\r)--(-90:\r)--cycle;
					\draw[shift={(-4*\dx,0)}] (210:\r)--(150:\r)--(90:\r)--(30:\r)--(-30:\r)--(-90:\r)--cycle;
					\draw[shift={(-6*\dx,0)}] (210:\r)--(150:\r)--(90:\r)--(30:\r)--(-30:\r)--(-90:\r)--cycle;
					\draw[shift={(-8*\dx,0)}] (210:\r)--(150:\r)--(90:\r)--(30:\r)--(-30:\r)--(-90:\r)--cycle;
					\draw[shift={(-10*\dx,0)}] (210:\r)--(150:\r)--(90:\r)--(30:\r)--(-30:\r)--(-90:\r)--cycle;
				\end{scope}
				\begin{scope}[shift={(0,2*3*\r/2)}]
					\draw[shift={(10*\dx,0)}] (210:\r)--(150:\r)--(90:\r)--(30:\r)--(-30:\r)--(-90:\r)--cycle;
					\draw[shift={(8*\dx,0)}] (210:\r)--(150:\r)--(90:\r)--(30:\r)--(-30:\r)--(-90:\r)--cycle;
					\draw[shift={(6*\dx,0)}] (210:\r)--(150:\r)--(90:\r)--(30:\r)--(-30:\r)--(-90:\r)--cycle;
					\draw[shift={(4*\dx,0)}] (210:\r)--(150:\r)--(90:\r)--(30:\r)--(-30:\r)--(-90:\r)--cycle;
					\draw[shift={(2*\dx,0)}] (210:\r)--(150:\r)--(90:\r)--(30:\r)--(-30:\r)--(-90:\r)--cycle;
					\draw[shift={(0,0)}] (210:\r)--(150:\r)--(90:\r)--(30:\r)--(-30:\r)--(-90:\r)--cycle;
					\draw[shift={(-2*\dx,0)}] (210:\r)--(150:\r)--(90:\r)--(30:\r)--(-30:\r)--(-90:\r)--cycle;
					\draw[shift={(-4*\dx,0)}] (210:\r)--(150:\r)--(90:\r)--(30:\r)--(-30:\r)--(-90:\r)--cycle;
					\draw[shift={(-6*\dx,0)}] (210:\r)--(150:\r)--(90:\r)--(30:\r)--(-30:\r)--(-90:\r)--cycle;
					\draw[shift={(-8*\dx,0)}] (210:\r)--(150:\r)--(90:\r)--(30:\r)--(-30:\r)--(-90:\r)--cycle;
					\draw[shift={(-10*\dx,0)}] (210:\r)--(150:\r)--(90:\r)--(30:\r)--(-30:\r)--(-90:\r)--cycle;
				\end{scope}
				\begin{scope}[shift={(0,-2*3*\r/2)}]
					\draw[shift={(10*\dx,0)}] (210:\r)--(150:\r)--(90:\r)--(30:\r)--(-30:\r)--(-90:\r)--cycle;
					\draw[shift={(8*\dx,0)}] (210:\r)--(150:\r)--(90:\r)--(30:\r)--(-30:\r)--(-90:\r)--cycle;
					\draw[shift={(6*\dx,0)}] (210:\r)--(150:\r)--(90:\r)--(30:\r)--(-30:\r)--(-90:\r)--cycle;
					\draw[shift={(4*\dx,0)}] (210:\r)--(150:\r)--(90:\r)--(30:\r)--(-30:\r)--(-90:\r)--cycle;
					\draw[shift={(2*\dx,0)}] (210:\r)--(150:\r)--(90:\r)--(30:\r)--(-30:\r)--(-90:\r)--cycle;
					\draw[shift={(0,0)}] (210:\r)--(150:\r)--(90:\r)--(30:\r)--(-30:\r)--(-90:\r)--cycle;
					\draw[shift={(-2*\dx,0)}] (210:\r)--(150:\r)--(90:\r)--(30:\r)--(-30:\r)--(-90:\r)--cycle;
					\draw[shift={(-4*\dx,0)}] (210:\r)--(150:\r)--(90:\r)--(30:\r)--(-30:\r)--(-90:\r)--cycle;
					\draw[shift={(-6*\dx,0)}] (210:\r)--(150:\r)--(90:\r)--(30:\r)--(-30:\r)--(-90:\r)--cycle;
					\draw[shift={(-8*\dx,0)}] (210:\r)--(150:\r)--(90:\r)--(30:\r)--(-30:\r)--(-90:\r)--cycle;
					\draw[shift={(-10*\dx,0)}] (210:\r)--(150:\r)--(90:\r)--(30:\r)--(-30:\r)--(-90:\r)--cycle;
				\end{scope}
			\end{scope}
			\begin{scope}[shift = {(9,0)}]
				\clip (-3,-1.5) rectangle (3,1.5);
				\draw[black!20] (-3,0)--(3,0);
				\begin{scope}[shift={(0,0)}]
					\draw[black!10,shift={(10*\dx,0)}] (210:\r)--(150:\r)--(90:\r)--(30:\r)--(-30:\r)--(-90:\r)--cycle;
					\draw[black!10,shift={(8*\dx,0)}] (210:\r)--(150:\r)--(90:\r)--(30:\r)--(-30:\r)--(-90:\r)--cycle;
					\draw[black!10,shift={(6*\dx,0)}] (210:\r)--(150:\r)--(90:\r)--(30:\r)--(-30:\r)--(-90:\r)--cycle;
					\draw[black!10,shift={(4*\dx,0)}] (210:\r)--(150:\r)--(90:\r)--(30:\r)--(-30:\r)--(-90:\r)--cycle;
					\draw[black!10,shift={(2*\dx,0)}] (210:\r)--(150:\r)--(90:\r)--(30:\r)--(-30:\r)--(-90:\r)--cycle;
					\draw[black!10,shift={(0,0)}] (210:\r)--(150:\r)--(90:\r)--(30:\r)--(-30:\r)--(-90:\r)--cycle;
					\draw[black!10,shift={(-2*\dx,0)}] (210:\r)--(150:\r)--(90:\r)--(30:\r)--(-30:\r)--(-90:\r)--cycle;
					\draw[black!10,shift={(-4*\dx,0)}] (210:\r)--(150:\r)--(90:\r)--(30:\r)--(-30:\r)--(-90:\r)--cycle;
					\draw[black!10,shift={(-6*\dx,0)}] (210:\r)--(150:\r)--(90:\r)--(30:\r)--(-30:\r)--(-90:\r)--cycle;
					\draw[black!10,shift={(-8*\dx,0)}] (210:\r)--(150:\r)--(90:\r)--(30:\r)--(-30:\r)--(-90:\r)--cycle;
					\draw[black!10,shift={(-10*\dx,0)}] (210:\r)--(150:\r)--(90:\r)--(30:\r)--(-30:\r)--(-90:\r)--cycle;
				\end{scope}
				\begin{scope}[shift={(\dx,3*\r/2)}]
					\draw[black!10,shift={(10*\dx,0)}] (210:\r)--(150:\r)--(90:\r)--(30:\r)--(-30:\r)--(-90:\r)--cycle;
					\draw[black!10,shift={(8*\dx,0)}] (210:\r)--(150:\r)--(90:\r)--(30:\r)--(-30:\r)--(-90:\r)--cycle;
					\draw[black!10,shift={(6*\dx,0)}] (210:\r)--(150:\r)--(90:\r)--(30:\r)--(-30:\r)--(-90:\r)--cycle;
					\draw[black!10,shift={(4*\dx,0)}] (210:\r)--(150:\r)--(90:\r)--(30:\r)--(-30:\r)--(-90:\r)--cycle;
					\draw[black!10,shift={(2*\dx,0)}] (210:\r)--(150:\r)--(90:\r)--(30:\r)--(-30:\r)--(-90:\r)--cycle;
					\draw[black!10,shift={(0,0)}] (210:\r)--(150:\r)--(90:\r)--(30:\r)--(-30:\r)--(-90:\r)--cycle;
					\draw[black!10,shift={(-2*\dx,0)}] (210:\r)--(150:\r)--(90:\r)--(30:\r)--(-30:\r)--(-90:\r)--cycle;
					\draw[black!10,shift={(-4*\dx,0)}] (210:\r)--(150:\r)--(90:\r)--(30:\r)--(-30:\r)--(-90:\r)--cycle;
					\draw[black!10,shift={(-6*\dx,0)}] (210:\r)--(150:\r)--(90:\r)--(30:\r)--(-30:\r)--(-90:\r)--cycle;
					\draw[black!10,shift={(-8*\dx,0)}] (210:\r)--(150:\r)--(90:\r)--(30:\r)--(-30:\r)--(-90:\r)--cycle;
					\draw[black!10,shift={(-10*\dx,0)}] (210:\r)--(150:\r)--(90:\r)--(30:\r)--(-30:\r)--(-90:\r)--cycle;
				\end{scope}
				\begin{scope}[shift={(\dx,-3*\r/2)}]
					\draw[black!10,shift={(10*\dx,0)}] (210:\r)--(150:\r)--(90:\r)--(30:\r)--(-30:\r)--(-90:\r)--cycle;
					\draw[black!10,shift={(8*\dx,0)}] (210:\r)--(150:\r)--(90:\r)--(30:\r)--(-30:\r)--(-90:\r)--cycle;
					\draw[black!10,shift={(6*\dx,0)}] (210:\r)--(150:\r)--(90:\r)--(30:\r)--(-30:\r)--(-90:\r)--cycle;
					\draw[black!10,shift={(4*\dx,0)}] (210:\r)--(150:\r)--(90:\r)--(30:\r)--(-30:\r)--(-90:\r)--cycle;
					\draw[black!10,shift={(2*\dx,0)}] (210:\r)--(150:\r)--(90:\r)--(30:\r)--(-30:\r)--(-90:\r)--cycle;
					\draw[black!10,shift={(0,0)}] (210:\r)--(150:\r)--(90:\r)--(30:\r)--(-30:\r)--(-90:\r)--cycle;
					\draw[black!10,shift={(-2*\dx,0)}] (210:\r)--(150:\r)--(90:\r)--(30:\r)--(-30:\r)--(-90:\r)--cycle;
					\draw[black!10,shift={(-4*\dx,0)}] (210:\r)--(150:\r)--(90:\r)--(30:\r)--(-30:\r)--(-90:\r)--cycle;
					\draw[black!10,shift={(-6*\dx,0)}] (210:\r)--(150:\r)--(90:\r)--(30:\r)--(-30:\r)--(-90:\r)--cycle;
					\draw[black!10,shift={(-8*\dx,0)}] (210:\r)--(150:\r)--(90:\r)--(30:\r)--(-30:\r)--(-90:\r)--cycle;
					\draw[black!10,shift={(-10*\dx,0)}] (210:\r)--(150:\r)--(90:\r)--(30:\r)--(-30:\r)--(-90:\r)--cycle;
				\end{scope}
				\begin{scope}[shift={(0,2*3*\r/2)}]
					\draw[black!10,shift={(10*\dx,0)}] (210:\r)--(150:\r)--(90:\r)--(30:\r)--(-30:\r)--(-90:\r)--cycle;
					\draw[black!10,shift={(8*\dx,0)}] (210:\r)--(150:\r)--(90:\r)--(30:\r)--(-30:\r)--(-90:\r)--cycle;
					\draw[black!10,shift={(6*\dx,0)}] (210:\r)--(150:\r)--(90:\r)--(30:\r)--(-30:\r)--(-90:\r)--cycle;
					\draw[black!10,shift={(4*\dx,0)}] (210:\r)--(150:\r)--(90:\r)--(30:\r)--(-30:\r)--(-90:\r)--cycle;
					\draw[black!10,shift={(2*\dx,0)}] (210:\r)--(150:\r)--(90:\r)--(30:\r)--(-30:\r)--(-90:\r)--cycle;
					\draw[black!10,shift={(0,0)}] (210:\r)--(150:\r)--(90:\r)--(30:\r)--(-30:\r)--(-90:\r)--cycle;
					\draw[black!10,shift={(-2*\dx,0)}] (210:\r)--(150:\r)--(90:\r)--(30:\r)--(-30:\r)--(-90:\r)--cycle;
					\draw[black!10,shift={(-4*\dx,0)}] (210:\r)--(150:\r)--(90:\r)--(30:\r)--(-30:\r)--(-90:\r)--cycle;
					\draw[black!10,shift={(-6*\dx,0)}] (210:\r)--(150:\r)--(90:\r)--(30:\r)--(-30:\r)--(-90:\r)--cycle;
					\draw[black!10,shift={(-8*\dx,0)}] (210:\r)--(150:\r)--(90:\r)--(30:\r)--(-30:\r)--(-90:\r)--cycle;
					\draw[black!10,shift={(-10*\dx,0)}] (210:\r)--(150:\r)--(90:\r)--(30:\r)--(-30:\r)--(-90:\r)--cycle;
				\end{scope}
				\begin{scope}[shift={(0,-2*3*\r/2)}]
					\draw[black!10,shift={(10*\dx,0)}] (210:\r)--(150:\r)--(90:\r)--(30:\r)--(-30:\r)--(-90:\r)--cycle;
					\draw[black!10,shift={(8*\dx,0)}] (210:\r)--(150:\r)--(90:\r)--(30:\r)--(-30:\r)--(-90:\r)--cycle;
					\draw[black!10,shift={(6*\dx,0)}] (210:\r)--(150:\r)--(90:\r)--(30:\r)--(-30:\r)--(-90:\r)--cycle;
					\draw[black!10,shift={(4*\dx,0)}] (210:\r)--(150:\r)--(90:\r)--(30:\r)--(-30:\r)--(-90:\r)--cycle;
					\draw[black!10,shift={(2*\dx,0)}] (210:\r)--(150:\r)--(90:\r)--(30:\r)--(-30:\r)--(-90:\r)--cycle;
					\draw[black!10,shift={(0,0)}] (210:\r)--(150:\r)--(90:\r)--(30:\r)--(-30:\r)--(-90:\r)--cycle;
					\draw[black!10,shift={(-2*\dx,0)}] (210:\r)--(150:\r)--(90:\r)--(30:\r)--(-30:\r)--(-90:\r)--cycle;
					\draw[black!10,shift={(-4*\dx,0)}] (210:\r)--(150:\r)--(90:\r)--(30:\r)--(-30:\r)--(-90:\r)--cycle;
					\draw[black!10,shift={(-6*\dx,0)}] (210:\r)--(150:\r)--(90:\r)--(30:\r)--(-30:\r)--(-90:\r)--cycle;
					\draw[black!10,shift={(-8*\dx,0)}] (210:\r)--(150:\r)--(90:\r)--(30:\r)--(-30:\r)--(-90:\r)--cycle;
					\draw[black!10,shift={(-10*\dx,0)}] (210:\r)--(150:\r)--(90:\r)--(30:\r)--(-30:\r)--(-90:\r)--cycle;
				\end{scope}
				\draw[thick] ($(-6*\dx,0) + (210:\r)$)--++(0,\r)--++(\dx,\r/2)--++(0,\r)--++(\dx,\r/2)--++(0,\r)--++(\dx,\r/2);
				\draw[thick] ($(-4*\dx,0) + (210:\r)$)--++(0,\r)--++(-\dx,\r/2);
				\draw[thick] ($(-2*\dx,0) + (210:\r)$)--++(0,\r)--++(-\dx,\r/2)--++(0,\r)--++(-\dx,\r/2);
				\draw[thick] ($(0*\dx,0) + (210:\r)$)--++(0,\r)--++(-\dx,\r/2)--++(0,\r)--++(-\dx,\r/2)--++(0,\r)--++(-\dx,\r/2);
				\draw[thick] ($(2*\dx,0) + (210:\r)$)--++(0,\r)--++(-\dx,\r/2)--++(0,\r)--++(-\dx,\r/2)--++(0,\r)--++(-\dx,\r/2);
				\draw[thick] ($(4*\dx,0) + (210:\r)$)--++(0,\r)--++(-\dx,\r/2)--++(0,\r)--++(-\dx,\r/2)--++(0,\r)--++(-\dx,\r/2);
				\draw[thick] ($(6*\dx,0) + (210:\r)$)--++(0,\r)--++(-\dx,\r/2)--++(0,\r)--++(-\dx,\r/2)--++(0,\r)--++(-\dx,\r/2);
				\draw[thick] ($(8*\dx,0) + (210:\r)$)--++(0,\r)--++(-\dx,\r/2)--++(0,\r)--++(-\dx,\r/2)--++(0,\r)--++(-\dx,\r/2);
				\draw[thick] ($(10*\dx,0) + (210:\r)$)--++(0,\r)--++(-\dx,\r/2)--++(0,\r)--++(-\dx,\r/2)--++(0,\r)--++(-\dx,\r/2);
				\draw[thick] ($(12*\dx,0) + (210:\r)$)--++(0,\r)--++(-\dx,\r/2)--++(0,\r)--++(-\dx,\r/2)--++(0,\r)--++(-\dx,\r/2);
				\begin{scope}[yscale=-1]
					\draw[thick] ($(-6*\dx,0) + (210:\r)$)--++(0,\r)--++(\dx,\r/2)--++(0,\r)--++(\dx,\r/2)--++(0,\r)--++(\dx,\r/2);
					\draw[thick] ($(-4*\dx,0) + (210:\r)$)--++(0,\r)--++(-\dx,\r/2);
					\draw[thick] ($(-2*\dx,0) + (210:\r)$)--++(0,\r)--++(-\dx,\r/2)--++(0,\r)--++(-\dx,\r/2);
					\draw[thick] ($(0*\dx,0) + (210:\r)$)--++(0,\r)--++(-\dx,\r/2)--++(0,\r)--++(-\dx,\r/2)--++(0,\r)--++(-\dx,\r/2);
					\draw[thick] ($(2*\dx,0) + (210:\r)$)--++(0,\r)--++(-\dx,\r/2)--++(0,\r)--++(-\dx,\r/2)--++(0,\r)--++(-\dx,\r/2);
					\draw[thick] ($(4*\dx,0) + (210:\r)$)--++(0,\r)--++(-\dx,\r/2)--++(0,\r)--++(-\dx,\r/2)--++(0,\r)--++(-\dx,\r/2);
					\draw[thick] ($(6*\dx,0) + (210:\r)$)--++(0,\r)--++(-\dx,\r/2)--++(0,\r)--++(-\dx,\r/2)--++(0,\r)--++(-\dx,\r/2);
					\draw[thick] ($(8*\dx,0) + (210:\r)$)--++(0,\r)--++(-\dx,\r/2)--++(0,\r)--++(-\dx,\r/2)--++(0,\r)--++(-\dx,\r/2);
					\draw[thick] ($(10*\dx,0) + (210:\r)$)--++(0,\r)--++(-\dx,\r/2)--++(0,\r)--++(-\dx,\r/2)--++(0,\r)--++(-\dx,\r/2);
					\draw[thick] ($(12*\dx,0) + (210:\r)$)--++(0,\r)--++(-\dx,\r/2)--++(0,\r)--++(-\dx,\r/2)--++(0,\r)--++(-\dx,\r/2);
				\end{scope}
			\end{scope}
			\node[fill=white,circle] at (-3,1.5) {a)};
			\node[fill=white,circle] at (-3+9,1.5) {b)};
			\node[fill=white,circle] at (4.5,0) {$\leftarrow$ interface $\rightarrow$};
			\node[fill=white,circle] at (4.5,1) {$R$};
			\node[fill=white,circle] at (4.5,-1) {$\comp{R}$};
		\end{tikzpicture}
	}
	{\phantomsubcaption\label{fig:interfaceGeneric}}
	{\phantomsubcaption\label{fig:interfaceTree}}
	\caption{To compute the Schmidt decomposition, we need to parameterize states contained on one sub-region. Given a generic configuration on the lattice (\subref*{fig:interfaceGeneric}), we can utilize the `moves' outlined in \cref{sec:preliminaries}, restricted to either side of the interface, to deform into a tree (\subref*{fig:interfaceTree}). The lattice sites on the boundary provide boundary conditions for the states in $R$ and $\comp{R}$. We utilize constraints on the total fusion outcome in each region to parameterize the allowed trees.
	}\label{fig:genericToTree}
\end{figure}

\begin{lemma}\label{lem:sumlog}
	Let $\C$ a unitary fusion category, then for a fixed simple fusion outcome $a$,
	\begin{align}
		\sum_{\vec{x},\vec{y}} N_{x_1x_2}^{y_1}N_{y_1x_3}^{y_2}\ldots N_{y_{n-2}x_n}^{a}\frac{\prod_{j\leq n} d_{x_j}}{\D^{2(n-1)}}\log\prod_{k\leq n} d_{x_k} & = n d_a \sum_x \frac{d_x^2\log d_x}{\D^2}.
	\end{align}
	\begin{proof}
		Provided in Appendix~\hyperref[lem:sumlog_pf]{\ref*{app:SN_results}}.
	\end{proof}
\end{lemma}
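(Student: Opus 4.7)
The plan is to reduce the identity to \cref{lem:summingds} by isolating the leaf carrying the logarithm.

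First I use $\log\prod_{k\leq n}d_{x_k}=\sum_{k=1}^n\log d_{x_k}$ to split the left-hand side as $\sum_{k=1}^n L_k$, where pulling the $x_k$-sum outside yields
\begin{align*}
L_k=\frac{1}{\D^{2(n-1)}}\sum_{x_k}d_{x_k}\log d_{x_k}\;T_k(a,x_k),\qquad T_k(a,x_k):=\sum_{\vec x\setminus x_k,\,\vec y} N_{x_1x_2}^{y_1}\cdots N_{y_{n-2}x_n}^a\prod_{j\neq k}d_{x_j}.
\end{align*}
The crux is the identity $T_k(a,x_k)=d_a\,d_{x_k}\,\D^{2(n-2)}$ for every $1\leq k\leq n$.

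To prove it, I sum over the internal labels $\vec y$ first; by the pentagon axiom this collapses to the (bracketing-independent) total multiplicity $N^a_{x_1,\ldots,x_n}:=\dim\C(x_1\otimes\cdots\otimes x_n,a)$. Choosing the bracketing that isolates $x_k$, associativity gives
\begin{align*}
N^a_{x_1,\ldots,x_n}=\sum_{b,c} N^b_{x_1,\ldots,x_{k-1}}\,N^c_{x_{k+1},\ldots,x_n}\,N^a_{b\,x_k\,c},
\end{align*}
with the convention that an empty sub-tree contributes $\indicator{b=1}$ (for $k=1$) or $\indicator{c=1}$ (for $k=n$). Applying \cref{lem:summingds} to each non-empty sub-tree gives
\begin{align*}
\sum_{x_1,\ldots,x_{k-1}} N^b_{x_1,\ldots,x_{k-1}}\prod_{j<k}d_{x_j}=d_b\D^{2(k-2)}\quad\text{and}\quad \sum_{x_{k+1},\ldots,x_n} N^c_{x_{k+1},\ldots,x_n}\prod_{j>k}d_{x_j}=d_c\D^{2(n-k-1)},
\end{align*}
so $T_k(a,x_k)=\D^{2(n-3)}\sum_{b,c}d_bd_c\,N^a_{bx_kc}$. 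Evaluating the inner double sum via $N^a_{bx_kc}=\sum_y N^y_{bx_k}N^a_{yc}$ together with the identities $\sum_b d_b N^y_{bx_k}=d_yd_{x_k}$ and $\sum_c d_c N^a_{yc}=d_ad_y$ (both consequences of Frobenius reciprocity $N^a_{bc}=N^c_{\bar b a}$ and the fusion rule $\sum_c N^c_{ab}d_c=d_ad_b$) yields $d_ad_{x_k}\D^2$, hence $T_k(a,x_k)=d_ad_{x_k}\D^{2(n-2)}$.

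Substituting back, $L_k=d_a\sum_x d_x^2\log d_x/\D^2$ is independent of $k$, and summing the $n$ copies yields the claim. The main obstacle is bookkeeping in the non-braided setting: the decomposition preserves the given left-to-right order (we may re-bracket but not permute the leaves), and the boundary cases $k=1$ or $k=n$ must be checked directly using the single non-empty sub-tree. Both reduce cleanly to the pentagon identity and the standard dimension relations, so the remainder is mechanical.
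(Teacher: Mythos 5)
Your proof is correct, but it takes a genuinely different route from the paper's. The paper proves the identity by induction on $n$: the base case $n=2$ splits $\log(d_{x_1}d_{x_2})$ into two terms, and the inductive step peels off the last leaf $x_{n+1}$, applying the induction hypothesis together with \cref{lem:summingds} to accumulate one extra copy of $d_a\sum_x d_x^2\log d_x/\D^2$ per leaf. You instead argue directly and symmetrically: after writing $\log\prod_k d_{x_k}=\sum_k\log d_{x_k}$, you compute the marginal $T_k(a,x_k)=d_a d_{x_k}\D^{2(n-2)}$ for \emph{each} leaf by re-bracketing the total multiplicity around position $k$, applying \cref{lem:summingds} to the two sub-trees, and collapsing the middle factor $\sum_{b,c}d_bd_c N^a_{bx_kc}=d_ad_{x_k}\D^2$ via Frobenius reciprocity (I checked the exponent bookkeeping: $\D^{2(k-2)}\cdot\D^{2(n-k-1)}\cdot\D^2=\D^{2(n-2)}$, and the boundary cases $k=1,n$ come out consistently with your empty-tree convention). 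Your version makes the factor of $n$ conceptually transparent---each leaf contributes the same amount $L_k$, independent of $k$---and avoids a second induction, at the cost of invoking bracketing-independence of the total fusion multiplicity and Frobenius reciprocity explicitly; note that the former follows already from the associator being an isomorphism rather than from the pentagon axiom per se, a harmless imprecision. The paper's induction is more pedestrian but only ever manipulates a single adjacent fusion at a time, reusing exactly the same two dimension identities that underlie \cref{lem:summingds}.
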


Finally, we need the probability of a given fusion tree in a topological loop-gas model.

\begin{lemma}[Probability of trees]\label{lem:prtree}
	Let $\C$ a unitary fusion category. Given a fusion outcome $a$ on $n$ edges, the probability of the tree in \cref{eqn:treeA} is
	\begin{align}
		\Pr[\vec{x},\vec{y},\vec{\mu}|a] & =\frac{\prod_{j\leq n} d_{x_j}}{d_{a}\D^{2(n-1)}}.
	\end{align}
	\begin{proof}
		Provided in Appendix~\hyperref[lem:prtree_pf]{\ref*{app:SN_results}}.
	\end{proof}
\end{lemma}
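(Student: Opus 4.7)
The plan is to read the probability directly off the Schmidt decomposition of the Levin-Wen ground state across the boundary of a simply-connected region $R$ whose boundary crosses $n$ edges. First I would exploit the invariance of the ground state under $F$- and loop-moves to argue that the state admits an expansion in the canonical fusion tree basis of \cref{eqn:treeA},
\begin{align*}
\ket{\psi} \;=\; \sum_{a}\sum_{\vec{x},\vec{y},\vec{\mu}} c(\vec{x},a)\,\ket{T^{a}_{\vec{x},\vec{y},\vec{\mu}}}_{R}\otimes\ket{\widetilde{T}^{a}_{\vec{x},\vec{y},\vec{\mu}}}_{\bar{R}},
\end{align*}
where the tree states on $R$ are orthonormal (by the trivalent-vertex normalisation of \cref{def:UFC}) and the companion states on $\bar{R}$ form a compatible orthonormal set. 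The critical observation is that $c(\vec{x},a)$ depends only on the boundary labels and the total charge, not on the internal tree data $(\vec{y},\vec{\mu})$: two canonical trees differing only in $(\vec{y},\vec{\mu})$ are connected by a unitary $F$-move supported entirely in $R$ that commutes with the ground-state projector, and therefore must carry equal Schmidt values across the multiplet.

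Next I would evaluate $|c(\vec{x},a)|^{2}$ diagrammatically. Closing an inside tree against its conjugate, the orthogonality identity of \cref{def:UFC} collapses each internal edge and leaves a disjoint union of loops labelled by the $x_{j}$ together with one residual strand labelled $a$; applying \cref{eqn:bubble} to the loops yields a factor $\prod_{j}d_{x_{j}}$, while the pairing of the $a$-strand against the corresponding outside sector provides a factor $1/d_{a}$ together with a sector constant $K_{a}$ that is manifestly independent of $\vec{x}$ and is absorbed in the overall normalisation. This produces the scaling $|c(\vec{x},a)|^{2}\propto \prod_{j}d_{x_{j}}/d_{a}$.

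Finally I would fix the normalisation by imposing the conditional total-probability constraint $\sum_{\vec{x},\vec{y},\vec{\mu}} \Pr[\vec{x},\vec{y},\vec{\mu}\,|\,a]=1$. The resulting sum is precisely the left-hand side of \cref{lem:summingds} and evaluates to $d_{a}\D^{2(n-1)}$, which fixes the constant and yields the claimed formula. I expect the main obstacle to be the diagrammatic bookkeeping needed to verify orthonormality of the outside states $\ket{\widetilde{T}^{a}_{\vec{x},\vec{y},\vec{\mu}}}_{\bar{R}}$: any spurious local overlap at the cut would tilt the Born weights away from the stated form, and it is exactly the topological-invariance of the ground state together with the simple-connectedness of $R$ that excludes such contributions.
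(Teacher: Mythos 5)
Your overall strategy---Schmidt decomposition in the canonical tree basis, relative weights from diagrammatics, overall constant fixed at the end by \cref{lem:summingds}---is a viable route, and it is a genuinely different one from the paper's: the paper never touches the Schmidt decomposition in this lemma, but instead chains the pairwise anyonic fusion probabilities $\Pr[a\otimes b\to c]=N_{ab}^{c}d_c/(d_a d_b)$ along the canonical tree, multiplies by independent per-edge priors $\Pr[x]=d_x^2/\D^2$, conditions on the total charge $a$ via Bayes' rule, and notes that all multiplicity labels $\vec{\mu}$ are equally likely; there \cref{lem:summingds} serves only as a normalization check (the Schmidt decomposition enters later, in \cref{lem:bulkLW}, which \emph{consumes} this lemma). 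However, as written your argument has concrete gaps. First, the uniformity of $c(\vec{x},a)$ over $(\vec{y},\vec{\mu})$: two canonical trees differing only in internal labels are orthogonal basis vectors of the same fusion space $\C(x_1\otimes\cdots\otimes x_n,a)$ and are \emph{not} related by an $F$-move---an $F$-move changes the bracketing, and a composition of $F$-moves returning to canonical form acts as a unitary that generically sends a single tree to a superposition. From invariance $[\rho_R,U]=0$ under all such unitaries you may conclude $\rho_R\propto\one$ on the multiplet only if they act irreducibly (Schur's lemma), which you neither state nor prove; ``connected by a unitary, hence equal Schmidt values'' is not valid reasoning on its own, since it additionally presumes the trees are eigenvectors permuted by that unitary.

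Second, the diagrammatic bookkeeping does not yield your claimed factors under the paper's conventions (\cref{def:UFC}): collapsing a canonical tree against its conjugate telescopes bubble factors $\sqrt{d_a d_b/d_c}$ and gives $T\circ T^{\dagger}=\sqrt{\prod_j d_{x_j}/d_a}\,\id_a$, so the fully closed diagram evaluates to $\sqrt{d_a\prod_j d_{x_j}}$---not ``loops contributing $\prod_j d_{x_j}$ times a factor $1/d_a$.'' (The correct weight $\prod_j d_{x_j}/d_a$ is the \emph{square} of the collapse scalar, which hints at the repair, but that identification is exactly what needs proving.) Worse, the computation is in tension with your own setup: if the inside tree states are orthonormal, closing a tree against its conjugate computes $1$; the weight $|c(\vec{x},a)|^2$ must instead be extracted from the ground-state amplitudes, which your sketch never actually evaluates---the scaling $|c|^2\propto\prod_j d_{x_j}$ is asserted rather than derived, with the unspecified constant $K_a$ silently absorbing precisely the factors in question. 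Finally, for a simply connected bulk region the total charge crossing the cut is forced to be $1$ (as the paper notes in the proof of \cref{lem:bulkLW}), so your geometry only accesses the $a=1$ case, whereas the lemma for general $a$ is what the boundary computation (where $a\in A$) and the toroidal-region computation consume; you would need a puncture or boundary setup, or the paper's probabilistic argument, to obtain general $a$.
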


Throughout the remainder of this section, we use the following condensed notation
\begin{align}
	\sum_{\vec{x},\vec{y},\vec{\mu}}:= & \sum_{x_1,\ldots,x_n}\sum_{y_1,\ldots,y_{n-2}}\sum_{\mu_1,\ldots,\mu_{n-2}}                               \\
	=                                  & \sum_{x_1,\ldots,x_n}\sum_{y_1,\ldots,y_{n-2}} N_{x_1x_2}^{y_1}N_{y_1x_3}^{y_2}\ldots N_{y_{n-2}x_n}^{a},
\end{align}
where we frequently leave the fusion outcome $a$ implicit.

\subsection{Levin-Wen models}

\begin{thm}[Topological entropy of (2+1)-dimensional Levin-Wen models in the bulk~\cite{levin2006detecting,kitaev2006topological,LevinThesis}]\ \\
	Consider the regions shown in \cref{fig:LWregionsblk}, then the Levin-Wen model defined by a unitary spherical fusion category $\C$, with total dimension $\D$, has topological entropy
	\begin{align}
		\LWTEE & =2\log\D^2=\log\D^2_{\mathcal{Z}(\C)},
	\end{align}
	where $\drinfeld{\C}$ is the modular category called the \define{Drinfeld center}~\cite{MR3242743} of $\C$ which describes the anyons of the theory.
	\label{thm:LWbulk}
\end{thm}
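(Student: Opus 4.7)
The plan is to compute each of the four entropies $S_B$, $S_D$, $S_{BC}$, $S_{CD}$ individually by Schmidt-decomposing the Levin-Wen ground state across the boundary of each region, and then to evaluate the combination $\LWTEE = S_{BC} + S_{CD} - S_B - S_D$.

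First, for any region $R$ whose boundary $\partial R$ cuts through $n$ lattice edges, I would write the ground state in Schmidt form. The Schmidt vectors are indexed by the tuple $(\vec{x}, \vec{y}, \vec{\mu})$ consisting of string labels on the boundary edges, fusion channel labels in the canonical interior tree of \cref{eqn:treeA}, and vertex multiplicities. Each connected component of $\partial R$ carries an overall fusion charge that is constrained by the global ground state structure: contractible boundary components enclosing only $R$ force the total charge to be trivial, while boundary components enclosing a topologically nontrivial portion of $\comp{R}$ introduce an additional summation variable $a$ weighted by $d_a^2/\D^2$.

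Second, using \cref{lem:prtree} for the probability of a given tree, together with \cref{lem:summingds,lem:sumlog} for the resulting sums, the Shannon entropy of the Schmidt distribution reduces to an area-law contribution $n_R \cdot \alpha$ (with $\alpha$ a non-universal constant built out of quantum dimensions) plus a universal topological correction. Concretely, one separates the log of the tree probability $\prod_j d_{x_j}/(d_a \D^{2(n-1)})$ into a $\log\prod_j d_{x_j}$ piece (handled by \cref{lem:sumlog}, giving the area law) and a $\log \D^{2(n-1)}/d_a$ piece (handled by \cref{lem:summingds}, giving the topological contribution). For a region whose boundary has $b_R$ contractible components each bounding portions of $R$, the universal correction is $-b_R \log \D^2$; each boundary component instead enclosing a puncture in $\comp{R}$ contributes an additional $+\log \D^2$ from the free charge sum.

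Third, I would read off the topology of each region from \cref{fig:LWregionsblk} and combine the results. The non-universal area-law terms $n_R \alpha$ cancel identically in $S_{BC} + S_{CD} - S_B - S_D$ because merging $C$ into $B$ or into $D$ preserves the total boundary edge length. Summing the universal pieces, using that the diagnostic was specifically designed to isolate topological flux through the two puncture-like regions $P$, the combination evaluates to $2 \log \D^2$. The final identification $2 \log \D^2 = \log \D^2_{\drinfeld{\C}}$ follows from the standard fact that the Drinfeld center of a unitary spherical fusion category satisfies $\D^2_{\drinfeld{\C}} = \D^4$.

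The main obstacle is the careful bookkeeping of topology: for each of $B$, $D$, $BC$, $CD$ one must correctly enumerate boundary components, identify which of them encircle punctures (and therefore carry free charge sums), and verify the precise cancellation of area-law terms. Once the individual entropies are expressed via the fusion-tree lemmas, the remainder is a purely combinatorial check driven by the specific geometry of the Kim diagnostic.
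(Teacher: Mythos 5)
Your proposal follows essentially the same route as the paper: its \cref{lem:bulkLW} performs exactly this Schmidt decomposition over boundary fusion trees, evaluates the entropy via \cref{lem:prtree,lem:summingds,lem:sumlog} to obtain $S_R = n S[\C] - b_0\log\D^2$, and then applies this to the four regions of \cref{fig:LWregionsblk}, where $b_0(BC)+b_0(CD)-b_0(B)-b_0(D)=1+1-2-2$ gives $\LWTEE=2\log\D^2=\log\D^2_{\drinfeld{\C}}$ after the area terms cancel. Your extra provision for boundary components enclosing punctures in $\comp{R}$ (a free charge sum weighted by $d_a^2/\D^2$) is never invoked in this geometry, since all four regions are unions of simply connected pieces with trivial total charge, so its quantitative imprecision (the correction is not simply $+\log\D^2$ in general, as it picks up $\sum_a d_a^2\log d_a/\D^2$ terms; the genuinely non-simply-connected case is handled in the paper only for Walker--Wang models, via the connected $\S$-matrix) does not affect the argument here.
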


\begin{examples*}
	Recall the examples from \cref{sec:examples}. As discussed in \cref{sec:examples_phys}, these label two distinct loop-gas models in (2+1)-dimensions, the toric code and double semion models. Since all the input categories for these examples have $\D^2=2$, the topological entanglement entropy is $\LWTEE=2\log2$ for both.
\end{examples*}

\begin{lemma}[Entropy of (union of) simply connected bulk regions~\cite{levin2006detecting,LevinThesis,bullivant2016entropic}]\label{lem:bulkLW}
	On a region $R$ in the bulk consisting of the disjoint union of simply connected sub-regions, the entropy is
	\begin{align}
		S_R & =nS[\C]-b_0\log\D^2,\label{eqn:LWsimpentropy}
	\end{align}
	where $b_0$ is the number of disjoint interface components of $R$, $n$ is the number of links crossing the entanglement interface, and
	\begin{align}
		S[\C]:=\log\D^2-\sum_{x}\frac{d_x^2\log d_x}{\D^2}.
	\end{align}
	\begin{proof}[Proof of \cref{lem:bulkLW}]
		Consider a ball $R$ with $n$ sites along the interface, in the configuration $\vec{x}=x_1,x_2,\ldots, x_n$. Since any configuration must be created by inserting closed loops into the empty state, the total `charge' crossing the interface must be $1$. For a fixed $\vec{x}$, there are now many ways for this to happen, parameterized by trees depicted in \cref{eqn:treeA} with fusion outcome $a=1$.

		Trees with distinct labelings (in $\vec{x}$, $\vec{y}$ or $\vec{\mu}$) are orthogonal. This means that if the tree \cref{eqn:treeA} occurs adjacent to the interface within $R$, it must also occur on the other side of the interface
		\begin{align}
			\begin{array}{c}
				\includeTikz{tree_interface_blk}{
					\begin{tikzpicture}[scale=.9]
						\begin{scope}
							\draw(0,0)--(1.75,1.75);
							\draw[dotted](1.75,1.75)--(2,2);
							\draw(2,2)--(3,3);
							\begin{scope}
								\clip(0,0)--(3,3)--(6,3)--(6,0)--(0,0);
								\draw(1,0)--(0,1);
								\draw(2,0)--(0,2);
								\draw(3,0)--(0,3);
								\draw(4.5,0)--(0,4.5);
								\draw(5.5,0)--(0,5.5);
							\end{scope}
							\node[below left] at (0,0) {$x_1$};
							\node[below right,inner sep=.75] at (1,0) {$x_2$};
							\node[below right,inner sep=.75] at (2,0) {$x_3$};
							\node[below right,inner sep=.75] at (3,0) {$x_4$};
							\node[below right,inner sep=.75] at (4.5,0) {$x_{n-1}$};
							\node[below right,inner sep=.75] at (5.5,0) {$x_{n}$};
							\node[above right] at (3,3) {$c$};
							\node[above left] at (.75,.75) {$y_1$};
							\node[above left] at (1.25,1.25) {$y_2$};
							\node[above left] at (2.5,2.5) {$y_{n-2}$};
							\node[below] at (.5,.5) {\tiny{$\mu_1$}};
							\node[below] at (1,1) {\tiny{$\mu_2$}};
							\node[below] at (1.5,1.5) {\tiny{$\mu_3$}};
							\node[right] at (2.25,2.25) {\tiny{$\mu_{n-2}$}};
							\node[right] at (2.75,2.75) {\tiny{$\mu_{n-1}$}};
						\end{scope}
						\begin{scope}[yscale=-1]
							\draw(0,0)--(1.75,1.75);
							\draw[dotted](1.75,1.75)--(2,2);
							\draw(2,2)--(3,3);
							\begin{scope}
								\clip(0,0)--(3,3)--(6,3)--(6,0)--(0,0);
								\draw(1,0)--(0,1);
								\draw(2,0)--(0,2);
								\draw(3,0)--(0,3);
								\draw(4.5,0)--(0,4.5);
								\draw(5.5,0)--(0,5.5);
							\end{scope}
							\node[below right] at (3,3) {$d$};
							\node[below left] at (.75,.75) {$z_1$};
							\node[below left] at (1.25,1.25) {$z_2$};
							\node[below left] at (2.5,2.5) {$z_{n-2}$};
							\node[above] at (.5,.5) {\tiny{$\nu_1$}};
							\node[above] at (1,1) {\tiny{$\nu_2$}};
							\node[above] at (1.5,1.5) {\tiny{$\nu_3$}};
							\node[right] at (2.25,2.25) {\tiny{$\nu_{n-2}$}};
							\node[right] at (2.75,2.75) {\tiny{$\nu_{n-1}$}};
						\end{scope}
						\draw[black!20](-.5,0)--(6,0);
						\node at (-1,2) {$R$};
						\node at (-1,-2) {$\comp{R}$};
					\end{tikzpicture}
				}
			\end{array}\propto \indicator{\vec{z}=\vec{y}}\indicator{\vec{\nu}=\vec{\mu}}\indicator{d=c}.\label{eqn:treeB_blk}
		\end{align}
		If the trees on either side of the cut had different branching structures, we could use local moves on either side of the cut to bring them to this standard form.

		We take the Schmidt decomposition of the ground state as follows
		\begin{align}
			\ket{\psi} & =\sum_{\vec{x},\vec{y},\vec{\mu}} \Phi_{\vec{x},\vec{y},\vec{\mu}} \ket{ \psi_R^{\vec{x},\vec{y},\vec{\mu}} }\ket{ \psi_{\comp{R}}^{\vec{x},\vec{y},\vec{\mu}} },\label{eqn:simpleregionpartition}
		\end{align}
		where the notation $\vec{x},\,\vec{y},\,\vec{\mu}$ indicates the labeling of a valid tree as in \cref{eqn:treeA}. The state $\ket{ \psi_R^{\vec{x},\vec{y},\vec{\mu}} }$ includes any state that can be reached from \cref{eqn:treeA} (with $a=1$) by acting only on $R$. The reduced state on $R$ is
		\begin{align}
			\rho_R & =\sum_{\vec{x},\vec{y},\vec{\mu}} |\Phi_{\vec{x},\vec{y},\vec{\mu}}|^2 \ketbra{ \psi_R^{\vec{x},\vec{y},\vec{\mu}} } \\
			       & =\sum_{\vec{x},\vec{y},\vec{\mu}} \Pr[\vec{x},\vec{y},\vec{\mu}|1] \ketbra{ \psi_R^{\vec{x},\vec{y},\vec{\mu}} },
		\end{align}
		where $\Pr[\vec{x},\vec{y},\vec{\mu}|1]$ is the probability of the labeled tree, given that $\vec{x}$ fuses to 1. From \cref{lem:prtree}, the reduced state is
		\begin{align}
			\rho_R & =\sum_{\vec{x},\vec{y},\vec{\mu}} \frac{\prod_{j\leq n} d_{x_j}}{\D^{2(n-1)}} \ketbra{ \psi_R^{\vec{x},\vec{y},\vec{\mu}} }.
		\end{align}

		The von Neumann entropy of $\rho_R$ is therefore
		\begin{align}
			S_R := & -\tr\rho_R\log\rho_R                                                                                                                                                                                                                        \\
			=      & -\sum_{\vec{x},\vec{y},\vec{\mu}} \frac{\prod_{j\leq n} d_{x_j}}{\D^{2(n-1)}} \log \frac{\prod_{k\leq n} d_{x_k}}{\D^{2(n-1)}}                                                                                                              \\
			=      & \frac{ \log \D^{2(n-1)}}{\D^{2(n-1)}} \sum_{\vec{x},\vec{y},\vec{\mu}} \prod_{j\leq n} d_{x_j} - \sum_{\vec{x},\vec{y},\vec{\mu}}  \frac{ \prod_{j\leq n} d_{x_j} }{\D^{2(n-1)}} \log \prod_{k\leq n} d_{x_k}   \label{EqnLine:ApplyLemmas} \\
			=      & n(\log\D^2-\sum_x \frac{d_x^2\log{d_x}}{\D^2})-\log\D^2                                                                                                                                                                                     \\
			=      & n S[C]-\log\D^2
		\end{align}
		where \cref{lem:summingds,lem:sumlog} are applied to the left and right terms of line~(\ref{EqnLine:ApplyLemmas}), respectively, and
		\begin{align}
			S[C]:=\log\D^2-\sum_x \frac{d_x^2\log{d_x}}{\D^2}.
		\end{align}

		It is straightforward to check that this holds on each sub-region of $R$.
	\end{proof}
\end{lemma}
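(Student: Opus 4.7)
The plan is to obtain the entropy from the Schmidt decomposition of the ground state across $\partial R$ and to reduce the resulting sum to expressions that can be evaluated via \cref{lem:summingds,lem:sumlog} after applying \cref{lem:prtree}. I would first treat the case in which $R$ is a single simply connected region with $n$ bonds crossing its boundary. Using $F$-moves applied separately to either side of the cut, any admissible configuration of strings near the interface can be brought to the canonical branching tree of \cref{eqn:treeA} at $\partial R$. Simple-connectedness of $R$ forces the $n$ interface strings to fuse to $1$ inside $R$ (any loop inside $R$ is contractible and hence removable by a loop move), so the free labels parameterising the interface Hilbert space are exactly the triples $(\vec{x},\vec{y},\vec{\mu})$ with fusion outcome $a=1$. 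Orthogonality of trees carrying distinct labels then lets me write
\begin{align}
\ket{\psi} = \sum_{\vec{x},\vec{y},\vec{\mu}} \Phi_{\vec{x},\vec{y},\vec{\mu}} \ket{\psi_R^{\vec{x},\vec{y},\vec{\mu}}} \ket{\psi_{\comp{R}}^{\vec{x},\vec{y},\vec{\mu}}}
\end{align}
with the $\ket{\psi_R^{\vec{x},\vec{y},\vec{\mu}}}$ and $\ket{\psi_{\comp{R}}^{\vec{x},\vec{y},\vec{\mu}}}$ orthonormal, so this is a Schmidt decomposition.

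Next, I would identify $|\Phi_{\vec{x},\vec{y},\vec{\mu}}|^2$ with the probability that the fixed interface tree appears in the ground state conditioned on fusing to $1$. By \cref{lem:prtree} this probability equals $\prod_{j\leq n} d_{x_j}/\D^{2(n-1)}$, so the reduced density matrix $\rho_R$ has these eigenvalues, and the von Neumann entropy is the corresponding Shannon entropy,
\begin{align}
S_R = -\sum_{\vec{x},\vec{y},\vec{\mu}} \frac{\prod_{j\leq n} d_{x_j}}{\D^{2(n-1)}}\log\frac{\prod_{k\leq n} d_{x_k}}{\D^{2(n-1)}}.
\end{align}
Splitting the logarithm into the constant part $\log\D^{2(n-1)}$ and the data-dependent part $\log\prod_k d_{x_k}$, I would evaluate the first using \cref{lem:summingds} (which gives $\sum_{\vec{x},\vec{y},\vec{\mu}}\prod_j d_{x_j} = d_1 \D^{2(n-1)} = \D^{2(n-1)}$ at $a=1$) and the second using \cref{lem:sumlog} (which gives $n\sum_x d_x^2\log d_x/\D^2$ at $a=1$). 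Combining the two contributions yields $S_R = n\,S[\C] - \log\D^2$, establishing the single-component case.

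For the general case, write $R = R_1 \sqcup \cdots \sqcup R_{b_0}$ with $n_i$ bonds crossing $\partial R_i$ and $n = \sum_i n_i$. Because the pieces are disjoint and each is simply connected, the interface strings on each $\partial R_i$ fuse to $1$ independently, and the Schmidt labels factorise as a product of triples $(\vec{x}^{(i)},\vec{y}^{(i)},\vec{\mu}^{(i)})$ across $i=1,\dots,b_0$. The reduced state $\rho_R$ therefore decomposes as a product distribution over the $b_0$ components, and additivity of the Shannon/von Neumann entropy gives
\begin{align}
S_R = \sum_{i=1}^{b_0} \bigl( n_i\,S[\C] - \log\D^2 \bigr) = n\,S[\C] - b_0\,\log\D^2,
\end{align}
as claimed. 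The one place I expect to need care is the orthogonality underpinning the Schmidt decomposition: two configurations that agree on $R$ up to the interface tree but carry distinct labels $(\vec{x},\vec{y},\vec{\mu})$ must live in orthogonal interface sectors, which in turn requires that the $F$- and loop-moves used to reach the canonical form act strictly locally on each side of the cut and therefore cannot relabel a tree across $\partial R$. Given this, simple-connectedness is the only global input needed to fix $a=1$, and the rest of the argument is the purely algebraic manipulation above.
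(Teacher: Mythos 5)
Your proposal is correct and follows essentially the same route as the paper's proof: a Schmidt decomposition labeled by canonical interface trees with total charge $a=1$, eigenvalues identified via \cref{lem:prtree}, and the entropy evaluated by applying \cref{lem:summingds,lem:sumlog} to the two terms of the split logarithm. Your explicit treatment of the multi-component case via factorisation and additivity simply fills in what the paper dismisses as ``straightforward to check,'' so nothing differs in substance.
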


Applying \cref{lem:bulkLW} to the regions in \cref{fig:LWregionsblk} completes the proof of \cref{thm:LWbulk}.

\subsection{Walker-Wang models}\label{ss:WWbulkpf}

In this section, we prove the following result for the bulk diagnostic for Walker-Wang models. The essential arguments in this section were made in \onlinecite{bullivant2016entropic}, however we use slightly different language that allows the result to be applied more generally.
\begin{thm}
	For a Walker-Wang model defined by a unitary premodular category $\C$, the topological entanglement entropy (defined using the regions in \cref{fig:WWregionsblk}) in the bulk is given by
	\begin{align}
		\WWTEE & =\sum_{c,\lambda_c}\frac{\lambda_c}{\D^2}\log \frac{\lambda_c}{d_c},
	\end{align}
	where $\{\lambda_c\}$ are the eigenvalues of $\S_c^\dagger \S_c$, and $\S_c$ is the connected $\S$-matrix (\cref{def:commectedS}).
\end{thm}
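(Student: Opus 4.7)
The plan is to adapt the Schmidt-decomposition argument of \cref{thm:LWbulk} to three dimensions, with the essential new feature being the non-contractible cycle created when $B$ encircles $CD$. First I would observe that the arguments of \cref{lem:bulkLW} carry over to the Walker-Wang setting for simply connected regions: strings inside a topological ball can still be reduced to canonical fusion-tree form using $F$, loop, $R$, and $\S$ moves, so the Schmidt labels remain $(\vec{x},\vec{y},\vec{\mu})$ and the probabilities of \cref{lem:prtree} apply unchanged. This gives $S_R = n_R S[\C] - b_0(R)\log\D^2$ for $R \in \{D, BC, CD\}$, handling three of the four required entropies after noting that $D$ has two disjoint components while $BC$ and $CD$ are each topologically a ball.

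The real work lies in $S_B$. Because $B$ is topologically a solid torus (a box with a through-tunnel), its Schmidt decomposition acquires an additional label $(c, a, \alpha)$ specifying the homology class of the flux threading the hole together with its attachment to the boundary fusion tree. Diagrammatically, a generic ground-state configuration on $B$ can be brought to a standard form consisting of a Levin-Wen-type fusion tree near the interface plus a loop of type $c$ running around the non-contractible cycle, joined to the tree through a pair-of-pants vertex labeled by $(a,\alpha)$. Using \cref{cor:premodulartrap} to trap loops that encircle the flux and \cref{lem:productS} to contract them, the conditional probability of the flux sector $(c,a,\alpha)$ reduces to
\begin{align}
\Pr[(c,a,\alpha)]=\frac{[\S_c^\dagger\S_c]_{(a,\alpha),(a,\alpha)}}{\D^2},
\end{align}
with total normalization guaranteed by \cref{lem:TrScSc}. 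Diagonalizing each block $\S_c^\dagger\S_c$ with eigenvalues $\{\lambda_c\}$ and combining with the usual Levin-Wen boundary contribution on the torus interface gives
\begin{align}
S_B = n_B S[\C] - \log\D^2 - \sum_{c,\lambda_c}\frac{\lambda_c}{\D^2}\log\frac{\lambda_c}{d_c}.
\end{align}

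The area-law pieces $n_R S[\C]$ and the $b_0(R)\log\D^2$ corrections all cancel in the combination $S_{BC}+S_{CD}-S_B-S_D$ by design of the partition, leaving precisely the flux-entropy contribution from $S_B$, which (after sign tracking) reproduces the claimed $\WWTEE$. The main obstacle is the flux-probability computation in $S_B$: one must carefully track how strings piercing the hole of $B$ braid with arbitrary boundary configurations and recognize, via \cref{lem:productS,lem:sumS,lem:TrScSc,cor:premodulartrap}, that the resulting coefficients are exactly the matrix elements of $\S_c^\dagger\S_c/\D^2$. Once this identification is made, the extraction of $\WWTEE$ is a bookkeeping exercise.
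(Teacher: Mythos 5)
Your overall route is the paper's: handle $BC$, $CD$, and $D$ with \cref{lem:bulkLW}, isolate the solid torus $B$ as the one region where the ball argument fails, enlarge the Schmidt decomposition by the loop and interface-charge labels, and diagonalize $\S_c^\dagger\S_c$. The genuine error is the extra $-\log\D^2$ in your expression for $S_B$. In the correct computation the Schmidt coefficients are $\Phi_{\vec{x},\vec{y},\vec{\mu},c}\,[\S_c]_{(b,\beta),(a,\alpha)}/\D$, so the eigenvalues of $\rho_B$ are $\frac{\prod_{j\leq n} d_{x_j}}{d_c\,\D^{2n}}\lambda_c$, and applying \cref{lem:summingds,lem:sumlog,lem:TrScSc} gives
\begin{align}
	S_B & = n_B\, S[\C]-\sum_{c,\lambda_c}\frac{\lambda_c}{\D^2}\log\frac{\lambda_c}{d_c},
\end{align}
with no additional constant. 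The point is that the trivial-flux constraint at the interface, which is what produces the $-\log\D^2$ for a ball, is exactly what is relaxed on the torus: the spectral term \emph{replaces} $\log\D^2$ rather than adding to it. The $1/d_c$ inside the logarithm is the remnant of the conditional tree probability $\prod_j d_{x_j}/(d_c\D^{2(n-1)})$ from \cref{lem:prtree}, and the normalization $\sum_c\Tr\S_c^\dagger\S_c=\D^2$ absorbs the leftover power of $\D^2$ into the area-law piece. With your $S_B$, the combination $S_{BC}+S_{CD}-S_B-S_D$ does \emph{not} cancel as you assert; it yields $\WWTEE=\log\D^2+\sum_{c,\lambda_c}\frac{\lambda_c}{\D^2}\log\frac{\lambda_c}{d_c}$. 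This fails both sanity checks: for $\C$ modular all eigenvalues are $\lambda_c=d_c$, so the sum vanishes and your formula predicts $\WWTEE=\log\D^2$, contradicting the known null result for modular Walker--Wang models, and for $\C$ symmetric you would get $2\log\D^2$ instead of $\log\D^2$.

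Two smaller points. First, $\Pr[(c,a,\alpha)]=[\S_c^\dagger\S_c]_{(a,\alpha),(a,\alpha)}/\D^2$ is wrong as stated: tracing out $\comp{B}$ leaves off-diagonal coherences $[\S_c^\dagger\S_c]_{(a_2,\alpha_2),(a_1,\alpha_1)}$ between distinct internal-loop labels, so the diagonal entries are not sector probabilities. Your subsequent diagonalization is the right fix, but the entropy must be computed from the eigenvalues, never from the diagonal, and the intermediate ``probability'' statement should be dropped. Second, you have the labels swapped: in the standard form it is $a$ (with multiplicity index $\alpha$) that labels the loop winding the non-contractible cycle of $B$, with $b$ labeling the linking loop in $\comp{B}$, while $c$ is the total charge crossing the entanglement interface; it is $c$, not the winding loop, that inherits the $1/d_c$ from \cref{lem:prtree}, which is precisely where the $\log(\lambda_c/d_c)$ structure originates.
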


\begin{proof}
	In simply connected regions, the arguments from \cref{lem:bulkLW} still hold. The other type of region in \cref{fig:LWregionsblk} is a torus. In this case, we cannot simply decompose the ground state as in \cref{eqn:simpleregionpartition}, with the sum over configurations on the interface. Recall that the reason we could do this for a simple region was ground states are created by inserting closed loops, and all closed loops except those crossing the interface can be added entirely within either $R$ or $\comp{R}$. This is not the case for a toroidal region. Consider, for example, the configuration depicted in \cref{fig:toroid_config}. The closed string inside $R$ (red, dashed) cannot be altered by acting entirely within $R$, so contributes additional entanglement to the ground state, which is not witnessed by the interface configuration. Additionally, the two loops may be connected by a string, such that the global charge is trivial. Therefore, unlike for simply connected regions, the net charge crossing the boundary is not necessarily trivial. With these considerations, we can decompose the ground state as
	\begin{align}
		\ket{\psi}=\sum_{
		\substack{\vec{x},\vec{y},\vec{\mu}                                                                                                                                                                                              \\
		c,a,\alpha,b,\beta}} \Phi_{\vec{x},\vec{y},\vec{\mu},c} & \frac{\left[\S_c\right]_{(b,\beta)(a,\alpha)}}{\D} \ket{ \psi_R^{\vec{x},\vec{y},\vec{\mu},c,a,\alpha} }\ket{ \psi_{\comp{R}}^{\vec{x},\vec{y},\vec{\mu},c,b,\beta} },
	\end{align}
	where $\S_c$ is the connected $\S$-matrix defined in \cref{eqn:connectedS}. The indices $\vec{x}, \vec{y}, \vec{\mu}$ are as in \cref{eqn:simpleregionpartition}, $b$ labels the loop encircling $R$, while $a$ is the loop within $R$, and $c$ is the total charge crossing the boundary (the top label in \cref{eqn:treeA}). The reduced state on $R$ is
	\begin{alignat}{2}
		\rho_R                        & =\sum_{
		\substack{\vec{x},\vec{y},\vec{\mu}                                                                                                                                                                                                                      \\
		a_1,\alpha_1,a_2,\alpha_2,c}} & \frac{\Pr[\vec{x},\vec{y},\vec{\mu}|c]}{\D^2} \left[\S_c^\dagger\S_c\right]_{(a_2,\alpha_2)(a_1,\alpha_1)} \ketbra{\psi_R^{\vec{x},\vec{y},\vec{\mu},a_1,\alpha_1,c}}{\psi_R^{\vec{x},\vec{y},\vec{\mu},a_2,\alpha_2,c}} \\
		                              & =\sum_{
		\substack{\vec{x},\vec{y},\vec{\mu}                                                                                                                                                                                                                      \\
		a_1,\alpha_1,a_2,\alpha_2,c}} & \frac{\prod_{j\leq n} d_{x_j}}{d_{c}\D^{2n}} \left[\S_c^\dagger\S_c\right]_{(a_2,\alpha_2)(a_1,\alpha_1)} \ketbra{\psi_R^{\vec{x},\vec{y},\vec{\mu},a_1,\alpha_1,c}}{\psi_R^{\vec{x},\vec{y},\vec{\mu},a_2,\alpha_2,c}}.
	\end{alignat}
	To compute the entropy of this state, it is convenient to diagonalize it. Denote the eigenvalues of $\S_c^\dagger\S_c$ by $\{\lambda_c\}$. By a unitary change of basis, we have
	\begin{align}
		U\rho_R U^\dagger & =
		\sum_{\substack{\vec{x},\vec{y},\vec{\mu}, \\
				c,\lambda_c}}  \frac{\prod_{j\leq n} d_{x_j}}{d_{c}\D^{2n}} \lambda_c\ketbra{\varphi^{\vec{x},\vec{y},\vec{\mu},c}_{\lambda_c}},
	\end{align}
	with von Neumann entropy
	\begin{align}
		S_R & =nS[\C]-\sum_{c,\lambda_c}\frac{\lambda_c}{\D^2}\log\frac{\lambda_c}{d_c},
	\end{align}
	where \cref{lem:summingds,lem:sumlog,lem:TrScSc} are used. Combining with \cref{lem:bulkLW} completes the proof.
\end{proof}

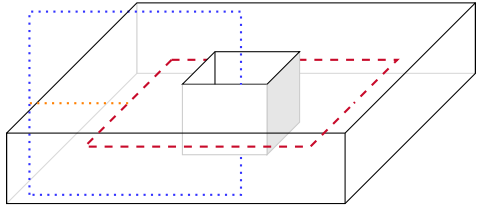
\begin{figure}
	\centering
	$\begin{array}{c}
			\includeTikz{torus}{
				\begin{tikzpicture}[scale=.75]
					\draw[gray!30] (6,-1.25,0) -- (0,-1.25,0) -- (0,-1.25,6);
					\draw[thick, dashed, darkred] (1,-0.625,1) -- (5,-0.625,1) -- (5,-0.625,3);
					\draw[fill=gray!20,draw=gray!40](3.75,-0.008,2.25) -- (3.75,-1.25,2.25) -- (3.75,-1.25,3.75) -- (3.75,-0.008,3.75);
					\draw[] (0,0,6) -- (0,-1.25,6) -- (6,-1.25,6) -- (6,-1.25,0) -- (6,0,0);
					\draw[] (6,-1.25,6) -- (6,0,6);
					\draw[] (0,-1.25,6) -- (0,0,6);
					\draw[] (6,-1.25,0) -- (6,0,0);
					\draw[gray!40] (0,-1.25,0) -- (0,0,0);
					\draw[fill=white] (2.25,0,2.25) -- (2.25,0,3.75) -- (3.75,0,3.75) -- (3.75,0,2.25) -- cycle;
					\draw[] (2.25,0,2.25) -- (2.25,-1,2.25);
					\draw[thick, dotted, blue!80] (-.75,-2.25,3) -- (-.75,1,3) -- (3,1,3)-- (3,-2.25,3) -- cycle;
					\draw[fill=white, draw=gray!40] (2.25,-0.008,3.75) -- (2.25,-1.25,3.75) -- (3.75,-1.25,3.75) -- (3.75,-0.008,3.75);
					\draw[] (0,0,0) -- (0,0,6) -- (6,0,6) -- (6,0,0) -- cycle;
					\draw[thick, dashed, darkred] (1,-0.625,1) -- (1,-0.625,5) -- (5,-0.625,5) -- (5,-0.625,3);
					\draw[thick,dotted,orange] (1,-0.625,3)--(-.75,-0.625,3);
				\end{tikzpicture}
			}
		\end{array}$
	\caption{When the region $R$ is not simply connected, the computation of entropy is more subtle. There is additional entanglement in the system due to intersecting loops that cannot be created in $R$ or $\comp{R}$ separately. This is not witnessed by the configuration of strings on the interface.}
	\label{fig:toroid_config}
\end{figure}

\begin{conjecture}
	Let $\C$ be a unitary premodular category $\C$, and define the connected $\S$-matrix via its matrix elements
	\begin{align}
		\left[\S_c\right]_{(a,\alpha),(b,\beta)} & =\frac{1}{\D}
		\begin{array}{c}
			\includeTikz{ConnectedSmatrix}{}
		\end{array}.
	\end{align}
	The connected $\S$-matrix obeys
	\begin{align}
		\sum_{c,\lambda_c}\frac{\lambda_c}{\D^2}\log \frac{\lambda_c}{d_c} & =\log\D_{\mug{\C}}^2,\label{eqn:conjecture}
	\end{align}
	where $\{\lambda_c\}$ are the eigenvalues of $\S_c^\dagger \S_c$, and $\mug{\C}$ is the M\"uger center of $\C$.
\end{conjecture}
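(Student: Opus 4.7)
The plan is to interpret the left-hand side of \cref{eqn:conjecture} as a shifted trace entropy, derive an explicit form for $M_c := \S_c^\dagger\S_c$ diagrammatically, and collapse the sum using the M\"uger-center rule of \cref{lem:sumS}. To set this up, I would recast the conjecture as
$$\frac{1}{\D^2}\sum_c \Tr\!\big[M_c\log(M_c/d_c)\big] = \log\D^2_{\mug{\C}};$$
\cref{lem:TrScSc} guarantees $\sum_c \Tr M_c = \D^2$, so $\{\lambda_c/\D^2\}$ is a probability distribution on eigenvalues across $c$, and the expression is a Shannon-type quantity shifted by $-\log d_c$ in the $c$-th block. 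The goal therefore reduces to identifying the spectrum of each $M_c$.

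Next I would derive an explicit expression for $M_c$. Stacking the defining diagram of \cref{eqn:connectedS} with its adjoint and contracting the middle index produces a closed diagram with two linked loops ($a$ and $b$) joined by a pair of $c$-edges. Using \cref{cor:premodulartrap} to replace the linked loops by scalar $\S$-matrix factors, I expect to arrive at a schematic factorisation
$$[M_c]_{(a,\alpha)(b,\beta)} = \frac{d_c}{\D^2\sqrt{d_a d_b}}\sum_x d_x\,\S_{x,\bar a}\,\S^*_{x,b}\,K^c_{x;\alpha,\beta},$$
where $K^c_{x;\alpha,\beta}$ is a fusion-vertex overlap determined by $F$-moves on the $c$-strands. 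Applying \cref{lem:productS} to combine $\S_{x,\bar a}\S^*_{x,b}$ into an $\S$-matrix element of $a\otimes\bar b$, and then \cref{lem:sumS} to sum over $x$, should restrict the non-zero contributions to $a\otimes\bar b\in\mug{\C}$. The non-zero eigenvalues of $M_c$ would then be of the form $d_c\cdot d_z$ with $z\in\mug{\C}$, and their multiplicities controlled by $K^c$.

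With such a spectral structure in hand, one has $\log(\lambda_c/d_c) = \log d_z$, and summing the contributions $d_c d_z\cdot(\text{mult.})/\D^2$ over $c$ and the non-zero eigenvalues should redistribute the factor $d_c$ into the fusion multiplicities that reconstruct $\sum_{z\in\mug{\C}}d_z^2 = \D^2_{\mug{\C}}$, producing the advertised $\log\D^2_{\mug{\C}}$. Sanity checks: for $\C$ modular, $\mug{\C}=\vvec{}$ forces all non-zero eigenvalues to be $d_c\cdot 1$, which makes $\log(\lambda_c/d_c)=0$ and the entire sum vanishes; for $\C$ symmetric, every $z\in\C$ contributes and the sum must fill out the full $\log\D^2$, matching the existing calculation of \onlinecite{bullivant2016entropic}.

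The main obstacle is the factorised expression for $M_c$ in the middle step, which amounts to a generalisation of \cref{lem:productS} to the connected $\S$-matrix for $c\neq 1$. In the symmetric, modular, and pointed settings the braiding is scalar-valued and this identity follows directly from ribbon-equation manipulations; for properly premodular $\C$ one must insert a resolution of the identity along each $c$-strand and control the resulting $F$-matrix entries. Establishing such an identity uniformly — or at least showing that the \emph{spectrum} of $M_c$ (rather than its full matrix) depends only on M\"uger-center data — is what I expect to be the decisive technical difficulty, and is presumably why the authors resort to case-by-case verification through rank $5$.
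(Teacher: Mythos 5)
Your overall framing is faithful to the paper's situation: the statement is left as a conjecture there, established only for the families in \cref{thm:WWentropyexamples} using exactly the tools you name (\cref{cor:premodulartrap}, together with \cref{lem:productS,lem:sumS,lem:TrScSc}), and your admission that a factorised form of $M_c=\S_c^\dagger\S_c$ for general $c$ is the unproven crux is accurate. The genuine problem is your intermediate spectral ansatz, which is wrong, and wrong precisely in cases the paper \emph{does} prove. You posit nonzero eigenvalues of the form $\lambda=d_c d_z$ with $z\in\mug{\C}$ simple, so that $\log(\lambda/d_c)=\log d_z$. Test this against the pointed case (Case 2 of the paper's proof of \cref{thm:WWentropyexamples}): there $\S_c=0$ for $c\neq 1$, and $[\S_1^\dagger\S_1]_{ab}=\sum_{z\in\mug{\C}}N_{a\dual{b}}^{z}d_z=\indicator{a\otimes\dual{b}\in\mug{\C}}$ is a direct sum of all-ones blocks indexed by cosets of the M\"uger subgroup, whose nonzero eigenvalues are $\D^2_{\mug{\C}}$ with multiplicity $[A:A']$ --- not $1$. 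Under your ansatz every $d_z=1$, so $\log(\lambda/d_c)=0$ and the left side of \cref{eqn:conjecture} would vanish, contradicting the proven value $\log\D^2_{\mug{\C}}$ (e.g.\ $\log 2$ for $\vvectwist{\ZZ{2}}{1,1}$, which is symmetric and pointed). The symmetric case fails identically: the unique nonzero eigenvalue is $\D^2$ at $c=1$, i.e.\ $d_c\D^2_{\mug{\C}}$, which is not $d_c d_z$ for any simple $z$. The pattern consistent with \emph{every} case the paper proves is $\lambda\in\{0,\,d_c\D^2_{\mug{\C}}\}$: the full M\"uger weight $\sum_{z\in\mug{\C}}d_z^2$ enters each nonzero eigenvalue collectively, not one $d_z$ per eigenvalue. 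Your closing step, in which multiplicities ``reconstruct'' $\sum_z d_z^2$, would require an identity of the form $\sum_i p_i\log d_{z_i}=\log\sum_{z\in\mug{\C}}d_z^2$ for a probability vector $p_i$ --- a weighted mean of $\log d_z$ equal to the logarithm of a sum of squares --- which is false already for any pointed category with nontrivial M\"uger center.

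Two further remarks. First, the correctly repaired target for a general argument is the spectral statement $\lambda\in\{0,\,d_c\D^2_{\mug{\C}}\}$ (or any statement making $\lambda/d_c$ constant across all nonzero eigenvalues and all $c$): granting it, \cref{lem:TrScSc} forces $\sum_{c,\lambda_c}\lambda_c=\D^2$, so the weights sum to one and \cref{eqn:conjecture} drops out immediately; this is in effect what each of the paper's case-by-case proofs verifies (symmetric and modular via \cref{cor:premodulartrap}, pointed via the coset block structure above, rank $\leq 5$ by explicit computation, and the Tambara-Yamagami-type family by exhibiting two eigenvectors of eigenvalue $\D^2_{\mug{\C}}$ and counting the trace). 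Second, a smaller technical repair: the product $\S_{x,\dual{a}}\S^*_{x,b}$ cannot be fed directly into \cref{lem:productS}, which multiplies two entries sharing their \emph{second} index; you first need the premodular symmetries $\S_{a,b}=\S_{b,a}$ and $\S^*_{a,b}=\S_{\dual{a},b}$, which the paper imports from \onlinecite{kitaev2006anyons} when it performs the analogous manipulations in its appendix.
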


We conjecture that \cref{eqn:conjecture} holds in general, however we are currently unable to compute the spectrum of $\S_c$ beyond the families outlined in \cref{thm:WWentropyexamples}.
\begin{thm}\label{thm:WWentropyexamples}
	For a Walker-Wang model defined by a unitary premodular category of one of the following types:
	\begin{itemize}
		\item $\C=\cat{A}\boxtimes\cat{B}$, where $\cat{A}$ is symmetric and $\cat{B}$ is modular~\cite{bullivant2016entropic},
		\item $\C$ pointed,
		\item $\rk{\C}<6$ and multiplicity free,
		\item $\rk{\C}=\rk{\mug{\C}}+1$ and $d_x=\D_{\mug{\C}}$, where $x$ is the additional object (as a special case, $\C$ is a Tambara-Yamagami category~\cite{Tambara_1998,Siehler}),
	\end{itemize}
	then \cref{eqn:conjecture} holds. As a consequence, the topological entanglement entropy (defined using the regions in \cref{fig:WWregionsblk}) in the bulk is given by
	\begin{align}
		\WWTEE & =\log\D_{\mug{\C}}^2,\label{eqn:BodyWalkerWangBulkTEE}
	\end{align}
	where $\mug{\C}$ is the M\"uger center of $\C$.
	As special cases, this includes
	\begin{align}
		\WWTEE_{\text{modular}}   & =0         \\
		\WWTEE_{\text{symmetric}} & =\log \D^2
	\end{align}
	We conjecture that \cref{eqn:BodyWalkerWangBulkTEE} holds in generality. Physically, this is seen by noting that the particle content of the bulk Walker-Wang model is given by the M\"uger center $\mug{\cdot}$~\cite{ZWang}.
	\begin{proof}
		Provided in Appendix~\hyperref[thm:WWentropyexamples_pf]{\ref*{app:SN_results}}.
	\end{proof}
\end{thm}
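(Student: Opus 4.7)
The strategy is to reduce every listed case to the sharper structural claim that, for each simple $c$, every nonzero eigenvalue of $\S_c^\dagger\S_c$ equals $d_c\,\D^2_{\mug{\C}}$. Granting this, \cref{lem:TrScSc} gives $\sum_{c,\lambda_c}\lambda_c=\sum_c\tr\S_c^\dagger\S_c=\D^2$, and only nonzero eigenvalues contribute to the sum of interest, each with the same logarithm, so
\begin{align*}
\sum_{c,\lambda_c}\frac{\lambda_c}{\D^2}\log\frac{\lambda_c}{d_c} \;=\; \frac{\log\D^2_{\mug{\C}}}{\D^2}\sum_{c,\lambda_c}\lambda_c \;=\; \log\D^2_{\mug{\C}}.
\end{align*}
Thus the work in each case is to pin down the spectrum of $\S_c^\dagger\S_c$.

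The uniform technique I would employ is to stack two copies of the diagram defining $[\S_c]$ in \cref{eqn:connectedS}, recognise the internal $b$-loop sum, and collapse it via \cref{cor:premodulartrap} to an expression supported only on $\mug{\C}$-labels. Combined with \cref{lem:productS,lem:sumS}, this should express $\S_c^\dagger\S_c$, after a unitary change of basis on the fusion spaces $\C(\dual a\otimes a,c)$, as a direct sum of rank-one blocks indexed by simple objects of $\mug{\C}$, from which the single nonzero eigenvalue per block can be read off. The $c=1$ subcase is a useful sanity check: it recovers the usual $\S$-matrix and \cref{lem:sumS} forces $[\S_1^\dagger\S_1]_{a,b}=(d_a d_b/\D^2)\,\indicator{a,b\in\mug{\C}}\,\D^2_{\mug{\C}}$, a rank-one projector with eigenvalue $\D^2_{\mug{\C}}=d_1\,\D^2_{\mug{\C}}$.

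Each listed class then provides enough extra structure to complete the spectral computation. For $\C=\cat{A}\boxtimes\cat{B}$ with $\cat{A}$ symmetric and $\cat{B}$ modular, the connected $\S$-matrix factorises as $\S^{\C}_{(c_1,c_2)}=\S^{\cat{A}}_{c_1}\otimes\S^{\cat{B}}_{c_2}$; modularity of $\cat{B}$ forces the nontrivial contribution to $c_2=1$ with $\S^{\cat{B}}_1$ unitary, and $\S^{\cat{A}}_{c_1}$ is a scalar multiple of a rank-one projector, so tensoring yields the required spectrum $d_c\D^2_{\cat{A}}=d_c\D^2_{\mug{\C}}$. For pointed $\C$, all quantum dimensions equal one, fusion is a finite abelian group $G$, and $[\S_c]$ becomes an explicit bicharacter on $G$; orthogonality over the quotient $G/\mug{G}$ gives the spectrum directly. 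For $\rk{\C}=\rk{\mug{\C}}+1$ with $d_x=\D_{\mug{\C}}$, the identities $d_x^2=\D^2_{\mug{\C}}$, $\D^2=2\D^2_{\mug{\C}}$, and $x\otimes x\subseteq\mug{\C}$ reduce the eigenvalue computation to a short diagrammatic check using the hexagon and ribbon relations; Tambara-Yamagami categories sit inside this class. Finally, the low-rank case ($\rk{\C}\leq 5$, multiplicity free) proceeds by exhaustive verification against the classification encoded in the attached Mathematica file~\cite{premoddata}.

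The main obstacle is precisely what prevents a uniform proof of the conjecture: the premodular trap pins down the \emph{support} of $\S_c^\dagger\S_c$, but does not, on its own, force the nonzero eigenvalues within a given $c$-block to coincide in the properly premodular setting. In each listed class, this degeneracy is imposed by extra structure (factorisation, character orthogonality, or a finite classification), whereas a general argument appears to require genuinely new categorical input, for instance a Witt-style decomposition of $\C$ relative to $\mug{\C}$ and its centraliser. I therefore expect the bulk of the technical work to lie in the four case analyses, resting on the common structural skeleton set up in the first step.
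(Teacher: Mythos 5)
Your global reduction is sound, and it is in fact a clean uniform restatement of what the paper does case by case: in every family the paper verifies that the nonzero eigenvalues of $\S_c^\dagger\S_c$ are exactly $d_c\D^2_{\mug{\C}}$, whereupon \cref{lem:TrScSc} gives \cref{eqn:conjecture}. Your Cases 2 and 3 also agree with the paper (coset block structure for pointed categories; machine verification below rank $6$). However, your ``sanity check'' at $c=1$ is false, and it exposes a misreading of the key lemmas. The correct general formula, which the paper derives from \cref{lem:productS,lem:sumS} in its Cases 2 and 4, is
\begin{equation}
	\left[\S_1^\dagger\S_1\right]_{ab}=\sum_{z\in\mug{\C}}N_{a\dual{b}}^{z}\,d_z,
\end{equation}
so every diagonal entry satisfies $[\S_1^\dagger\S_1]_{aa}\geq N_{a\dual{a}}^{1}=1$: the matrix is \emph{never} supported only on $\mug{\C}\times\mug{\C}$, and it is rank one only in the symmetric case. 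For $\C$ modular it is the identity, and in the pointed case it is block diagonal with $[G:G']$ all-ones blocks ($G'$ labeling $\mug{\C}$) --- which is your own Case 2 and directly contradicts your rank-one claim. Even internally, the matrix you wrote down has eigenvalue $\D_{\mug{\C}}^4/\D^2$, not $\D^2_{\mug{\C}}$. The root cause: \cref{cor:premodulartrap} restricts the \emph{internal} label of the collapsed diagram to $\mug{\C}$ (see the sum over $x\in\mug{\C}$ in the paper's \cref{eqn:SdotS}), not the external indices $(a,b)$, and \cref{lem:sumS} constrains only $d$-weighted row sums, not entrywise support.

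Case 1 has the two factors backwards: it is the \emph{symmetric} factor $\cat{A}$ that kills everything with $c_1\neq 1$ (for symmetric categories $\S_c=0$ unless $c=1$), whereas for modular $\cat{B}$ the connected matrices $\S_{c_2}$ are nonzero for \emph{every} $c_2$, with $\S_{c_2}^\dagger\S_{c_2}$ diagonal with eigenvalues $d_{c_2}$. Your claim that modularity forces $c_2=1$ would leave a total trace of $\D^2_{\cat{A}}\,\rka(\cat{B})$ rather than $\D^2=\D^2_{\cat{A}}\D^2_{\cat{B}}$, contradicting \cref{lem:TrScSc} and your own $c$-dependent spectrum $d_c\D^2_{\mug{\C}}$; the corrected factorization gives eigenvalues $\D^2_{\cat{A}}d_{c_2}=d_{(1,c_2)}\D^2_{\mug{\C}}$ as required. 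Finally, Case 4 as written is not a proof: the inclusion $x\otimes x\subseteq\mug{\C}$ is neither established nor needed, and ``a short diagrammatic check using the hexagon and ribbon relations'' names no mechanism. The paper's argument is elementary linear algebra that fits squarely inside your own framework: $v_a=\indicator{a\in\mug{\C}}d_a$ and $w_a=d_a$ are both eigenvectors of $\S_1^\dagger\S_1$ with eigenvalue $\D^2_{\mug{\C}}$, giving orthogonal eigenvectors $v$ and $w-v$; since $\D^2=\D^2_{\mug{\C}}+d_x^2=2\D^2_{\mug{\C}}$, these two eigenvalues already saturate $\sum_c\Tr\S_c^\dagger\S_c=\D^2$, forcing all remaining eigenvalues (for every $c$) to vanish. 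With Case 1 corrected and Case 4 replaced by this trace-saturation argument, your proposal coincides with the paper's proof.
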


\begin{examples*}
	Recall the examples from \cref{sec:examples}. As discussed in \cref{sec:examples_phys}, these label four distinct loop-gas models in (3+1)-dimensions, the bosonic and fermionic toric code models, and two semion models. All four input categories are pointed, so we can apply \cref{thm:WWentropyexamples} to obtain the topological entanglement entropy. The first two models are symmetric, so $\WWTEE=\log 2$ for both. The inputs to the semion models are modular, so the bulk is trivial~\cite{von2013three}. In this case $\WWTEE=0$.
\end{examples*}
% !TeX spellcheck = en_US
% This line sets the project root file.
% !TEX root = ../WW_EntanglementEntropy.tex
%

\section{Boundary entropy of topological loop-gasses}\label{sec:boundaryentropy}

We now turn to the computation of the boundary diagnostics from \cref{sec:entropydiagnostics}. As before, we begin with Levin-Wen models.

\subsection{Levin-Wen models}

\begin{thm}[Topological entropy of (2+1){-dimensional} Levin-Wen models at a boundary]\ \\
	Consider the regions shown in \cref{fig:LWregionsbnd}. The Levin-Wen model defined by unitary spherical fusion category $\C$, with boundary specified by an indecomposable, strongly separable, special, Frobenius algebra $A\in\C$ has boundary entropy
	\begin{align}
		\LWbndTEE & =\log\D^2,
	\end{align}
	where $\D$ is the total quantum dimension of $\C$.

	\label{thm:LWbnd}
\end{thm}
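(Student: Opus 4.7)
The plan is to prove a boundary analog of \cref{lem:bulkLW} and then assemble the four terms in the definition of $\LWbndTEE$ exactly as in the proof of \cref{thm:LWbulk}. The essential new ingredient is that near the physical boundary, a string labeled by a simple object $a\in A$ can terminate on the algebra rather than being forced to close, so the net flux crossing a piece of entanglement interface that meets the boundary at both endpoints need not be trivial.

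To formalize this, I will classify the connected components of the entanglement interface of a region $R$ into two types: \emph{boundary arcs}, whose two endpoints lie on the physical boundary, and \emph{closed bulk loops}, which avoid the boundary entirely. The closed-bulk-loop contribution is handled exactly as in \cref{lem:bulkLW}, so the task is to treat a single boundary arc. The key claim is: for a simply-connected region $R$ whose entanglement interface consists of one boundary arc crossing $n$ edges, the Schmidt decomposition takes the form of \cref{eqn:simpleregionpartition} but enlarged by a label $a\in A$ for the total flux through the arc, and the resulting probability of a labeled tree as in \cref{eqn:treeA} is
\begin{align}
\Pr[\vec{x},\vec{y},\vec\mu,a] \;=\; \frac{d_a}{d_A}\,\Pr[\vec{x},\vec{y},\vec\mu\mid a] \;=\; \frac{\prod_{j\le n}d_{x_j}}{d_A\,\D^{2(n-1)}},
\end{align}
using \cref{lem:prtree} for the conditional factor, and the algebra weight $d_a/d_A$ for the unconditional distribution of $a$, which comes from the strongly-separable special Frobenius normalization $\beta_A=d_A=\sum_{a\in A}d_a$. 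Plugging this into $S_R=-\tr\rho_R\log\rho_R$ and applying \cref{lem:summingds,lem:sumlog} exactly as in the step \cref{EqnLine:ApplyLemmas} of \cref{lem:bulkLW} gives the boundary-arc entropy
\begin{align}
S_R \;=\; n\,S[\C] \;-\; \log\D^2 \;+\; \log d_A,
\end{align}
i.e.\ each boundary arc contributes $\log d_A-\log\D^2$ on top of the area-law piece, while each closed bulk loop contributes $-\log\D^2$ as in the bulk lemma. Superposition over disjoint interface components and disjoint sub-regions then yields the general formula $S_R=n_R S[\C]+k\log d_A-(k+m)\log\D^2$, where $k$ is the number of boundary arcs and $m$ the number of closed bulk loops in $\partial R$.

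It remains to count $(k,m)$ for the four regions of \cref{fig:LWregionsbnd}. Since $BC$ is a single connected rectangle whose only entanglement interface is one arc running from one point of the physical boundary to another, $(k,m)_{BC}=(1,0)$; the same holds for $CD$, which is connected through the $C$-$D$ shared edge, giving $(k,m)_{CD}=(1,0)$. The region $B$ is two disjoint components each with one boundary arc, so $(k,m)_B=(2,0)$; and $D$ lies strictly in the bulk with one closed-loop interface, $(k,m)_D=(0,1)$. As in the bulk proof, the edge-counts $n_R$ cancel in $\LWbndTEE=S_{BC}+S_{CD}-S_B-S_D$ by inclusion--exclusion, leaving
\begin{align}
\LWbndTEE \;=\; (\log d_A-\log\D^2)+(\log d_A-\log\D^2)-2(\log d_A-\log\D^2)-(-\log\D^2)\;=\;\log\D^2.
\end{align}

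I expect the only substantive obstacle to be justifying the boundary-arc probability $d_a/d_A$ from first principles: this relies on the specific way the strongly-separable special Frobenius algebra $A$ enters the boundary Hamiltonian of~\onlinecites{1706.00650,1706.03329}, so that a boundary-terminating string labeled $a\in A$ carries amplitude $\sqrt{d_a}$ (cf.\ the diagrammatic identity \cref{eqn:algnorm} with $\beta_A=d_A$). Once this diagrammatic input is in place, the remaining manipulations are straightforward adaptations of \cref{lem:prtree,lem:summingds,lem:sumlog}, and the cancellation giving $\log\D^2$ is purely combinatorial; note in particular that $d_A$ cancels from the final answer, reflecting the fact that $\LWbndTEE$ detects the condensable bulk anyons rather than the particular Lagrangian algebra chosen.
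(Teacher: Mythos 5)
Your proposal is correct and follows essentially the same route as the paper: its \cref{lem:bndLW} establishes exactly your per-component formula (the paper's $b_1/2$ and $b_0$ are your $k$ and $k+m$), using the same $\Pr[a\in A]=d_a/d_A$ weight together with \cref{lem:prtree,lem:summingds,lem:sumlog}, and then applies it to the regions of \cref{fig:LWregionsbnd} with the same counts $(1,0)$, $(1,0)$, $(2,0)$, $(0,1)$. The one point you flag as needing first-principles justification --- the $d_a/d_A$ distribution of the boundary flux --- is likewise asserted rather than derived in the paper's proof, so you have lost nothing relative to it.
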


\begin{examples*}
	Recall the examples from \cref{sec:examples}. As discussed in \cref{sec:examples_phys}, these label two distinct loop-gas models in (2+1)-dimensions, the toric code, and the double semion. The toric code has two possible boundary conditions, while the double semion only allows for the trivial boundary. All boundaries have $\LWbndTEE=\log 2$.
\end{examples*}

Recall that a boundary for a Levin-Wen model defined by $\C$ is specified by an algebra object $A\in\C$. The algebra encodes the strings that can terminate on the boundary. This interpretation leads us to the following lemma.

\begin{lemma}[Entropy of (union of) simply connected regions, with boundary]\label{lem:bndLW}
	On a region $R$ consisting of the disjoint union of simply connected sub-regions, the entropy is
	\begin{align}
		S_R & =nS[\C]+\frac{b_1}{2}\log d_A-b_0\log\D^2,\label{eqn:WWtrivialent}
	\end{align}
	where $b_0$ is the number of disjoint interface components of $R$, $b_1$ is the number of points where the entanglement surface intersects the physical boundary, and $n$ is the number of links crossing the entanglement interface.

	\begin{proof}
		Consider a ball $R$ with $n$ sites along the interface, which is in contact with the boundary. Recall that in the bulk, the fusion of the strings crossing the boundary was required to be $1$. In the presence of the boundary, this conservation rule is modified, since loops can terminate. All that is now required is that the fusion is in $A$
		\begin{align}
			\begin{array}{c}
				\includeTikz{tree_bnd}{
					\begin{tikzpicture}
						\draw(0,0)--(1.75,1.75);
						\draw[dotted](1.75,1.75)--(2,2);
						\draw(2,2)--(3,3);
						\draw[blue!20] (2.755,2.755)--(3,3);
						\begin{scope}
							\clip (0,0)--(3,3)--(6,3)--(6,0)--(0,0);
							\draw(1,0)--(0,1);
							\draw(2,0)--(0,2);
							\draw(3,0)--(0,3);
							\draw(4.5,0)--(0,4.5);
							\draw(5.5,0)--(0,5.5);
						\end{scope};
						\node[below] at (0,0) {$x_1$};
						\node[below] at (1,0) {$x_2$};
						\node[below] at (2,0) {$x_3$};
						\node[below] at (3,0) {$x_4$};
						\node[below] at (4.5,0) {$x_{n-1}$};
						\node[below] at (5.5,0) {$x_{n}$};
						\node[above right,blue!50] at (3,3) {$a\in A$};
						\node[above left] at (.75,.75) {$y_1$};
						\node[above left] at (1.25,1.25) {$y_2$};
						\node[above left] at (2.5,2.5) {$y_{n-2}$};
						\node[below] at (.5,.5) {\tiny{$\mu_1$}};
						\node[below] at (1,1) {\tiny{$\mu_2$}};
						\node[below] at (1.5,1.5) {\tiny{$\mu_3$}};
						\node[right] at (2.25,2.25) {\tiny{$\mu_{n-2}$}};
					\end{tikzpicture}
				}
			\end{array},\label{eqn:treeA_bnd}
		\end{align}

		The ground state can be decomposed as
		\begin{align}
			\ket{\psi} & =\sum_{\substack{\vec{x},\vec{y},\vec{\mu} \\
					a\in A}} \Phi_{\vec{x},\vec{y},\vec{\mu},a} \ket{ \psi_R^{\vec{x},\vec{y},\vec{\mu},a} }\ket{ \psi_{\comp{R}}^{\vec{x},\vec{y},\vec{\mu},a} }.\label{eqn:simpleregionpartition_bnd}
		\end{align}
		As before, the state $\ket{ \psi_R^{\vec{x},\vec{y},\vec{\mu},a} }$ includes any state that can be reached from \cref{eqn:treeA_bnd} by acting only on $R$. The reduced state on $R$ is
		\begin{align}
			\rho_R & =\sum_{\substack{\vec{x},\vec{y},\vec{\mu} \\a\in A}} |\Phi_{\vec{x},\vec{y},\vec{\mu},a}|^2 \ketbra{ \psi_R^{\vec{x},\vec{y},\vec{\mu},a} }\\
			       & =\sum_{\substack{\vec{x},\vec{y},\vec{\mu} \\a\in A}} \Pr[\vec{x},\vec{y},\vec{\mu}|a]\Pr[a] \ketbra{ \psi_R^{\vec{x},\vec{y},\vec{\mu},a} },
		\end{align}
		where $\Pr[\vec{x},\vec{y},\vec{\mu}|a]$ is the probability of the labeled tree, given that $\vec{x}$ fuses to $a$, and $\Pr[a\in A]=d_a/d_A$. Therefore,
		\begin{align}
			\rho_R & =\sum_{\substack{\vec{x},\vec{y},\vec{\mu} \\a\in A}} \frac{\prod_{j\leq n} d_{x_j}}{\D^{2(n-1)}d_A} \ketbra{ \psi_R^{\vec{x},\vec{y},\vec{\mu},a} }.
		\end{align}
		Applying \cref{lem:summingds,lem:sumlog} completes the proof for this region.
		It is straightforward to check that this holds on each sub-region of $R$, where \cref{lem:bulkLW} is used for any bulk sub-region.
	\end{proof}
\end{lemma}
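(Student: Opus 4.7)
The plan is to closely mirror the bulk argument of~\cref{lem:bulkLW}, making exactly one modification: at a component of the entanglement interface that meets the physical boundary, the total charge crossing the cut is no longer forced to be the unit object. Since any string whose label is a simple subobject of the algebra $A$ may terminate on the physical boundary, the top label of the canonical fusion tree~\cref{eqn:treeA} may now be any simple object $a\in A$ rather than only $a=1$. Orthogonality of fusion trees across the cut still pins the tree labels on both sides, producing a Schmidt decomposition of the same shape as~\cref{eqn:simpleregionpartition}, but with an additional sum over $a\in A$.

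For a single boundary-touching simply connected region $R$ with $n$ links crossing the entanglement interface, the reduced density matrix takes the form
\begin{equation}
\rho_R = \sum_{\vec{x},\vec{y},\vec{\mu},\,a\in A}\frac{d_a}{d_A}\,\Pr[\vec{x},\vec{y},\vec{\mu}\mid a]\,\ketbra{\psi_R^{\vec{x},\vec{y},\vec{\mu},a}},
\end{equation}
where the conditional $\Pr[\vec{x},\vec{y},\vec{\mu}\mid a]=\prod_j d_{x_j}/(d_a\D^{2(n-1)})$ comes from~\cref{lem:prtree}, and the marginal $\Pr[a]=d_a/d_A$ follows from the special Frobenius normalization~\cref{eqn:algnorm} applied to the closed $A$-loop running along the physical boundary. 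Splitting the $\log$ in the von Neumann entropy into a dimension-weighted piece (to which~\cref{lem:sumlog} applies) and a constant piece $\log(\D^{2(n-1)}d_A)$ (to which~\cref{lem:summingds} applies, together with $\sum_{a\in A}d_a=d_A$) then gives $S_R = nS[\C] - \log\D^2 + \log d_A$, matching the claim for $b_0=1$ and $b_1=2$ because an entanglement arc meets the physical boundary at its two endpoints.

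For a general disjoint union of simply connected sub-regions, the Schmidt decomposition factorizes across components of the interface, so the entropies add. A bulk-only component contributes $n_i S[\C] - \log\D^2$ by~\cref{lem:bulkLW}, while a boundary-touching component contributes the additional $+\log d_A$ computed above. Since each boundary-touching component supplies exactly two of the $b_1$ intersection points, the collective algebra contribution is $\tfrac{b_1}{2}\log d_A$, establishing~\cref{eqn:WWtrivialent}. The main subtlety, and the only step not immediate from the bulk proof, is justifying the marginal weighting $\Pr[a]=d_a/d_A$: I would obtain this by computing the ground-state normalization within each $a$-sector of the Schmidt decomposition, where evaluation of a closed algebra loop encircling the boundary-touching disk gives $d_A$ by~\cref{eqn:algnorm} while its restriction to a simple summand $a\in A$ gives $d_a$ by~\cref{eqn:bubble}.
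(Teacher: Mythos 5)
Your proof follows the paper's argument essentially verbatim: the same modified fusion-tree decomposition with top label $a\in A$, the same weights $\Pr[a]\Pr[\vec{x},\vec{y},\vec{\mu}\mid a]=\prod_{j\leq n} d_{x_j}/(d_A\D^{2(n-1)})$, the same application of \cref{lem:summingds,lem:sumlog}, and the same additivity over interface components (with \cref{lem:bulkLW} handling bulk-only pieces). The only difference is that you justify the marginal $\Pr[a]=d_a/d_A$ via the strongly separable Frobenius normalization \cref{eqn:algnorm}, a step the paper asserts without argument, which is a minor strengthening rather than a different route.
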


Applying \cref{lem:bndLW} to the regions in \cref{fig:LWregionsbnd} completes the proof of \cref{thm:LWbnd}.

We can make sense of this halving of the entropy by considering folding the plane. Suppose we fold the model in \cref{fig:LWregionsblk} so that it resembles \cref{fig:LWregionsbnd}. This turns the bulk of a model defined by $\C$ to a boundary of a model labeled by $\C\boxtimes\C^{\rev}$. The quantum dimension of the folded theory is $\D_{\C\boxtimes\C^{\rev}}=\D_{\C}^2$, so the bulk diagnostic for $\C$ matches the boundary diagnostic computed for this folded theory.

\subsection{Walker-Wang models}

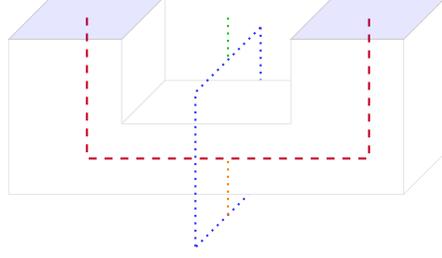
\begin{figure}
	\centering
	$\begin{array}{c}
			\includeTikz{bound_torus_A}{
				\begin{tikzpicture}[scale=.75]
					\draw[black!10] (0,0,0) -- (0,0,2) -- (2,0,2) -- (2,0,0) -- cycle;
					\draw[black!10] (2,0,2) -- (2,-1.5,2)-- (2,-1.5,0) -- (2,0,0);
					\draw[thick, dotted, blue!80] (3.5,-0.75,-0.5) -- (3.5,-3.5,-0.5) -- (3.5,-3.5,2.5);
					\draw[draw=black!10,fill=white] (2,-1.5,0) -- (5,-1.5,0) -- (5,-1.5,2) -- (2,-1.5,2) -- cycle;
					\draw[draw=black!10,fill=white] (0,0,2) -- (0,-2.75,2)-- (7,-2.75,2) -- (7,0,2) -- (5,0,2) -- (5,-1.5,2) -- (2,-1.5,2) -- (2,0,2) -- cycle;
					\draw[black!10] (7,-2.75,2) -- (7,-2.75,0) -- (7,0,0);
					\draw[black!10] (5,0,0) -- (5,0,2) -- (7,0,2) -- (7,0,0) -- cycle;
					\draw[thick, dashed, darkred] (1,0,1) -- (1,-2.5,1)-- (6,-2.5,1)-- (6,0,1);
					\draw[fill=blue,opacity=.1] (5,0,0) -- (5,0,2) -- (7,0,2) -- (7,0,0) -- cycle;
					\draw[fill=blue, opacity = 0.1] (0,0,0) -- (0,0,2) -- (2,0,2) -- (2,0,0) -- cycle;
					\draw[thick, dotted, blue!80] (3.5,-0.75,-0.5) -- (3.5,-0.75,2.5) -- (3.5,-3.5,2.5);
					\draw[thick,dotted,orange] (3.5,-3.5,1)--(3.5,-2.5,1);
					\draw[thick,dotted,nicegreen] (3.5,0,1)--(3.5,-0.75,1);
				\end{tikzpicture}
			}\end{array}$
	\caption{When the region $R$ is in non-simple contact with a boundary on which strings can terminate, the computation of entropy is more subtle. There is additional entanglement in the system due to intersecting loops that cannot be created in $R$ (red, dashed) or $\comp{R}$ (blue, dotted) separately. 	The red (internal) strings can terminate on the boundary. Also, the blue loop can emit a string which can terminate on the boundary.
	}
	\label{fig:bound_torus_A}
\end{figure}

In (3+1)-dimensions, just like in (2+1)-dimensions, strings can terminate at the boundary. In addition, loops can interlock as discussed in \cref{ss:WWbulkpf}. In the vicinity of the boundary, these two effects can occur simultaneously as depicted in \cref{fig:bound_torus_A}.

For simply connected regions in contact with a boundary, we can apply \cref{lem:bndLW}, replacing $b_1/2$ with the number of lines where the region contacts the physical boundary. By applying the results so far, it is straightforward to check that the two diagnostics \cref{eqn:WWptdef} and \cref{eqn:WWloopdef} are related by
\begin{align}
	\WWbndTEELoop & =\WWbndTEEPoint+\log d_A^2-\log\D^2,
\end{align}
so we only need to consider $\WWbndTEEPoint$. We are currently unable to compute this in general, however in this section we prove the following results:
\begin{thm}\label{thm:wwbnd}
	For a Walker-Wang model defined by a unitary premodular category $\C$, the entropy diagnostic $\WWbndTEEPoint$ for a boundary labeled by an indecomposable, strongly separable, commutative, Frobenius algebra $A$ is given by
	\begin{align}\hspace*{-2mm}
		\WWbndTEEPoint & =
		\begin{cases}
			\log\D^2           & A=1,                                                                               \\
			\log\D^2-\log d_A  & \C\text{ symmetric},                                                               \\
			\log\D^2-2\log d_A & \C\text{ pointed and }\mug{C}\cap A=\{1\}.\text{ In particular }\C\text{ modular.} \\
			\log\D^2-\log d_A  & \C\text{ pointed and }\mug{C}\cap A=A.
		\end{cases}
	\end{align}
\end{thm}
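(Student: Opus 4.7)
The plan is to compute each regional entropy $S_B,\,S_D,\,S_{BC},\,S_{CD}$ for the regions of \cref{fig:WWregionsbnd_pt} in the Walker--Wang setting, following the strategy used for the bulk result in \cref{ss:WWbulkpf} and the Levin--Wen boundary result of \cref{lem:bndLW}. For each region $X$ I will Schmidt-decompose the ground state along the entanglement interface using three kinds of label: (i) a fusion tree $(\vec x,\vec y,\vec\mu)$ on the edges crossing the interface, with total charge $a$ constrained to the unit in the bulk and to $A$ wherever the interface meets the physical boundary; (ii) matrix elements of the connected $\S$-matrix $\S_c$ of \cref{def:commectedS} for bulk loops that non-trivially encircle $X$ and cannot be removed locally; and (iii) analogous data for loops whose encircling strand threads the physical boundary and therefore terminates in an algebra label.

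Summing over this basis with the probability weights from \cref{lem:prtree} and $\Pr[a\in A]=d_a/d_A$, and applying \cref{lem:summingds,lem:sumlog,lem:TrScSc}, each regional entropy will collapse to
\begin{align}
S_X \;=\; n_X\,S[\C]\;+\;\tfrac{1}{2}b_1^X\log d_A\;-\;b_0^X\log\D^2\;-\;\Sigma_X,
\end{align}
where $n_X$, $b_1^X$ and $b_0^X$ are the number of interface edges, the number of lines along which the entanglement surface meets the physical boundary, and the number of disjoint interface components respectively, and $\Sigma_X$ is a topological correction from loops non-trivially encircling $X$. By the geometry of \cref{fig:WWregionsbnd_pt} (mirroring that of \cref{fig:LWregionsbnd}), the alternating sums of $n_X$ and $b_1^X$ both vanish while that of $b_0^X$ equals $-1$, so that $\WWbndTEEPoint=\log\D^2-\Delta\Sigma$ with $\Delta\Sigma:=\Sigma_{BC}+\Sigma_{CD}-\Sigma_B-\Sigma_D$.

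It then remains to evaluate $\Delta\Sigma$ in each of the four cases. For $A=1$ strings cannot terminate on the boundary, so every encircling configuration is a purely bulk loop that must close away from the boundary; these contribute equally to $\Sigma_B$ and $\Sigma_{BC}$ and to $\Sigma_D$ and $\Sigma_{CD}$, so they cancel in the alternating sum, giving $\WWbndTEEPoint=\log\D^2$. For $\C$ symmetric the $\S$-matrix factorises as $\S_{a,b}=d_ad_b/\D$ and \cref{lem:productS,lem:sumS} reduce the loop sum to a single trace over $A$, producing $\Delta\Sigma=\log d_A$. For $\C$ pointed every quantum dimension equals one and $|\S_{a,b}|=1/\D$: when $A\cap\mug{\C}=A$, \cref{cor:premodulartrap} shows that algebra strings commute past any bulk loop so the boundary arc contributes independently as a single $\log d_A$; whereas when $A\cap\mug{\C}=\{1\}$, the only way for an algebra string to encircle $X$ is to also braid non-trivially with a bulk loop, which by \cref{lem:sumS} forces a doubled contribution $2\log d_A$.

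The principal obstacle is evaluating $\Delta\Sigma$ for an arbitrary premodular $\C$ and arbitrary algebra $A$: this requires understanding the joint encircling structure of $A$-arcs and $\mug{\C}$-loops and amounts to an $A$-conditioned analogue of \cref{lem:TrScSc} that we do not currently have. In each of the four cases of the theorem this evaluation collapses for a distinct reason --- absence of boundary arcs for $A=1$, rank-one structure of $\S$ for $\C$ symmetric, and trivial quantum dimensions together with the M\"uger-center condition for $\C$ pointed --- so that the relevant sums simplify to those already available in \cref{sec:preliminaries}.
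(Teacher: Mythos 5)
Your overall architecture matches the paper's: Schmidt-decompose along the interface with fusion-tree labels, constrain the total charge to lie in $A$ where the interface meets the physical boundary, encode noncontractible loops in generalized $\S$-matrix data, and reduce to $\WWbndTEEPoint=\log\D^2-\Delta\Sigma$ before a case analysis. However, there is a genuine gap: the object you call ``analogous data for loops whose encircling strand threads the physical boundary'' is never constructed, and all of the nontrivial content lives there. The paper defines a boundary $\S$-matrix $\left[\S_{c,d}\right]_{(b_0,b_1),(a_0,a_1)}$ (\cref{eqn:bndSmatrix2}) in which the boundary-terminating strand carries labels $a_0,a_1,d\in A$ glued by the multiplication morphism $m_{a_1,a_0}^{\dual{d}}$, with normalization fixed by strong separability (\cref{eqn:algnorm}); each entropy is then read off from the rank and spectrum of $\S_{c,d}^\dagger\S_{c,d}$. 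Without this construction your claims about $\Delta\Sigma$ cannot be verified. For instance, in the symmetric case the operative fact is that $\S_{c,d}^\dagger\S_{c,d}$ is proportional to $\indicator{c=d}$ times a rank-one matrix with eigenvalue $d_d/d_A$ --- an explicit computation using the separability normalization, not a consequence of \cref{lem:productS,lem:sumS} alone, though your answer $\Delta\Sigma=\log d_A$ is correct.

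The pointed case is where your heuristics would fail as proofs even though they land on the right numbers. The paper's mechanism is: for pointed $\C$ the boundary $\S$-matrix collapses to ordinary $\S$-matrix entries, and \cref{lem:productS,lem:sumS} give $\left[\S_{c,d}^\dagger\S_{c,d}\right]_{(a_2,a_3),(a_0,a_1)}\propto\indicator{c=d}\,\indicator{a_3\otimes\dual{a}_1\in\mug{\C}}$; one then needs the structural input that pointed braided categories have Abelian-group fusion with $A$ a twisted group algebra of a subgroup, so the indicator forces $a_3=a_1h$ with $h\in\mug{\C}\cap A$. When $\mug{\C}\cap A=\{1\}$ the reduced density matrix is fully diagonal in the pair $(d,a)\in A\times A$, i.e., $d_A^2$ equally weighted entries per tree sector: the extra $2\log d_A$ is classical mixing over the boundary-terminated loop label, not a ``doubled braiding contribution''. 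When $\mug{\C}\cap A=A$ the sum over $h$ produces coherences that collapse the rank in the $a$ label, reducing the computation to the symmetric case and the single $\log d_A$. Your gloss inverts the actual logic: transparency ($A\subseteq\mug{\C}$) yields \emph{more} coherence and hence \emph{less} entanglement correction, while the generic condition $\mug{\C}\cap A=\{1\}$ yields decoherence and the larger correction. Relatedly, your geometric bookkeeping (vanishing alternating sums of $n_X$ and $b_1^X$, alternating sum $-1$ for $b_0^X$) is asserted rather than checked against \cref{fig:WWregionsbnd_pt}, though it is consistent with the paper. To close the gap you need the explicit $\S_{c,d}$, the computation of its square, and the twisted-group-algebra structure, exactly as in the paper's proof of \cref{thm:wwbnd}.
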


\begin{examples*}
	Recall the examples from \cref{sec:examples}. As discussed in \cref{sec:examples_phys}, these label four distinct loop-gas models in (3+1)-dimensions, the bosonic and fermionic toric code models, and two semion models. All four input categories are pointed, so we can apply \cref{thm:wwbnd} to obtain the boundary topological entanglement entropy.

	The bosonic toric code is compatible with two distinct gapped boundary conditions, labeled by $A_0$ and $A_1$ (see \cref{sec:examples}), with $d_{A_0} = d_1 = 1$, and $d_{A_1} = d_1+d_x=2$. Since the input category is symmetric, $\mug{\C}\cap A_i=A_i$, so the entropy diagnostics are $\WWbndTEEPoint(A_0) = \log 2 -  \log 1=\log2$ and $\WWbndTEEPoint(A_1) = \log 2 -  \log 2=0$.

	For the remaining examples, only the boundary labeled by $A_0$ is compatible, and $\WWbndTEEPoint = \log2$ in all cases.
\end{examples*}

\begin{proof}
	To capture configurations like the one in \cref{fig:bound_torus_A}, we need new boundary $\S$-matrices resembling
	\begin{align}
		\frac{1}{\D}
		\begin{array}{c}
			\includeTikz{linkedSmatrixalgebra_1}{
				\begin{tikzpicture}[scale=.75]
					\centerarc[draw=white,double=black,ultra thick](-.75,0)(0:180:1);
					\centerarc[draw=white,double=black,ultra thick](.75,0)(180-60:360+45:1);% Syntax: [draw options] (center) (initial angle:final angle:radius)
					\centerarc[draw=white,double=black,ultra thick](-.75,0)(180:360:1);% Syntax: [draw options] (center) (initial angle:final angle:radius)
					\node[anchor=west,inner sep=.5] at(1.75,0) {\strut$a_0$};
					\node[anchor=west,inner sep=.5] at(.25,0) {\strut$b_0$};
					\node[anchor=east,inner sep=.5] at(-.25,0) {\strut$a_1$};
					\node[anchor=east,inner sep=.5] at(-1.75,0) {\strut$b_1$};
					\draw[thick] (-.75,-1)to[out=270,in=270](.75,-1);
					\node[anchor=east,inner sep=.5] at(-.75,1.15) {\strut$d$};
					\node[anchor=east,inner sep=.5] at(-.75,-1.25) {\strut$c$};
					\node[anchor=west,inner sep=.5] at(.75,-1.25) {\strut$\dual{c}$};
					\draw[thick] (-.75,1)--(-.75,1.5);
					\fill[blue!20] (-.75,1.5) circle (.1);
					\fill[blue!20] ($(.75,0)+({cos(180-60)},{sin(180-60)})$) circle (.1);
					\fill[blue!20] ($(.75,0)+({cos(360+45)},{sin(360+45)})$) circle (.1);
				\end{tikzpicture}
			}
		\end{array},\label{eqn:bndSmatrix1}
	\end{align}
	where the dots indicate where a string meets the boundary. We use boundary retriangulation invariance, as defined in \onlinecites{1706.00650,1706.03329} to evaluate this diagram on the ground space. Using this, we define the new $\S$-matrix elements as
	\begin{align}
		\left[\S_{c,d}\right]_{(b_0,b_1),(a_0,a_1)}:= &
		\frac{m_{a_1,a_0}^{\dual{d}}}{(d_{a_0}d_{a_1}d_{d})^{1/4}d_A\D}
		\begin{array}{c}
			\includeTikz{linkedSmatrixalgebra_2}{
				\begin{tikzpicture}[scale=.75]
					\centerarc[draw=white,double=black,ultra thick](-.75,0)(0:180:1);
					\centerarc[draw=white,double=black,ultra thick](.75,0)(90:360+90:1);% Syntax: [draw options] (center) (initial angle:final angle:radius)
					\centerarc[draw=white,double=black,ultra thick](-.75,0)(180:360:1);% Syntax: [draw options] (center) (initial angle:final angle:radius)
					\node[anchor=west,inner sep=.5] at(1.75,0) {\strut$a_0$};
					\node[anchor=west,inner sep=.5] at(.25,0) {\strut$b_0$};
					\node[anchor=east,inner sep=.5] at(-.25,0) {\strut$a_1$};
					\node[anchor=east,inner sep=.5] at(-1.75,0) {\strut$b_1$};
					\draw[thick] (-.75,-1)to[out=270,in=270](.75,-1);
					\draw[thick] (-.75,1)to[out=90,in=90](.75,1);
					\node[anchor=east,inner sep=.5] at(-.75,1.25) {\strut$d$};
					\node[anchor=west,inner sep=.5] at(.75,1.25) {\strut$\dual{d}$};
					\node[anchor=east,inner sep=.5] at(-.75,-1.25) {\strut$c$};
					\node[anchor=west,inner sep=.5] at(.75,-1.25) {\strut$\dual{c}$};
				\end{tikzpicture}
			}
		\end{array},\label{eqn:bndSmatrix2}
	\end{align}
	where $a_0,a_1,d\in A$ and $b_0,b_1,c\in \C$.
	With this, the ground state can be written
	\begin{align}
		\ket{\psi}=\sum_{
		\substack{\vec{x},\vec{y},\vec{\mu}                                                                                                                                                                                                 \\
		c,b_0,b_1\in\C                                                                                                                                                                                                                      \\
		d,a_0,a_1\in A}} & \Phi_{\vec{x},\vec{y},\vec{\mu},c} \frac{\left[\S_{c,d}\right]_{(b_0,b_1),(a_0,a_1)}}{N_A}  \ket{ \psi_R^{\vec{x},\vec{y},\vec{\mu},c,a_0,a_1} }\ket{ \psi_{\comp{R}}^{\vec{x},\vec{y},\vec{\mu},c,b_0,b_1,d} },
	\end{align}
	where $N_A$ is a normalizing factor. The reduced state on $R$ is
	\begin{align}
		\rho_R=\sum_{
		\substack{\vec{x},\vec{y},\vec{\mu}                                                                                                                                                                                                         \\
		c\in\C                                                                                                                                                                                                                                      \\
		d,a_0,a_1,a_2,a_3\in A}} & \frac{\prod_{j\leq n} d_{x_j}\left[\S_{c,d}^\dagger\S_{c,d}\right]_{(a_2,a_3),(a_0,a_1)}}{d_c\D^{2(n-1)}N_A}\ketbra{\psi_R^{\vec{x},\vec{y},\vec{\mu},c,a_0,a_1}}{\psi_R^{\vec{x},\vec{y},\vec{\mu},c,a_2,a_3}}.
	\end{align}

	\subsubsection{\texorpdfstring{$A=1$}{A=1}}
	When the algebra is trivial, no strings can terminate. In that case, $\S_{1,1}^\dagger\S_{1,1}=1$, so the reduced state is
	\begin{align}
		\rho_R & =\sum_{
			\vec{x},\vec{y},\vec{\mu}} \frac{\prod_{j\leq n} d_{x_j}}{\D^{2(n-1)}} \ketbra{\psi_R^{\vec{x},\vec{y},\vec{\mu}}}{\psi_R^{\vec{x},\vec{y},\vec{\mu}}},
	\end{align}
	which is diagonal and has entropy
	\begin{align}
		S_R & =n S[C]-\log\D^2.
	\end{align}

	\subsubsection{\texorpdfstring{$\C$}{C} symmetric}
	When $\C$ is symmetric, the rings in \cref{eqn:bndSmatrix2} separate, so
	\begin{align}
		\left[\S_{c,d}\right]_{(b_0,b_1),(a_0,a_1)}=                 &
		\indicator{c=d}\sqrt{d_{b_0}d_{b_1}}
		\frac{(d_{a_0}d_{a_1})^{1/4}m_{a_1,a_0}^{\dual{d}}}{d_{d}^{1/4}d_A^2\D},                                                                                                                                 \\
		\left[\S_{c,d}^\dagger\S_{c,d}\right]_{(a_2,a_3),(a_0,a_1)}= &
		\indicator{c=d}\sum_{b_0,b_1}N_{b_0b_1}^{d}\frac{d_{b_0}d_{b_1}}{\sqrt{d_d}}
		(d_{a_0}d_{a_1}d_{a_2}d_{a_3})^{1/4}\frac{m_{a_1,a_0}^{\dual{d}}\left(m_{a_3,a_2}^{\dual{d}}\right)^*}{d_A^4\D^2}                                                                                        \\
		=                                                            & \indicator{c=d} (d_{a_0}d_{a_1}d_{a_2}d_{a_3})^{1/4}\sqrt{d_d} \frac{m_{a_1,a_0}^{\dual{d}}\left(m_{a_3,a_2}^{\dual{d}}\right)^*}{d_A^2}.
	\end{align}
	It can readily be verified that this matrix is rank 1. The eigenvalue can be found using \cref{eqn:algnorm}, giving $\lambda=d_d/d_A$. We can therefore write the state on $R$ as
	\begin{align}
		\rho_R & =\sum_{\substack{\vec{x},\vec{y},\vec{\mu}                                                                                \\c\in\C\\d\in A}}
		\indicator{c=d}
		\frac{\prod_{j\leq n} d_{x_j}}{\D^{2(n-1)}d_A} \ketbra{\psi_R^{\vec{x},\vec{y},\vec{\mu},c}}{\psi_R^{\vec{x},\vec{y},\vec{\mu},c}} \\
		       & =\sum_{
		\substack{\vec{x},\vec{y},\vec{\mu}                                                                                                \\c\in A}}
		\frac{\prod_{j\leq n} d_{x_j}}{\D^{2(n-1)}d_A} \ketbra{\psi_R^{\vec{x},\vec{y},\vec{\mu},c}}{\psi_R^{\vec{x},\vec{y},\vec{\mu},c}}.
	\end{align}
	Using \cref{lem:summingds,lem:sumlog}, we find that the entropy of this state is
	\begin{align}
		\S_R & =nS[\C]-\log\D^2+\log d_A.
	\end{align}

	\subsubsection{\texorpdfstring{$\C$}{C} pointed}
	When all quantum dimensions are equal to 1, the boundary $\S$-matrix is
	\begin{align}
		\left[\S_{c,d}\right]_{(b_0,b_1),(a_0,a_1)}= & \indicator{c=d}
		\frac{m_{a_1,a_0}^{\dual{d}}}{d_A}\left[\S\right]_{b_0,\dual{a}_1},
	\end{align}
	where $\left[\S\right]_{b_0,a_1}$ is the $\S$-matrix from \cref{eqn:Smatrix}. This gives
	\begin{align}
		\left[\S_{c,d}^\dagger\S_{c,d}\right]_{(a_2,a_3),(a_0,a_1)} & =\indicator{c=d}
		\frac{m_{a_1,a_0}^{\dual{d}}(m_{a_3,a_2}^{\dual{d}})^*}{d_A^2}\sum_{b_0}
		\left[\S\right]_{b_0,\dual{a}_3}^*\left[\S\right]_{b_0,\dual{a}_1}.
	\end{align}
	Using \cref{lem:productS,lem:sumS}, this can be simplified to
	\begin{align}
		\left[\S_{c,d}^\dagger\S_{c,d}\right]_{(a_2,a_3),(a_0,a_1)} & =\indicator{c=d}
		\frac{m_{a_1,a_0}^{\dual{d}}(m_{a_3,a_2}^{\dual{d}})^*}{d_A^2}\indicator{a_3\otimes\dual{a}_1\in\mug{\C}}.
	\end{align}

	Pointed braided categories have fusion rules given by an Abelian group $G$~\cite{Joyal_1993,MR3242743}, and algebras are twisted group algebras~\cite{0202130,bullivant2020} of subgroups of $G$. Moreover, $\mug{\C}$ also has fusion rules given by a subgroup $G$.

	Since $a_3\otimes\dual{a}_1\in A$ and $a_3\otimes\dual{a}_1\in \mug{\C}$, there must be some $h\in\mug{\C}\cap A$ so that $a_3 = a_1 h$. We can then write
	\begin{align}
		\rho_R=\sum_{
			\substack{
		\vec{x}   \\
		d,a\in A  \\
				h\in \mug{\C}\cap A
			}}
		 & \frac{
			m_{a,(ad)^{-1}}^{d^{-1}}(m_{a h,(ahd)^{-1}}^{d^{-1}})^*}{
			d_A^2\D^{2(n-1)}N_A}
		\ketbra{\psi_R^{\vec{x},d,a}}{\psi_R^{\vec{x},d,ah}}.\label{eqn:rho_ptd}
	\end{align}

	In the case that $\mug{\C}\cap A=A$, this reduces to the symmetric case, since summing over $h\in A$ is the same as summing over $A\ni h^\prime = ah$.

	When $\mug{\C}\cap A=\{1\}$, the unit, the reduced state on $R$ simplifies to
	\begin{align}
		\rho_R=\sum_{
			\substack{
		\vec{x}   \\
				d,a\in A
			}}
		 & \frac{
			m_{a,(ad)^{-1}}^{d^{-1}}(m_{a,(ad)^{-1}}^{d^{-1}})^*}{
			d_A^2\D^{2(n-1)}N_A}
		\ketbra{\psi_R^{\vec{x},d,a}}{\psi_R^{\vec{x},d,a}}.
	\end{align}
	This reduced state is diagonal, and has entropy
	\begin{align}
		S_R & =-\sum_{
		\substack{\vec{x}                                           \\
				d,a\in A}}
		\frac{|m_{a,(ad)^{-1}}^{d^{-1}}|^2}{\D^{2(n-1)}d_A^2}
		\log(\frac{|m_{a,(ad)^{-1}}^{d^{-1}}|^2}{\D^{2(n-1)}d_A^2}) \\
		    & =-\sum_{
		\substack{\vec{x}                                           \\
				d,a\in A}}
		\frac{|m_{a,(ad)^{-1}}^{d^{-1}}|^2}{\D^{2(n-1)}d_A^2}
		\log(|m_{a,(ad)^{-1}}^{d^{-1}}|^2)+\log(\D^{2(n-1)}d_A^2),
	\end{align}
	where we have made use of \cref{eqn:algnorm}. Since $A$ is a twisted group algebra, we may assume $|m_{ab}^c|\in\{0,1\}$. Finally, this gives
	\begin{align}
		S_R & =\log(\D^{2(n-1)}d_A^2)    \\
		    & =S[\C]-\log\D^2+2\log d_A,
	\end{align}
	completing the proof.
\end{proof}
% !TeX spellcheck = en_US
% This line sets the project root file.
% !TEX root = ../WW_EntanglementEntropy.tex
%

\section{Remarks}\label{sec:remarks}

To summarize, we have evaluated the long-range entanglement in the bulk, and at the boundary, of (2+1)- and (3+1)-dimensional topological phases.
In (2+1)-dimensions, we found the entropy diagnostic $\LWbndTEE=\log\D^2$ regardless of the choice of boundary algebra $A$. This is in contrast to the results for (3+1)-dimensions, where a signature of the boundary, namely its dimension as an algebra, can be seen in the diagnostics $\WWbndTEEPoint$ and $\WWbndTEELoop$.

The most natural boundary for these models is defined by the algebra $A=1$, which (uniquely) always exists. At this boundary, we found that the point-like diagnostic $\WWbndTEEPoint$ recovers the total dimension of the input category.
In particular, when $\C$ is a (2+1)-dimensional anyon model, this is consent with a boundary that supports the anyons. Conversely, the loop-like diagnostic $\WWbndTEELoop$ is zero at these boundaries, ruling out loop-like excitations in the vicinity.

We have conjectured a general property of the connected $\S$-matrix which, if proven in general, allows computation of bulk Walker-Wang topological entropy. Such a proof may also be interesting for the classification of premodular categories in general.
To the best of our knowledge, there is no complete classification of boundaries for Walker-Wang models. Such a classification is complicated by requiring, as a sub-classification, a complete understanding of (2+1)-dimensional theories. This goes beyond the scope of the current work, and we have therefore specialized to boundaries described by Frobenius algebras and to particular families of input fusion category. Extending these results may provide a more complete understanding of the possible boundary excitations and their properties.

\acknowledgments
We particularly thank David Aasen for useful discussions, and informing us of the class of Walker-Wang boundaries we consider.
We are also grateful to Taiga Adair, Daniel Barter, Isaac Kim, Sam Roberts, Thomas Smith and Dominic Williamson for helpful discussions. We also thank Stephen Bartlett and C\'{e}line B{\oe}hm.

Research at Perimeter Institute is supported in part by the Government of Canada through the Department of Innovation, Science and Industry Canada and by the Province of Ontario through the Ministry of Colleges and Universities.
This work is supported by the Australian Research Council (ARC) via the Centre of Excellence in Engineered Quantum Systems (EQuS) project number CE170100009. BJB was previously supported by the University of Sydney Fellowship Programme. SJE also received support from the EPSRC.

\bibliographystyle{apsrev_jacob}
\bibliography{refs.bib}
\appendix
\newgeometry{left=17mm,right=17mm,top=17mm,bottom=22mm,ignoreall, noheadfoot}
\clearpage
% !TeX spellcheck = en_US
% This line sets the project root file.
% !TEX root = ../WW_EntanglementEntropy.tex
%

\section{Properties of the connected \texorpdfstring{$\S$}{S}-matrix}\label{app:FC_pfs}

\begin{lemma_rep}[\ref{lem:productS}]\label{lem:productS_pf}
	Let $\C$ be a unitary premodular category. The matrix elements of $\S$ obey
	\begin{align}
		\frac{\D}{d_c}\S_{a,c}\S_{b,c} & =\S_{a\otimes b,c}=\sum_x N_{a,b}^x \S_{x,c}.
	\end{align}
	\begin{proof}
		We prove the second equality first.
		\begin{align}
			\S_{a\otimes b,c} & :=\frac{1}{\D}\begin{array}{c}
				\includeTikz{productS_1}{
					\begin{tikzpicture}[scale=.75]
						\centerarc[draw=white,double=black,ultra thick](-.75,0)(0:180:1);
						\centerarc[draw=white,double=black,ultra thick](-.75,0)(0:180:.8);
						\draw[draw=white,double=black,ultra thick] (.75,0) circle (1);
						\centerarc[draw=white,double=black,ultra thick](-.75,0)(180:360:1);
						\centerarc[draw=white,double=black,ultra thick](-.75,0)(180:360:.8);
						\node[anchor=west,inner sep=.5] at(1.75,0) {\strut$c$};
						\node[anchor=west,inner sep=.5] at(.25,0) {\strut$b$};
						\node[anchor=east,inner sep=.5] at(-1.75,0) {\strut$\dual{b}$};
						\node[anchor=west,inner sep=1] at(-1.55,0) {\strut$\dual{a}$};
					\end{tikzpicture}
				}
			\end{array}
			=\frac{1}{\D}\sum_{x}\sum_{\mu=1}^{N_{ab}^x}\sqrt{\frac{d_x}{d_ad_b}}
			\begin{array}{c}
				\includeTikz{productS_2}{
					\begin{tikzpicture}[scale=.75]
						\centerarc[draw=white,double=black,ultra thick](-.75,0)(30:180:1);
						\centerarc[draw=white,double=black,ultra thick](-.75,0)(30:180:.8);
						\draw[draw=white,double=black,ultra thick] (.75,0) circle (1);
						\centerarc[draw=white,double=black,ultra thick](-.75,0)(180:330:1);
						\centerarc[draw=white,double=black,ultra thick](-.75,0)(180:330:.8);
						\draw[semithick] ($(30:1)-(.75,0)$) to[out = 300,in=70] (.15,.15);
						\draw[semithick] ($(30:.8)-(.75,0)$) to[out = 300,in=110] (.15,.15);
						\draw[semithick] ($(330:1)-(.75,0)$) to[out = 60,in=300] (.15,-.15);
						\draw[semithick] ($(330:.8)-(.75,0)$) to[out = 60,in=240] (.15,-.15) -- (.15,.15);
						\node[anchor=west,inner sep=.5] at(1.75,0) {\strut$c$};
						\node[anchor=east,inner sep=.5] at(-1.75,0) {\strut$\dual{b}$};
						\node[anchor=west,inner sep=1] at(-1.55,0) {\strut$\dual{a}$};
						\node[anchor=west,inner sep=.5] at(.15,0) {\strut$x$};
						\node[anchor=south west,inner sep=.5] at(.15,.15) {\strut\scriptsize$\mu$};
						\node[anchor=north west,inner sep=.5] at(.15,-.15) {\strut\scriptsize$\mu$};
					\end{tikzpicture}
				}
			\end{array}                                             \\
			                  & =\frac{1}{\D}\sum_{x,\mu}\sqrt{\frac{d_x}{d_ad_b}}
			\begin{array}{c}
				\includeTikz{productS_3}{
					\begin{tikzpicture}[scale=.75]
						\centerarc[draw=white,double=black,ultra thick](-.75,0)(0:150:1);
						\draw[draw=white,double=black,ultra thick] (.75,0) circle (1);
						\centerarc[draw=white,double=black,ultra thick](-.75,0)(210:360:1);
						\draw[semithick] ($(150:1)-(.75,0)$) to[out = 240,in=120] ($(210:1)-(.75,0)$);
						\draw[semithick] ($(150:1)-(.75,0)$) to[out = 300,in=60] ($(210:1)-(.75,0)$);
						\node[anchor=west,inner sep=.5] at(1.75,0) {\strut$c$};
						\node[anchor=west,inner sep=.5] at(.25,0) {\strut$x$};
						\node[anchor=east,inner sep=.5] at(-1.75,0) {\strut$\dual{b}$};
						\node[anchor=west,inner sep=1] at(-1.5,0) {\strut$\dual{a}$};
					\end{tikzpicture}
				}
			\end{array}=\sum_{x}N_{a,b}^{x} \S_{x,c}.
		\end{align}
		And the first equality.
		\begin{align}
			\frac{1}{\D}
			\begin{array}{c}
				\includeTikz{productS_1}{}
			\end{array}
			 & =\frac{1}{\D}\sum_{x,\mu}\sqrt{\frac{d_x}{d_ad_c}}\frac{\theta_x}{\theta_a\theta_{\dual{c}}}
			\begin{array}{c}
				\includeTikz{productS_3a}{
					\begin{tikzpicture}[scale=.75]
						\centerarc[draw=white,double=black,ultra thick](-.75,0)(0:180:1);
						\draw[draw=white,double=black,ultra thick] (.75,0) circle (1);
						\centerarc[draw=white,double=black,ultra thick](-.75,0)(180:360:1);
						\draw[shift={(.75,0)}] (160:1)to[out=100,in=0] (-1.25,.75)to[out=180,in=180](-1.25,-.75)to[out=0,in=260](200:1);
						\node[anchor=west,inner sep=.5] at(1.75,0) {\strut$c$};
						\node[anchor=west,inner sep=.5] at(.25,0) {\strut$b$};
						\node[anchor=west,inner sep=.5] at(-.25,0) {\strut$x$};
						\node[anchor=east,inner sep=.5] at(-1,0) {\strut$\dual{a}$};
					\end{tikzpicture}
				}
			\end{array}
			=\frac{1}{\D}\sum_{x,\mu}\sqrt{\frac{d_x}{d_ad_c}}\frac{\theta_x}{\theta_a\theta_{\dual{c}}}
			\begin{array}{c}
				\includeTikz{productS_3b}{
					\begin{tikzpicture}[scale=.75]
						\centerarc[draw=white,double=black,ultra thick](-.75,0)(0:180:1);
						\draw[draw=white,double=black,ultra thick] (.75,0) circle (1);
						\centerarc[draw=white,double=black,ultra thick](-.75,0)(180:360:1);
						\draw[shift={(.75,0)}] (160:1)to[out=240,in=90](-1.25,0)to[out=270,in=120](200:1);
						\node[anchor=west,inner sep=.5] at(1.75,0) {\strut$c$};
						\node[anchor=west,inner sep=.5] at(.25,0) {\strut$b$};
						\node[anchor=west,inner sep=.5] at(-.25,0) {\strut$x$};
						\node[anchor=east,inner sep=.5] at(-.5,0) {\strut$\dual{a}$};
					\end{tikzpicture}
				}
			\end{array}
			\\%
			 & =\left(\frac{1}{\D d_c}\sum_{x}\frac{\theta_x}{\theta_a\theta_{\dual{c}}}d_x\right)
			\begin{array}{c}
				\includeTikz{productS_4}{
					\begin{tikzpicture}[scale=.75]
						\centerarc[draw=white,double=black,ultra thick](-.75,0)(0:180:1);
						\draw[draw=white,double=black,ultra thick] (.75,0) circle (1);
						\centerarc[draw=white,double=black,ultra thick](-.75,0)(180:360:1);
						\node[anchor=west,inner sep=.5] at(1.75,0) {\strut$c$};
						\node[anchor=west,inner sep=.5] at(.25,0) {\strut$b$};
					\end{tikzpicture}
				}
			\end{array}
			=\frac{\S_{a,c}}{d_c}
			\begin{array}{c}
				\includeTikz{productS_4}{}
			\end{array}=\frac{\D\S_{a,c}\S_{b,c}}{d_c}.
		\end{align}
	\end{proof}
\end{lemma_rep}

\begin{lemma_rep}[\ref{lem:sumS}]\label{lem:sumS_pf}
	Let $\C$ be a unitary premodular category, then
	\begin{align}
		\sum_b d_b \S_{a,b} & =\indicator{a\in\mug{\C}} d_a \D,
	\end{align}
	where $\indicator{W}=1 \iff W$ is true, and $\mug{\C}$ is the M\"uger center.
	\begin{proof}
		If $a\in \mug{\C}$, then
		\begin{align}
			\sum_b d_b\S_{a,b} & =\frac{1}{\D}\sum_{x,b}N_{a,\dual{b}}^x d_x d_b \\
			                   & =\frac{1}{\D}\sum_{b}d_a d_b^2                  \\
			                   & =\D d_a
		\end{align}
		Otherwise, we have
		\begin{align}
			\frac{\D}{d_a}\S_{a,c}\sum_b d_b \S_{a,b} & =\sum_{x,b}N_{c,b}^x \S_{a,x}d_b=\sum_x \S_{a,x}d_xd_c,
		\end{align}
		so after relabeling the RHS summation variable
		\begin{align}
			\sum_b d_b \S_{a,b}\left(\S_{a,c}-\frac{d_a d_c}{\D}\right) & =0,
		\end{align}
		for all $c$. Unless $\C=\mug{\C}$ (in which case, $a\in \mug{\C}$), this implies the result.
	\end{proof}
\end{lemma_rep}

\begin{lemma_rep}[\ref{lem:TrScSc}]\label{lem:TrScSc_pf}
	Let $\C$ be a unitary premodular category, then
	\begin{align}
		\sum_{c\in\C}\Tr\S_c^\dagger\S_c & =\D^2,
	\end{align}
	where $\S_c$ is the connected $\S$-matrix and $\D$ is the total dimension of $\C$.
	\begin{proof}
		Using \cref{eqn:Scaction}, we have
		\begin{align}
			\sum_c\Tr\S_c^\dagger\S_c & =\sum_{a,\alpha,c}\left[\S_c^\dagger\S_c\right]_{(a,\alpha),(a,\alpha)} \\
			                          & =\sum_{a,\alpha,c}\frac{\sqrt{d_c}}{\D^2}\sum_x d_x
			\begin{array}{c}
				\includeTikz{STrProof1}{
					\begin{tikzpicture}[scale=.75]
						\centerarc[draw=white,double=black,ultra thick](0,0)(0:180:1);
						\draw[draw=white,double=black,ultra thick] (1.5,0) circle (1);
						\draw[draw=white,double=black,ultra thick] (-1.5,0) circle (1);
						\centerarc[draw=white,double=black,ultra thick](0,0)(180:360:1);% Syntax: [draw options] (center) (initial angle:final angle:radius)
						\node[anchor=west,inner sep=.5] at(2.5,0) {\strut$a$};
						\node[anchor=east,inner sep=.5] at(-2.5,0) {\strut$\dual{a}$};
						\node[anchor=west,inner sep=.5] at(1,0) {\strut$x$};
						\draw[thick] (1.5,-1)--(1.5,-1.2);\draw[thick] (-1.5,1)--(-1.5,1.2);
						\draw[thick] (1.5,-1.2)to[out=270,in=270] (3,0)to[out=90, in=90] (-1.5,1.2);
						\node[anchor=south,inner sep=.5] at(1.5,-1) {\strut$\alpha$};
						\node[anchor=north,inner sep=.75] at(-1.5,1) {\strut$\alpha$};
						\node[anchor=east,inner sep=.75] at(-1.5,1.25) {\strut$c$};
						\node[anchor=west,inner sep=.75] at(3,0) {\strut$\dual{c}$};
					\end{tikzpicture}
				}
			\end{array}                                                                          \\
			                          & =\sum_{a,\alpha,c}\frac{\sqrt{d_c}}{\D^2}\sum_x d_x
			\begin{array}{c}
				\includeTikz{STrProof2}{
					\begin{tikzpicture}[scale=.75]
						\coordinate (X) at  ($(-1.5,0)+({cos(130)},{sin(130)})$);
						\coordinate (Y) at  ($(-1.5,0)+({cos(410)},{sin(410)})$);
						\centerarc[draw=white,double=black,ultra thick](0,0)(0:180:1);
						\draw[draw=white,double=black,ultra thick] (1.5,0) circle (1);
						\centerarc[draw=white,double=black,ultra thick](-1.5,0)(130:410:1);
						\centerarc[draw=white,double=black,ultra thick](0,0)(180:360:1);% Syntax: [draw options] (center) (initial angle:final angle:radius)
						\node[anchor=west,inner sep=.5] at(2.5,0) {\strut$a$};
						\node[anchor=east,inner sep=.5] at(-2.5,0) {\strut$\dual{a}$};
						\node[anchor=west,inner sep=.5] at(1,0) {\strut$x$};
						\draw[thick] (1.5,-1)--(1.5,-1.5);
						\draw[draw=white,double=black,ultra thick] (X)to[out=40,in=180] (0,2)to[out=0,in=90](3.25,0)to[out=-90,in=90](3,-1.5)to[out=-90,in=220](1.25,-1.75)--(1.5,-1.5);
						\draw[draw=white,double=black,ultra thick] (Y)to[out=130,in=180] (0,1.5)to[out=0,in=90](3,0)to[out=-90,in=305](1.75,-1.75)--(1.5,-1.5);
						\draw[thick] (1.5,-1.5)--(1.25,-1.75)  (1.5,-1.5)--(1.75,-1.75) ;
						\node[anchor=south,inner sep=.5] at(1.5,-1) {\strut$\alpha$};
						\node[anchor=north,inner sep=.75] at(1.5,-1.6) {\strut$\alpha$};
						\node[anchor=east,inner sep=.75] at(1.5,-1.25) {\strut$c$};
					\end{tikzpicture}
				}
			\end{array},
		\end{align}
		using the properties of the trace. Applying the premodular trap (\cref{cor:premodulartrap}), this gives
		\begin{align}
			\sum_c\Tr\S_c^\dagger\S_c & =\sum_{\substack{a,\alpha,c,                                             \\z\in\mug{\C},\mu}}
			\sqrt{\frac{d_cd_z}{d_a^2}}
			\begin{array}{c}
				\includeTikz{STrProof3}{
					\begin{tikzpicture}[scale=.75]
						\draw (-.5,-.75)--(0,-.5)--(0,.5)--(-.5,.75) (0,.5)--(.5,.75) (0,-.5)--(.5,-.75);
						\draw[shift={(1,-1.5)}] (-.5,-.75)--(0,-.5)--(0,.5)--(-.5,.75) (0,.5)--(.5,.75) (0,-.5)--(.5,-.75);
						\draw (.5,.75) to[out=30,in=30] (1.5,-.75);
						\draw (-.5,.75) to[out=120,in=120] (1.5,1) to[out=300,in=-30] (1.5,-2.25);
						\draw (-.5,-.75) to[out=210,in=270] (-1,-.5)to[out=90,in=90](-1.5,-.5)to[out=270,in=210](.5,-2.25);
						\node[anchor=east,inner sep=.75] at(0,0) {$z$};
						\node[anchor=north,inner sep=.75] at(0,-.5) {\strut\footnotesize$\mu$};
						\node[anchor=south,inner sep=.75] at(0,.5) {\strut\footnotesize$\mu$};
						\begin{scope}[shift={(1,-1.5)}]
							\node[anchor=east,inner sep=.75] at(0,0) {\strut$c$};
							\node[anchor=north,inner sep=.75] at(0,-.5) {\strut\footnotesize$\alpha$};
							\node[anchor=south,inner sep=.75] at(0,.5) {\strut\footnotesize$\alpha$};
						\end{scope}
						\node[anchor=south,inner sep=.75] at(.5,-.75) {\strut$\dual{a}$};
						\node[anchor=south,inner sep=.75] at(.5,-2.25) {\strut$\dual{a}$};
						\node[anchor=west,inner sep=.75] at(1.5,-.75) {\strut$a$};
						\node[anchor=south west,inner sep=.75] at(1.5,-2.25) {\strut$a$};
					\end{tikzpicture}
				}
			\end{array}                                                                           \\
			                          & =\sum_{\substack{a,z\in\mug{\C},\mu}}\sqrt{d_z}
			\begin{array}{c}
				\includeTikz{STrProof4}{
					\begin{tikzpicture}[scale=.75]
						\draw (-.5,-.75)--(0,-.5)--(0,.5)--(-.5,.75) (0,.5)--(.5,.75) (0,-.5)--(.5,-.75);
						\draw (-.5,-.75) to[out=210,in=270] (-1,-.5)to[out=90,in=90](-1.5,-.5)to[out=270,in=-45](.5,-.75);
						\draw[scale=-1] (-.5,-.75) to[out=210,in=270] (-1,-.5)to[out=90,in=90](-1.5,-.5)to[out=270,in=-45](.5,-.75);
						\node[anchor=east,inner sep=.75] at(0,0) {\strut$z$};
						\node[anchor=north,inner sep=.75] at(0,-.5) {\strut\footnotesize$\mu$};
						\node[anchor=south,inner sep=.75] at(0,.5) {\strut\footnotesize$\mu$};
					\end{tikzpicture}
				}
			\end{array}                                                                           \\
			                          & =\sum_{\substack{a,z\in\mug{\C},\mu}}|\varkappa_a|^2\indicator{z=1}d_a^2 \\
			                          & =\D^2,
		\end{align}
		where $\varkappa_a$ is the Frobenius-Schur indicator~\cite{kitaev2006anyons}.
	\end{proof}
\end{lemma_rep}

\clearpage
% !TeX spellcheck = en_US
% This line sets the project root file.
% !TEX root = ../WW_EntanglementEntropy.tex
%

\section{Loop-gas results}\label{app:SN_results}

In this section, given a fusion category $\C$, an $n$-tuple of simple objects $\vec{x}_n:=(x_1,x_2,\ldots,x_n)$, and a fixed simple object $a$, we use the notation
\begin{align}
	N_{a}(\vec{x}_n):=\sum_{\vec{y}_{n-2}}N_{x_1,x_2}^{y_1}N_{y_1,x_3}^{y_2}\ldots N_{y_{n-2},x_n}^{a},
\end{align}
where $\vec{y}_{n-2}:=(y_1,y_2,\ldots,y_{n-2})$, and the sum is over all tuples of simple objects in $\C$. $N_{a}(\vec{x}_n)$ counts the number of ways $\vec{x}_{n}$ can fuse to $a$. When it can easily be inferred, we omit the subscript on the tuple $\vec{x}$.

\begin{lemma_rep}[\ref{lem:summingds}]\label{lem:summingds_pf}
	Let $\C$ be a unitary fusion category, then
	\begin{align}
		\sum_{\vec{x}_n}N_a(\vec{x}_n)\prod_{j\leq n}d_{x_j} & =d_a\D^{2(n-1)},\label{eqn:sumd_1}
	\end{align}
	where $\D=\sqrt{\sum_a d_a^2}$ is the total quantum dimension of $\C$.
	\begin{proof}
		We proceed inductively.

		When $n=1$, $N_a(x)=\indicator{x=a}=N_{1,x}^{a}$, and \cref{eqn:sumd_1} reduces to $d_a=d_a$.

		Assume \cref{eqn:sumd_1} holds for the fusion of $n$ objects. Recall that or any fusion category, we have
		\begin{align}
			d_ad_b & =\sum_c N_{bc}^{a}d_c, \label{eqn:TheOneAbove}
		\end{align}
		and this holds for any cyclic permutation of the indices on $N_{bc}^{a}$. We now obtain
		\begin{align}
			\sum_{\vec{x}_{n+1}}N_a(\vec{x}_{n+1})\prod_{j\leq n+1}d_{x_j} & = \sum_{\vec{x}_n,y_{n-1}}N_{y_{n-1}}(\vec{x}_n)\prod_{j\leq n}d_{x_j}\sum_{x_{n+1}}N_{y_{n-1},x_{n+1}}^{a}d_{x_{n+1}} \\
			                                                               & =\D^{2(n-1)}\sum_{x_{n+1},y_{n-1}}N_{y_{n-1},x_{n+1}}^{a}d_{y_{n-1}}d_{x_{n+1}}                                        \\
			                                                               & =\D^{2(n-1)}d_a\sum_{y_{n-1}}d_{y_{n-1}}^2                                                                             \\
			                                                               & =d_a\D^{2n},
		\end{align}
		where in the second line we used the induction assumption (\cref{eqn:sumd_1}), and in the third line we used \cref{eqn:TheOneAbove}.
	\end{proof}
\end{lemma_rep}

\begin{lemma_rep}[\ref{lem:sumlog}]\label{lem:sumlog_pf}
	Let $\C$ a unitary fusion category. For the fusion of $n$ objects $\vec{x}=(x_1,x_2,\ldots,x_n)$, with $n>1$, we have
	\begin{align}
		\sum_{\vec{x}_n} N_a(\vec{x}_n)\frac{\prod_{j\leq n} d_{x_j}}{\D^{2(n-1)}}\log\prod_{k\leq n} d_{x_k} & =n d_a \sum_x \frac{d_x^2\log d_x}{\D^2}.\label{eqn:sumlog_1}
	\end{align}
	\begin{proof}
		We prove the claim inductively.
		The base case is when $n=2$.
		\begin{align}
			\sum_{x_1,x_2} N_{x_1,x_2}^{a}\frac{d_{x_1}d_{x_2}}{\D^{2}}(\log d_{x_1}+\log d_{x_2}) & =\sum_{x_1,x_2} N_{x_1,x_2}^{a}\frac{d_{x_1}d_{x_2}}{\D^{2}}\log d_{x_1}+\sum_{x_1,x_2} N_{x_1,x_2}^{a}\frac{d_{x_1}d_{x_2}}{\D^{2}}\log d_{x_2} \\
			                                                                                       & =d_a\sum_{x_1} \frac{d_{x_1}^2}{\D^{2}}\log d_{x_1}+d_a\sum_{x_2} \frac{d_{x_2}^2}{\D^{2}}\log d_{x_2}                                           \\
			                                                                                       & =2 d_a \sum_x \frac{d_x^2\log d_x}{\D^2}.
		\end{align}
		Assume \cref{eqn:sumlog_1} holds for $n$-tuples, then
		\begin{align}
			\sum_{\vec{x}_n,x_{n+1}} & N_a(\vec{x}_{n+1})\frac{\prod_{j\leq n+1} d_{x_j}}{\D^{2n}}\log\prod_{k\leq n+1} d_{x_k}  =
			\!\!\sum_{\substack{\vec{x}_n                                                                                                                   \\y_{n-1},x_{n+1}}} \!\!N_{y_{n-1}}(\vec{x}_{n})N_{y_{n-1},x_{n+1}}^{a}\frac{\prod_{j\leq n} d_{x_j}}{\D^{2n}}d_{x_{n+1}}\log(\prod_{k\leq n} d_{x_k}d_{n+1})\\
			                         & =\sum_{
			\substack{\vec{x}_n                                                                                                                             \\y_{n-1},x_{n+1}}} N_{y_{n-1}}(\vec{x}_{n})\frac{\prod_{j\leq n}d_{x_j}}{\D^{2n}}N_{y_{n-1},x_{n+1}}^{a}d_{x_{n+1}}\left( \log\prod_{k\leq n} d_{x_k}+\log d_{x_{n+1}}\right)
			\\
			                         & =n\sum_{x}\frac{d_x^2\log d_x}{\D^2}\sum_{y_{n-1},x_{n+1}}\frac{N_{y_{n-1},x_{n+1}}^{a}d_{y_{n-1}}d_{x_{n+1}}}{\D^2}
			+\sum_{y_{n-1},x_{n+1}} N_{y_{n-1},x_{n+1}}^{a} \frac{d_{y_{n-1}}d_{x_{n+1}}\log d_{x_{n+1}}}{\D^{2}}                                           \\
			                         & =nd_a\sum_{x}\frac{d_x^2\log d_x}{\D^2}\sum_{x_{n+1}}\frac{d_{x_{n+1}}^2}{\D^2}
			+d_a\sum_{x_{n+1}} \frac{d_{x_{n+1}}^2\log d_{x_{n+1}}}{\D^{2}}
			\\
			                         & =(n+1) d_a \sum_x \frac{d_x^2\log d_x}{\D^2}.
		\end{align}
	\end{proof}
\end{lemma_rep}

\begin{lemma_rep}[\ref{lem:prtree_pf}]\label{lem:prtree_pf}
	Let $\C$ a unitary fusion category. Given a fixed fusion outcome $a$ on $n$ simple objects, the probability of the tree
	\begin{align}
		\begin{array}{c}
			\includeTikz{treeextraA}{
				\begin{tikzpicture}
					\draw(0,0)--(1.75,1.75);
					\draw[dotted](1.75,1.75)--(2,2);
					\draw(2,2)--(3,3);
					\begin{scope}
						\clip(0,0)--(3,3)--(6,3)--(6,0)--(0,0);
						\draw(1,0)--(0,1);
						\draw(2,0)--(0,2);
						\draw(3,0)--(0,3);
						\draw(4.5,0)--(0,4.5);
						\draw(5.5,0)--(0,5.5);
					\end{scope}
					\node[below] at (0,0) {$x_1$};
					\node[below] at (1,0) {$x_2$};
					\node[below] at (2,0) {$x_3$};
					\node[below] at (3,0) {$x_4$};
					\node[below] at (4.5,0) {$x_{n-1}$};
					\node[below] at (5.5,0) {$x_{n}$};
					\node[above right] at (3,3) {$a$};
					\node[above left] at (.75,.75) {$y_1$};
					\node[above left] at (1.25,1.25) {$y_2$};
					\node[above left] at (2.5,2.5) {$y_{n-2}$};
					\node[below] at (.5,.5) {\tiny{$\mu_1$}};
					\node[below] at (1,1) {\tiny{$\mu_2$}};
					\node[below] at (1.5,1.5) {\tiny{$\mu_3$}};
					\node[right] at (2.25,2.25) {\tiny{$\mu_{n-2}$}};
				\end{tikzpicture}
			}
		\end{array},\label{eqn:treeextraA}
	\end{align}
	in the ground state of a topological loop-gas (Levin-Wen or Walker-Wang) model is
	\begin{align}
		\Pr[\vec{x},\vec{y},\vec{\mu}|a] & =\frac{\prod_{j\leq n}\Pr[x_j]}{\Pr[a]\prod_{k\leq n} d_{x_k}}d_a \\
		                                 & =\frac{\prod_{j\leq n} d_{x_j}}{d_{a}\D^{2(n-1)}}.
	\end{align}
	\begin{proof}
		Given a pair of objects $a,b$, the probability that they fuse to $c$ is given by~\cite{preskillnotes,bullivant2016entropic}
		\begin{align}
			\Pr[a\otimes b\to c]=\frac{N_{ab}^c d_c}{d_ad_b },
		\end{align}
		so the probability that $x_1\otimes x_2\otimes\ldots\otimes x_n\to a$ is
		\begin{align}
			\Pr[x_1\otimes x_2\otimes x_3\otimes \ldots \otimes x_n\to a]= & \sum_{\vec{y}}\Pr[x_1\otimes x_2\to y_1]\Pr[y_1\otimes x_3\to y_2]\cdots \Pr[y_{n-2}\otimes x_n\to a] \\
			=                                                              & \frac{N_{a}(\vec{x})}{\prod_{j\leq n} d_{x_j}}d_{a},
		\end{align}
		where
		\begin{align}
			N_{a}(\vec{x}) := & \sum_{\vec{y}}N_{x_1x_2}^{y_1}N_{y_1x_3}^{y_2}\ldots N_{y_{n-2}x_n}^{a} \\
			=                 & \sum_{\vec{y}}N_{a}(\vec{x},\vec{y}).
		\end{align}

		The probability of a configuration is
		\begin{align}
			\Pr[\vec{x},\vec{y}|a] & =\Pr[x_1\otimes x_2\otimes x_3\otimes \ldots \otimes x_n\to a]\frac{\prod_{j\leq n}\Pr[x_j]}{\Pr[a]} \\
			                       & =\frac{N_{a}(\vec{x},\vec{y})d_a}{\prod_{k\leq n} d_{x_k}}\frac{\prod_{j\leq n}\Pr[x_j]}{\Pr[a]},
		\end{align}
		where $\Pr[x_i]=d_{x_i}^2/\D^2$. For a fixed $\vec{x}$ and $\vec{y}$, all (allowed) $\vec{\mu}$ are equally likely, and there are $N_a(\vec{x},\vec{y})$ such configurations, so
		\begin{align}
			\Pr[\vec{x},\vec{y},\vec{\mu}|a] & =\frac{\prod_{j\leq n}\Pr[x_j]}{\Pr[a]\prod_{k\leq n} d_{x_k}}d_a \\
			                                 & =\frac{\prod_{j\leq n} d_{x_j}}{d_{a}\D^{2(n-1)}}.
		\end{align}
		Lemma~\ref{lem:summingds} can be used to show these are properly normalized.
	\end{proof}
\end{lemma_rep}

\begin{thm_rep}[\ref{thm:WWentropyexamples}]\label{thm:WWentropyexamples_pf}
	For a Walker-Wang model defined by a unitary premodular category of one of the following types:
	\begin{enumerate}
		\item $\C=\cat{A}\boxtimes\cat{B}$, where $\cat{A}$ is symmetric and $\cat{B}$ is modular~\cite{bullivant2016entropic},
		\item $\C$ pointed,
		\item $\rk{\C}<6$ and multiplicity free,
		\item $\rk{\C}=\rk{\mug{\C}}+1$ and $d_x=\D_{\mug{\C}}$, where $x$ is the additional object,
	\end{enumerate}
	then \cref{eqn:conjecture} holds. As a consequence, the topological entanglement entropy (defined using the regions in \cref{fig:WWregionsblk}) in the bulk is given by
	\begin{align}
		\WWTEE & =\log\D_{\mug{\C}}^2.\label{eqn:WalkerWangBulkTEE_app}
	\end{align}
	As special cases, this includes
	\begin{align}
		\WWTEE_{\text{modular}}   & =0         \\
		\WWTEE_{\text{symmetric}} & =\log \D^2
	\end{align}
	\begin{proof}~
		\subsection{Case 1}
		Using the premodular trap (\cref{cor:premodulartrap}), we have the matrix elements of $\S_c^\dagger\S_c$
		\begin{align}
			\left[\S_c^\dagger \S_c\right]_{(a,\alpha),(b,\beta)} & =\sqrt{\frac{d_c}{d_ad_b}}\sum_{x\in\mug{\C},\mu}\sqrt{d_x}
			\begin{array}{c}
				\includeTikz{SdotS}{
					\begin{tikzpicture}[scale=.75]
						\pgfmathsetmacro{\s}{sqrt(2)/2}
						\draw (0,-.5)--(0,.5)--(-\s,\s+.5) (0,.5)--(\s,\s+.5);
						\draw (0,-.5)--(-\s,-\s-.5) (0,-.5)--(\s,-\s-.5);
						\draw (-\s,\s+.5) to [out=180+45,in=180-45] (-\s,-\s-.5);
						\draw (\s,\s+.5) to [out=-45,in=45] (\s,-\s-.5);
						\draw[] (\s,-\s-.5)to[out=270,in=270] (2.5,0)to[out=90, in=90](-\s,\s+.5);
						\node[anchor=east,inner sep=.2] at (0,.5){\strut$\mu$};
						\node[anchor=east,inner sep=.2] at (0,-.5){\strut$\mu$};
						\node[anchor=west,inner sep=.5] at (0,0){\strut$x$};
						\node[inner sep=.5] at (1,0){\strut$b$};
						\node[inner sep=.5] at (-1,0){\strut$\dual{a}$};
						\node[inner sep=.5,anchor=south east] at (-\s,\s+.5){\strut$\alpha$};
						\node[inner sep=.5,anchor=north east] at (\s,-\s-.5){\strut$\beta$};
						\node[anchor=east,inner sep=.5] at (2.5,0){\strut$\dual{c}$};
						\node[anchor=south west,inner sep=.2] at (-\s/2,.5+\s/2) {\strut$a$};
						\node[anchor=north east,inner sep=.2] at (\s/2,-.5-\s/2) {\strut$\dual{b}$};
					\end{tikzpicture}
				}
			\end{array}\label{eqn:SdotS}.
		\end{align}
		If $\C$ is symmetric, $\mug{\C}=\C$, and
		\begin{align}
			\left[\S_c^\dagger \S_c\right]_{(a,\alpha),(b,\beta)} & =\indicator{c=1}d_ad_b.
		\end{align}
		This matrix is rank 1, with eigenvalue $\D^2$.
		If $\C$ is modular, $\mug{\C}=\vvec{}$, and
		\begin{align}
			\left[\S_c^\dagger \S_c\right]_{(a,\alpha),(b,\beta)} & =\indicator{a=b}\indicator{\alpha=\beta}\indicator{\dual{a}\otimes a=c}d_c.
		\end{align}
		For fixed $c$, this matrix is rank $\sum_c N_{\dual{a},a}^c$, with all eigenvalues equal to $d_c$.

		\subsection{Case 2}
		If $\C$ is pointed (every simple object has dimension 1), then $\S_c=0$ unless $c=1$. In this case, the fusion rules are given by a finite Abelian group $A$~\cite{Joyal_1993,MR3242743}, and $\mug{\C}=A^\prime$ has fusion rules given by a subgroup.
		From \cref{lem:productS,lem:sumS}, along with symmetries of the $\S_1$ matrix proven in \onlinecite{kitaev2006anyons} we know that
		\begin{align}
			\left[\S_1^\dagger\S_1\right]_{ab}=\sum_{c\in\mug{\C}}N_{a\dual{b}}^cd_c,\label{eqn:S1S1}
		\end{align}
		so
		\begin{align}
			\left[\S_1^\dagger\S_1\right]_{ab}=1\iff a\in bA^\prime.
		\end{align}
		Therefore, $[\S_1^\dagger \S_1]$ is a block matrix, with $[A:A^\prime]=|A|/|A^\prime|$ blocks, labeled by the cosets of $A^\prime$, each full of ones. Therefore, there are $[A:A^\prime]$ eigenvalues, identically $\D^2_{\mug{\C}}$. The entropy is given by
		\begin{align}
			\WWTEE & =\log\D_{\mug{\C}}^2.
		\end{align}

		\subsection{Case 3}

		Case 3 is proven explicitly in the attached Mathematica file~\cite{premoddata}. Classification of the fusion rings for ranks 2-5 can be found in \onlinecite{rk2,rk3,Rowell2009On,rk4,rk5}, along with \onlinecite{Yu_2020}. Additionally, all multiplicity free fusion rings for ranks 1-6 can be found at \onlinecite{anyonwiki}. From this, explicit $F$ and $R$ data can be found. The list of categories, along with their properties, is included beginning on \cpageref{sec:case3}.

		\subsection{Case 4}

		It is straightforward to check that if $a$ or $b$ are in $\mug{\C}$, then
		\begin{align}
			\left[\S_c^\dagger \S_c\right]_{(a,\alpha),(b,\beta)} & =\indicator{c=1}d_ad_b\indicator{a\in\mug{\C}}\indicator{b\in\mug{\C}},
		\end{align}
		so $\S_c^\dagger \S_c$ has the form
		\begin{align}
			[\S_c^\dagger \S_c]=
			\kbordermatrix{ & \mug{\C}                   &             \\
			\mug{\C}        & d_ad_b\indicator{c=1}      & 0           \\
			                & 0                          & X_c
			}
			=U
			\kbordermatrix{ & \mug{\C}                   &             \\
			\mug{\C}        & \begin{matrix}
					\D_{\mug{\C}}^2\indicator{c=1} & 0 & \cdots \\0&0&\cdots\\\vdots&\vdots&\ddots
				\end{matrix} & 0           \\
			                & 0                          & \tilde{X}_c
			}U^\dagger.
		\end{align}

		From the top left block, we have an eigenvector of $\S_1^\dagger\S_1$ with entries $v_a=\indicator{a\in\mug{\C}}d_a$ with eigenvalue $\D^2_{\mug{\C}}$. From \cref{lem:productS,lem:sumS}, along with symmetries of the $\S_1$ matrix proven in \onlinecite{kitaev2006anyons} we know that
		\begin{align}
			\left[\S_1^\dagger\S_1\right]_{ab}=\sum_{c\in\mug{\C}}N_{a\dual{b}}^cd_c.
		\end{align}

		The vector with entries $w_a=d_a$ is also an eigenvector with the same eigenvalue:
		\begin{align}
			\sum_{b\in \C}\sum_{c\in\mug{\C}}N_{a\dual{b}}^c d_cd_b & =\sum_{c\in\mug{\C}}d_ad_c^2 \\
			                                                        & =\D^2_{\mug{\C}}d_a,
		\end{align}
		so we have an orthogonal vector $w-v$ with eigenvalue $\D^2_{\mug{\C}}$.

		If $\rk{\C}=\rk{\mug{\C}}+1$ and the additional object has $d_x=\D_{\mug{\C}}$, then all other eigenvalues must be $0$ since $\Tr \S_1^\dagger S_1=2\D^2_{\mug{\C}}$ and $\D^2=\D^2_{\mug{\C}}+d_x^2$. The entropy of the Walker-Wang model in the bulk is
		\begin{align}
			\WWTEE & =\log\D_{\mug{\C}}^2.
		\end{align}
	\end{proof}
\end{thm_rep}

\subsection{Small category data}\label{sec:case3}

Data for small categories. ``Valid" indicates that the pentagon, hexagon, and ribbon equations, along with unitarity, are true. ``TY" indicates that the category has the property defined in Case 4 of \cref{thm:WWentropyexamples}.

Full data, including explicit $F$ and $R$ symbols is provided in the attached Mathematica files, also available at \onlinecite{premoddata}. Note that these may take \emph{a very long time} to check. This is due to the complicated algebraic integers occurring, and Mathematica needing to simplify using the functions ``Simplify'' and ``RootReduce''.

Categories are named $\FR{a}{b}{c}{x}$ according to their fusion ring $\mathrm{FR}^{a,b}_{c}$ from \onlinecite{anyonwiki}, along with their categorification ID $x$. Highlighted categories do not fall within any of the other cases in \cref{thm:WWentropyexamples}.

\vfill

\centering\begin{tabular}{
	!{\vrule width 1pt}>{\columncolor[gray]{.9}[\tabcolsep]}c!{\vrule width 1pt}c!{\vrule width 1pt}c!{\vrule width 1pt}c!{\vrule width 1pt}c!{\vrule width 1pt}c!{\vrule width 1pt}c!{\vrule width 1pt}c!{\vrule width 1pt}c!{\vrule width 1pt}c!{\vrule width 1pt}c!{\vrule width 1pt}c!{\vrule width 1pt}}
	\toprule[1pt]
	\rowcolor[gray]{.9}[\tabcolsep] Cat. ID               & Rank & $\D^2$                                & Valid     & $\rk{\mug{\C}}$ & Premodular? & Pointed?  & TY?       & TEE      & $\log\D_{\mug{\C}}^2$ & Conjecture true? \\
	\toprule[1pt]
	$\FR{2}{0}{1}{0}$                                     & 2    & 2                                     & \ding{51} & 2               & Symm.       & \ding{51} &           & $\log 2$ & $\log 2$              & \ding{51}        \\
	$\FR{2}{0}{1}{1}$                                     & 2    & 2                                     & \ding{51} & 1               & Mod.        & \ding{51} & \ding{51} & 0        & 0                     & \ding{51}        \\
	$\FR{2}{0}{1}{2}$                                     & 2    & 2                                     & \ding{51} & 2               & Symm.       & \ding{51} &           & $\log 2$ & $\log 2$              & \ding{51}        \\
	$\FR{2}{0}{1}{3}$                                     & 2    & 2                                     & \ding{51} & 1               & Mod.        & \ding{51} & \ding{51} & 0        & 0                     & \ding{51}        \\
	$\FR{2}{0}{2}{0}$                                     & 2    & $\frac{1}{2} \left(\sqrt{5}+5\right)$ & \ding{51} & 1               & Mod.        &           &           & 0        & 0                     & \ding{51}        \\
	$\FR{2}{0}{2}{1}$                                     & 2    & $\frac{1}{2} \left(\sqrt{5}+5\right)$ & \ding{51} & 1               & Mod.        &           &           & 0        & 0                     & \ding{51}        \\
	\toprule[1pt]
	$\FR{3}{0}{1}{0}$                                     & 3    & 4                                     & \ding{51} & 1               & Mod.        &           &           & 0        & 0                     & \ding{51}        \\
	$\FR{3}{0}{1}{1}$                                     & 3    & 4                                     & \ding{51} & 1               & Mod.        &           &           & 0        & 0                     & \ding{51}        \\
	$\FR{3}{0}{1}{2}$                                     & 3    & 4                                     & \ding{51} & 1               & Mod.        &           &           & 0        & 0                     & \ding{51}        \\
	$\FR{3}{0}{1}{3}$                                     & 3    & 4                                     & \ding{51} & 1               & Mod.        &           &           & 0        & 0                     & \ding{51}        \\
	$\FR{3}{0}{1}{4}$                                     & 3    & 4                                     & \ding{51} & 1               & Mod.        &           &           & 0        & 0                     & \ding{51}        \\
	$\FR{3}{0}{1}{5}$                                     & 3    & 4                                     & \ding{51} & 1               & Mod.        &           &           & 0        & 0                     & \ding{51}        \\
	$\FR{3}{0}{1}{6}$                                     & 3    & 4                                     & \ding{51} & 1               & Mod.        &           &           & 0        & 0                     & \ding{51}        \\
	$\FR{3}{0}{1}{7}$                                     & 3    & 4                                     & \ding{51} & 1               & Mod.        &           &           & 0        & 0                     & \ding{51}        \\
	$\FR{3}{0}{2}{0}$                                     & 3    & 6                                     & \ding{51} & 3               & Symm.       &           &           & $\log 6$ & $\log 6$              & \ding{51}        \\
	\rowcolor{nicegreen!10}[\tabcolsep] $\FR{3}{0}{2}{1}$ & 3    & 6                                     & \ding{51} & 2               & \ding{51}   &           &           & $\log 2$ & $\log 2$              & \ding{51}        \\
	\rowcolor{nicegreen!10}[\tabcolsep] $\FR{3}{0}{2}{2}$ & 3    & 6                                     & \ding{51} & 2               & \ding{51}   &           &           & $\log 2$ & $\log 2$              & \ding{51}        \\
	$\FR{3}{0}{3}{0}$                                     & 3    & $\sim9.30$                            & \ding{51} & 1               & Mod.        &           &           & 0        & 0                     & \ding{51}        \\
	$\FR{3}{0}{3}{1}$                                     & 3    & $\sim9.30$                            & \ding{51} & 1               & Mod.        &           &           & 0        & 0                     & \ding{51}        \\
	$\FR{3}{2}{1}{0}$                                     & 3    & 3                                     & \ding{51} & 3               & Symm.       & \ding{51} &           & $\log 3$ & $\log 3$              & \ding{51}        \\
	$\FR{3}{2}{1}{1}$                                     & 3    & 3                                     & \ding{51} & 1               & Mod.        & \ding{51} &           & 0        & 0                     & \ding{51}        \\
	$\FR{3}{2}{1}{2}$                                     & 3    & 3                                     & \ding{51} & 1               & Mod.        & \ding{51} &           & 0        & 0                     & \ding{51}        \\
	\toprule[1pt]
\end{tabular}

\clearpage

\newgeometry{left=17mm,right=17mm,top=8mm,bottom=8mm,ignoreall, noheadfoot}
\centering\resizebox*{!}{\textheight}{
	\begin{tabular}{
		!{\vrule width 1pt}>{\columncolor[gray]{.9}[\tabcolsep]}c!{\vrule width 1pt}c!{\vrule width 1pt}c!{\vrule width 1pt}c!{\vrule width 1pt}c!{\vrule width 1pt}c!{\vrule width 1pt}c!{\vrule width 1pt}c!{\vrule width 1pt}c!{\vrule width 1pt}c!{\vrule width 1pt}c!{\vrule width 1pt}c!{\vrule width 1pt}}
		\toprule[1pt]
		\rowcolor[gray]{.9}[\tabcolsep] Cat. ID               & Rank & $\D^2$                                   & Valid     & $\rk{\mug{\C}}$ & Premodular? & Pointed?  & TY? & TEE       & $\log\D_{\mug{\C}}^2$ & Conjecture true? \\
		\toprule[1pt]
		$\FR{4}{0}{1}{0}$                                     & 4    & 4                                        & \ding{51} & 4               & Symm.       & \ding{51} &     & $\log 4$  & $\log 4$              & \ding{51}        \\
		$\FR{4}{0}{1}{1}$                                     & 4    & 4                                        & \ding{51} & 1               & Mod.        & \ding{51} &     & 0         & 0                     & \ding{51}        \\
		$\FR{4}{0}{1}{2}$                                     & 4    & 4                                        & \ding{51} & 4               & Symm.       & \ding{51} &     & $\log 4$  & $\log 4$              & \ding{51}        \\
		$\FR{4}{0}{1}{3}$                                     & 4    & 4                                        & \ding{51} & 1               & Mod.        & \ding{51} &     & 0         & 0                     & \ding{51}        \\
		$\FR{4}{0}{1}{4}$                                     & 4    & 4                                        & \ding{51} & 2               & \ding{51}   & \ding{51} &     & $\log 2$  & $\log 2$              & \ding{51}        \\
		$\FR{4}{0}{1}{5}$                                     & 4    & 4                                        & \ding{51} & 2               & \ding{51}   & \ding{51} &     & $\log 2$  & $\log 2$              & \ding{51}        \\
		$\FR{4}{0}{1}{6}$                                     & 4    & 4                                        & \ding{51} & 2               & \ding{51}   & \ding{51} &     & $\log 2$  & $\log 2$              & \ding{51}        \\
		$\FR{4}{0}{1}{7}$                                     & 4    & 4                                        & \ding{51} & 2               & \ding{51}   & \ding{51} &     & $\log 2$  & $\log 2$              & \ding{51}        \\
		$\FR{4}{0}{1}{8}$                                     & 4    & 4                                        & \ding{51} & 4               & Symm.       & \ding{51} &     & $\log 4$  & $\log 4$              & \ding{51}        \\
		$\FR{4}{0}{1}{9}$                                     & 4    & 4                                        & \ding{51} & 1               & Mod.        & \ding{51} &     & 0         & 0                     & \ding{51}        \\
		$\FR{4}{0}{1}{10}$                                    & 4    & 4                                        & \ding{51} & 4               & Symm.       & \ding{51} &     & $\log 4$  & $\log 4$              & \ding{51}        \\
		$\FR{4}{0}{1}{11}$                                    & 4    & 4                                        & \ding{51} & 1               & Mod.        & \ding{51} &     & 0         & 0                     & \ding{51}        \\
		$\FR{4}{0}{1}{12}$                                    & 4    & 4                                        & \ding{51} & 2               & \ding{51}   & \ding{51} &     & $\log 2$  & $\log 2$              & \ding{51}        \\
		$\FR{4}{0}{1}{13}$                                    & 4    & 4                                        & \ding{51} & 2               & \ding{51}   & \ding{51} &     & $\log 2$  & $\log 2$              & \ding{51}        \\
		$\FR{4}{0}{1}{14}$                                    & 4    & 4                                        & \ding{51} & 2               & \ding{51}   & \ding{51} &     & $\log 2$  & $\log 2$              & \ding{51}        \\
		$\FR{4}{0}{1}{15}$                                    & 4    & 4                                        & \ding{51} & 2               & \ding{51}   & \ding{51} &     & $\log 2$  & $\log 2$              & \ding{51}        \\
		$\FR{4}{0}{1}{16}$                                    & 4    & 4                                        & \ding{51} & 1               & Mod.        & \ding{51} &     & 0         & 0                     & \ding{51}        \\
		$\FR{4}{0}{1}{17}$                                    & 4    & 4                                        & \ding{51} & 2               & \ding{51}   & \ding{51} &     & $\log 2$  & $\log 2$              & \ding{51}        \\
		$\FR{4}{0}{1}{18}$                                    & 4    & 4                                        & \ding{51} & 1               & Mod.        & \ding{51} &     & 0         & 0                     & \ding{51}        \\
		$\FR{4}{0}{1}{19}$                                    & 4    & 4                                        & \ding{51} & 2               & \ding{51}   & \ding{51} &     & $\log 2$  & $\log 2$              & \ding{51}        \\
		$\FR{4}{0}{1}{20}$                                    & 4    & 4                                        & \ding{51} & 1               & Mod.        & \ding{51} &     & 0         & 0                     & \ding{51}        \\
		$\FR{4}{0}{1}{21}$                                    & 4    & 4                                        & \ding{51} & 1               & Mod.        & \ding{51} &     & 0         & 0                     & \ding{51}        \\
		$\FR{4}{0}{1}{22}$                                    & 4    & 4                                        & \ding{51} & 1               & Mod.        & \ding{51} &     & 0         & 0                     & \ding{51}        \\
		$\FR{4}{0}{1}{23}$                                    & 4    & 4                                        & \ding{51} & 1               & Mod.        & \ding{51} &     & 0         & 0                     & \ding{51}        \\
		$\FR{4}{0}{1}{24}$                                    & 4    & 4                                        & \ding{51} & 1               & Mod.        & \ding{51} &     & 0         & 0                     & \ding{51}        \\
		$\FR{4}{0}{1}{25}$                                    & 4    & 4                                        & \ding{51} & 2               & \ding{51}   & \ding{51} &     & $\log 2$  & $\log 2$              & \ding{51}        \\
		$\FR{4}{0}{1}{26}$                                    & 4    & 4                                        & \ding{51} & 1               & Mod.        & \ding{51} &     & 0         & 0                     & \ding{51}        \\
		$\FR{4}{0}{1}{27}$                                    & 4    & 4                                        & \ding{51} & 2               & \ding{51}   & \ding{51} &     & $\log 2$  & $\log 2$              & \ding{51}        \\
		$\FR{4}{0}{1}{28}$                                    & 4    & 4                                        & \ding{51} & 1               & Mod.        & \ding{51} &     & 0         & 0                     & \ding{51}        \\
		$\FR{4}{0}{1}{29}$                                    & 4    & 4                                        & \ding{51} & 1               & Mod.        & \ding{51} &     & 0         & 0                     & \ding{51}        \\
		$\FR{4}{0}{1}{30}$                                    & 4    & 4                                        & \ding{51} & 1               & Mod.        & \ding{51} &     & 0         & 0                     & \ding{51}        \\
		$\FR{4}{0}{1}{31}$                                    & 4    & 4                                        & \ding{51} & 1               & Mod.        & \ding{51} &     & 0         & 0                     & \ding{51}        \\
		$\FR{4}{0}{2}{0}$                                     & 4    & $\sqrt{5}+5$                             & \ding{51} & 2               & \ding{51}   &           &     & $\log 2$  & $\log 2$              & \ding{51}        \\
		$\FR{4}{0}{2}{1}$                                     & 4    & $\sqrt{5}+5$                             & \ding{51} & 1               & Mod.        &           &     & 0         & 0                     & \ding{51}        \\
		$\FR{4}{0}{2}{2}$                                     & 4    & $\sqrt{5}+5$                             & \ding{51} & 2               & \ding{51}   &           &     & $\log 2$  & $\log 2$              & \ding{51}        \\
		$\FR{4}{0}{2}{3}$                                     & 4    & $\sqrt{5}+5$                             & \ding{51} & 1               & Mod.        &           &     & 0         & 0                     & \ding{51}        \\
		$\FR{4}{0}{2}{4}$                                     & 4    & $\sqrt{5}+5$                             & \ding{51} & 2               & \ding{51}   &           &     & $\log 2$  & $\log 2$              & \ding{51}        \\
		$\FR{4}{0}{2}{5}$                                     & 4    & $\sqrt{5}+5$                             & \ding{51} & 1               & Mod.        &           &     & 0         & 0                     & \ding{51}        \\
		$\FR{4}{0}{2}{6}$                                     & 4    & $\sqrt{5}+5$                             & \ding{51} & 2               & \ding{51}   &           &     & $\log 2$  & $\log 2$              & \ding{51}        \\
		$\FR{4}{0}{2}{7}$                                     & 4    & $\sqrt{5}+5$                             & \ding{51} & 1               & Mod.        &           &     & 0         & 0                     & \ding{51}        \\
		$\FR{4}{0}{3}{0}$                                     & 4    & 10                                       & \ding{51} & 4               & Symm.       &           &     & $\log 10$ & $\log 10$             & \ding{51}        \\
		\rowcolor{nicegreen!10}[\tabcolsep] $\FR{4}{0}{3}{1}$ & 4    & 10                                       & \ding{51} & 2               & \ding{51}   &           &     & $\log 2$  & $\log 2$              & \ding{51}        \\
		\rowcolor{nicegreen!10}[\tabcolsep] $\FR{4}{0}{3}{2}$ & 4    & 10                                       & \ding{51} & 2               & \ding{51}   &           &     & $\log 2$  & $\log 2$              & \ding{51}        \\
		\rowcolor{nicegreen!10}[\tabcolsep] $\FR{4}{0}{3}{3}$ & 4    & 10                                       & \ding{51} & 2               & \ding{51}   &           &     & $\log 2$  & $\log 2$              & \ding{51}        \\
		\rowcolor{nicegreen!10}[\tabcolsep] $\FR{4}{0}{3}{4}$ & 4    & 10                                       & \ding{51} & 2               & \ding{51}   &           &     & $\log 2$  & $\log 2$              & \ding{51}        \\
		$\FR{4}{0}{4}{0}$                                     & 4    & 4 $\sqrt{2}+8$                           & \ding{51} & 2               & \ding{51}   &           &     & $\log 2$  & $\log 2$              & \ding{51}        \\
		$\FR{4}{0}{4}{1}$                                     & 4    & 4 $\sqrt{2}+8$                           & \ding{51} & 2               & \ding{51}   &           &     & $\log 2$  & $\log 2$              & \ding{51}        \\
		$\FR{4}{0}{5}{0}$                                     & 4    & $\frac{1}{2} \left(5 \sqrt{5}+15\right)$ & \ding{51} & 1               & Mod.        &           &     & 0         & 0                     & \ding{51}        \\
		$\FR{4}{0}{5}{1}$                                     & 4    & $\frac{1}{2} \left(5 \sqrt{5}+15\right)$ & \ding{51} & 1               & Mod.        &           &     & 0         & 0                     & \ding{51}        \\
		$\FR{4}{0}{5}{2}$                                     & 4    & $\frac{1}{2} \left(5 \sqrt{5}+15\right)$ & \ding{51} & 1               & Mod.        &           &     & 0         & 0                     & \ding{51}        \\
		$\FR{4}{0}{5}{3}$                                     & 4    & $\frac{1}{2} \left(5 \sqrt{5}+15\right)$ & \ding{51} & 1               & Mod.        &           &     & 0         & 0                     & \ding{51}        \\
		$\FR{4}{0}{6}{0}$                                     & 4    & $\sim19.24$                              & \ding{51} & 1               & Mod.        &           &     & 0         & 0                     & \ding{51}        \\
		$\FR{4}{0}{6}{1}$                                     & 4    & $\sim19.24$                              & \ding{51} & 1               & Mod.        &           &     & 0         & 0                     & \ding{51}        \\
		$\FR{4}{2}{1}{0}$                                     & 4    & 4                                        & \ding{51} & 4               & Symm.       & \ding{51} &     & $\log 4$  & $\log 4$              & \ding{51}        \\
		$\FR{4}{2}{1}{1}$                                     & 4    & 4                                        & \ding{51} & 1               & Mod.        & \ding{51} &     & 0         & 0                     & \ding{51}        \\
		$\FR{4}{2}{1}{2}$                                     & 4    & 4                                        & \ding{51} & 2               & \ding{51}   & \ding{51} &     & $\log 2$  & $\log 2$              & \ding{51}        \\
		$\FR{4}{2}{1}{3}$                                     & 4    & 4                                        & \ding{51} & 1               & Mod.        & \ding{51} &     & 0         & 0                     & \ding{51}        \\
		$\FR{4}{2}{1}{4}$                                     & 4    & 4                                        & \ding{51} & 4               & Symm.       & \ding{51} &     & $\log 4$  & $\log 4$              & \ding{51}        \\
		$\FR{4}{2}{1}{5}$                                     & 4    & 4                                        & \ding{51} & 1               & Mod.        & \ding{51} &     & 0         & 0                     & \ding{51}        \\
		$\FR{4}{2}{1}{6}$                                     & 4    & 4                                        & \ding{51} & 2               & \ding{51}   & \ding{51} &     & $\log 2$  & $\log 2$              & \ding{51}        \\
		$\FR{4}{2}{1}{7}$                                     & 4    & 4                                        & \ding{51} & 1               & Mod.        & \ding{51} &     & 0         & 0                     & \ding{51}        \\
		\toprule[1pt]
	\end{tabular}
}

\centering\resizebox*{!}{\textheight}{
	\begin{tabular}{
		!{\vrule width 1pt}>{\columncolor[gray]{.9}[\tabcolsep]}c!{\vrule width 1pt}c!{\vrule width 1pt}c!{\vrule width 1pt}c!{\vrule width 1pt}c!{\vrule width 1pt}c!{\vrule width 1pt}c!{\vrule width 1pt}c!{\vrule width 1pt}c!{\vrule width 1pt}c!{\vrule width 1pt}c!{\vrule width 1pt}c!{\vrule width 1pt}}
		\toprule[1pt]
		\rowcolor[gray]{.9}[\tabcolsep] Cat. ID                & Rank & $\D^2$          & Valid     & $\rk{\mug{\C}}$ & Premodular? & Pointed?  & TY?       & TEE       & $\log\D_{\mug{\C}}^2$ & Conjecture true? \\
		\toprule[1pt]
		\rowcolor{nicegreen!10}[\tabcolsep] $\FR{5}{0}{1}{0}$  & 5    & 8               & \ding{51} & 2               & \ding{51}   &           &           & $\log 2$  & $\log 2$              & \ding{51}        \\
		$\FR{5}{0}{1}{1}$                                      & 5    & 8               & \ding{51} & 5               & Symm.       &           &           & $\log 8$  & $\log 8$              & \ding{51}        \\
		\rowcolor{nicegreen!10}[\tabcolsep] $\FR{5}{0}{1}{2}$  & 5    & 8               & \ding{51} & 2               & \ding{51}   &           &           & $\log 2$  & $\log 2$              & \ding{51}        \\
		$\FR{5}{0}{1}{3}$                                      & 5    & 8               & \ding{51} & 5               & Symm.       &           &           & $\log 8$  & $\log 8$              & \ding{51}        \\
		\rowcolor{nicegreen!10}[\tabcolsep] $\FR{5}{0}{1}{4}$  & 5    & 8               & \ding{51} & 2               & \ding{51}   &           &           & $\log 2$  & $\log 2$              & \ding{51}        \\
		$\FR{5}{0}{1}{5}$                                      & 5    & 8               & \ding{51} & 4               & \ding{51}   &           & \ding{51} & $\log 4$  & $\log 4$              & \ding{51}        \\
		\rowcolor{nicegreen!10}[\tabcolsep] $\FR{5}{0}{1}{6}$  & 5    & 8               & \ding{51} & 2               & \ding{51}   &           &           & $\log 2$  & $\log 2$              & \ding{51}        \\
		$\FR{5}{0}{1}{7}$                                      & 5    & 8               & \ding{51} & 4               & \ding{51}   &           & \ding{51} & $\log 4$  & $\log 4$              & \ding{51}        \\
		\rowcolor{nicegreen!10}[\tabcolsep] $\FR{5}{0}{1}{8}$  & 5    & 8               & \ding{51} & 2               & \ding{51}   &           &           & $\log 2$  & $\log 2$              & \ding{51}        \\
		$\FR{5}{0}{1}{9}$                                      & 5    & 8               & \ding{51} & 4               & \ding{51}   &           & \ding{51} & $\log 4$  & $\log 4$              & \ding{51}        \\
		\rowcolor{nicegreen!10}[\tabcolsep] $\FR{5}{0}{1}{10}$ & 5    & 8               & \ding{51} & 2               & \ding{51}   &           &           & $\log 2$  & $\log 2$              & \ding{51}        \\
		$\FR{5}{0}{1}{11}$                                     & 5    & 8               & \ding{51} & 4               & \ding{51}   &           & \ding{51} & $\log 4$  & $\log 4$              & \ding{51}        \\
		\rowcolor{nicegreen!10}[\tabcolsep] $\FR{5}{0}{1}{12}$ & 5    & 8               & \ding{51} & 2               & \ding{51}   &           &           & $\log 2$  & $\log 2$              & \ding{51}        \\
		$\FR{5}{0}{1}{13}$                                     & 5    & 8               & \ding{51} & 5               & Symm.       &           &           & $\log 8$  & $\log 8$              & \ding{51}        \\
		\rowcolor{nicegreen!10}[\tabcolsep] $\FR{5}{0}{1}{14}$ & 5    & 8               & \ding{51} & 2               & \ding{51}   &           &           & $\log 2$  & $\log 2$              & \ding{51}        \\
		$\FR{5}{0}{1}{15}$                                     & 5    & 8               & \ding{51} & 5               & Symm.       &           &           & $\log 8$  & $\log 8$              & \ding{51}        \\
		\rowcolor{nicegreen!10}[\tabcolsep] $\FR{5}{0}{1}{16}$ & 5    & 8               & \ding{51} & 2               & \ding{51}   &           &           & $\log 2$  & $\log 2$              & \ding{51}        \\
		$\FR{5}{0}{1}{17}$                                     & 5    & 8               & \ding{51} & 5               & Symm.       &           &           & $\log 8$  & $\log 8$              & \ding{51}        \\
		\rowcolor{nicegreen!10}[\tabcolsep] $\FR{5}{0}{1}{18}$ & 5    & 8               & \ding{51} & 2               & \ding{51}   &           &           & $\log 2$  & $\log 2$              & \ding{51}        \\
		\rowcolor{nicegreen!10}[\tabcolsep] $\FR{5}{0}{1}{19}$ & 5    & 8               & \ding{51} & 2               & \ding{51}   &           &           & $\log 2$  & $\log 2$              & \ding{51}        \\
		$\FR{5}{0}{1}{20}$                                     & 5    & 8               & \ding{51} & 4               & \ding{51}   &           & \ding{51} & $\log 4$  & $\log 4$              & \ding{51}        \\
		\rowcolor{nicegreen!10}[\tabcolsep] $\FR{5}{0}{1}{21}$ & 5    & 8               & \ding{51} & 2               & \ding{51}   &           &           & $\log 2$  & $\log 2$              & \ding{51}        \\
		\rowcolor{nicegreen!10}[\tabcolsep] $\FR{5}{0}{1}{22}$ & 5    & 8               & \ding{51} & 2               & \ding{51}   &           &           & $\log 2$  & $\log 2$              & \ding{51}        \\
		$\FR{5}{0}{1}{23}$                                     & 5    & 8               & \ding{51} & 4               & \ding{51}   &           & \ding{51} & $\log 4$  & $\log 4$              & \ding{51}        \\
		\rowcolor{nicegreen!10}[\tabcolsep] $\FR{5}{0}{1}{24}$ & 5    & 8               & \ding{51} & 2               & \ding{51}   &           &           & $\log 2$  & $\log 2$              & \ding{51}        \\
		\rowcolor{nicegreen!10}[\tabcolsep] $\FR{5}{0}{1}{25}$ & 5    & 8               & \ding{51} & 2               & \ding{51}   &           &           & $\log 2$  & $\log 2$              & \ding{51}        \\
		$\FR{5}{0}{1}{26}$                                     & 5    & 8               & \ding{51} & 5               & Symm.       &           &           & $\log 8$  & $\log 8$              & \ding{51}        \\
		\rowcolor{nicegreen!10}[\tabcolsep] $\FR{5}{0}{1}{27}$ & 5    & 8               & \ding{51} & 2               & \ding{51}   &           &           & $\log 2$  & $\log 2$              & \ding{51}        \\
		\rowcolor{nicegreen!10}[\tabcolsep] $\FR{5}{0}{1}{28}$ & 5    & 8               & \ding{51} & 2               & \ding{51}   &           &           & $\log 2$  & $\log 2$              & \ding{51}        \\
		\rowcolor{nicegreen!10}[\tabcolsep] $\FR{5}{0}{1}{29}$ & 5    & 8               & \ding{51} & 2               & \ding{51}   &           &           & $\log 2$  & $\log 2$              & \ding{51}        \\
		\rowcolor{nicegreen!10}[\tabcolsep] $\FR{5}{0}{1}{30}$ & 5    & 8               & \ding{51} & 2               & \ding{51}   &           &           & $\log 2$  & $\log 2$              & \ding{51}        \\
		\rowcolor{nicegreen!10}[\tabcolsep] $\FR{5}{0}{1}{31}$ & 5    & 8               & \ding{51} & 2               & \ding{51}   &           &           & $\log 2$  & $\log 2$              & \ding{51}        \\
		\rowcolor{nicegreen!10}[\tabcolsep] $\FR{5}{0}{1}{32}$ & 5    & 8               & \ding{51} & 2               & \ding{51}   &           &           & $\log 2$  & $\log 2$              & \ding{51}        \\
		\rowcolor{nicegreen!10}[\tabcolsep] $\FR{5}{0}{1}{33}$ & 5    & 8               & \ding{51} & 2               & \ding{51}   &           &           & $\log 2$  & $\log 2$              & \ding{51}        \\
		\rowcolor{nicegreen!10}[\tabcolsep] $\FR{5}{0}{1}{34}$ & 5    & 8               & \ding{51} & 2               & \ding{51}   &           &           & $\log 2$  & $\log 2$              & \ding{51}        \\
		\rowcolor{nicegreen!10}[\tabcolsep] $\FR{5}{0}{1}{35}$ & 5    & 8               & \ding{51} & 2               & \ding{51}   &           &           & $\log 2$  & $\log 2$              & \ding{51}        \\
		\rowcolor{nicegreen!10}[\tabcolsep] $\FR{5}{0}{1}{36}$ & 5    & 8               & \ding{51} & 2               & \ding{51}   &           &           & $\log 2$  & $\log 2$              & \ding{51}        \\
		\rowcolor{nicegreen!10}[\tabcolsep] $\FR{5}{0}{1}{37}$ & 5    & 8               & \ding{51} & 2               & \ding{51}   &           &           & $\log 2$  & $\log 2$              & \ding{51}        \\
		\rowcolor{nicegreen!10}[\tabcolsep] $\FR{5}{0}{1}{38}$ & 5    & 8               & \ding{51} & 2               & \ding{51}   &           &           & $\log 2$  & $\log 2$              & \ding{51}        \\
		\rowcolor{nicegreen!10}[\tabcolsep] $\FR{5}{0}{1}{39}$ & 5    & 8               & \ding{51} & 2               & \ding{51}   &           &           & $\log 2$  & $\log 2$              & \ding{51}        \\
		\rowcolor{nicegreen!10}[\tabcolsep] $\FR{5}{0}{1}{40}$ & 5    & 8               & \ding{51} & 2               & \ding{51}   &           &           & $\log 2$  & $\log 2$              & \ding{51}        \\
		\rowcolor{nicegreen!10}[\tabcolsep] $\FR{5}{0}{1}{41}$ & 5    & 8               & \ding{51} & 2               & \ding{51}   &           &           & $\log 2$  & $\log 2$              & \ding{51}        \\
		\rowcolor{nicegreen!10}[\tabcolsep] $\FR{5}{0}{1}{42}$ & 5    & 8               & \ding{51} & 2               & \ding{51}   &           &           & $\log 2$  & $\log 2$              & \ding{51}        \\
		\rowcolor{nicegreen!10}[\tabcolsep] $\FR{5}{0}{1}{43}$ & 5    & 8               & \ding{51} & 2               & \ding{51}   &           &           & $\log 2$  & $\log 2$              & \ding{51}        \\
		\rowcolor{nicegreen!10}[\tabcolsep] $\FR{5}{0}{1}{44}$ & 5    & 8               & \ding{51} & 2               & \ding{51}   &           &           & $\log 2$  & $\log 2$              & \ding{51}        \\
		\rowcolor{nicegreen!10}[\tabcolsep] $\FR{5}{0}{1}{45}$ & 5    & 8               & \ding{51} & 2               & \ding{51}   &           &           & $\log 2$  & $\log 2$              & \ding{51}        \\
		\rowcolor{nicegreen!10}[\tabcolsep] $\FR{5}{0}{1}{46}$ & 5    & 8               & \ding{51} & 2               & \ding{51}   &           &           & $\log 2$  & $\log 2$              & \ding{51}        \\
		\rowcolor{nicegreen!10}[\tabcolsep] $\FR{5}{0}{1}{47}$ & 5    & 8               & \ding{51} & 2               & \ding{51}   &           &           & $\log 2$  & $\log 2$              & \ding{51}        \\
		\rowcolor{nicegreen!10}[\tabcolsep] $\FR{5}{0}{1}{48}$ & 5    & 8               & \ding{51} & 2               & \ding{51}   &           &           & $\log 2$  & $\log 2$              & \ding{51}        \\
		\rowcolor{nicegreen!10}[\tabcolsep] $\FR{5}{0}{1}{49}$ & 5    & 8               & \ding{51} & 2               & \ding{51}   &           &           & $\log 2$  & $\log 2$              & \ding{51}        \\
		\rowcolor{nicegreen!10}[\tabcolsep] $\FR{5}{0}{1}{50}$ & 5    & 8               & \ding{51} & 2               & \ding{51}   &           &           & $\log 2$  & $\log 2$              & \ding{51}        \\
		\rowcolor{nicegreen!10}[\tabcolsep] $\FR{5}{0}{1}{51}$ & 5    & 8               & \ding{51} & 2               & \ding{51}   &           &           & $\log 2$  & $\log 2$              & \ding{51}        \\
		\rowcolor{nicegreen!10}[\tabcolsep] $\FR{5}{0}{1}{52}$ & 5    & 8               & \ding{51} & 2               & \ding{51}   &           &           & $\log 2$  & $\log 2$              & \ding{51}        \\
		$\FR{5}{0}{1}{53}$                                     & 5    & 8               & \ding{51} & 5               & Symm.       &           &           & $\log 8$  & $\log 8$              & \ding{51}        \\
		$\FR{5}{0}{1}{54}$                                     & 5    & 8               & \ding{51} & 4               & \ding{51}   &           & \ding{51} & $\log 4$  & $\log 4$              & \ding{51}        \\
		$\FR{5}{0}{1}{55}$                                     & 5    & 8               & \ding{51} & 4               & \ding{51}   &           & \ding{51} & $\log 4$  & $\log 4$              & \ding{51}        \\
		$\FR{5}{0}{1}{56}$                                     & 5    & 8               & \ding{51} & 5               & Symm.       &           &           & $\log 8$  & $\log 8$              & \ding{51}        \\
		\rowcolor{nicegreen!10}[\tabcolsep] $\FR{5}{0}{1}{57}$ & 5    & 8               & \ding{51} & 2               & \ding{51}   &           &           & $\log 2$  & $\log 2$              & \ding{51}        \\
		\rowcolor{nicegreen!10}[\tabcolsep] $\FR{5}{0}{1}{58}$ & 5    & 8               & \ding{51} & 2               & \ding{51}   &           &           & $\log 2$  & $\log 2$              & \ding{51}        \\
		\rowcolor{nicegreen!10}[\tabcolsep] $\FR{5}{0}{1}{59}$ & 5    & 8               & \ding{51} & 2               & \ding{51}   &           &           & $\log 2$  & $\log 2$              & \ding{51}        \\
		\rowcolor{nicegreen!10}[\tabcolsep] $\FR{5}{0}{1}{60}$ & 5    & 8               & \ding{51} & 2               & \ding{51}   &           &           & $\log 2$  & $\log 2$              & \ding{51}        \\
		\rowcolor{nicegreen!10}[\tabcolsep] $\FR{5}{0}{1}{61}$ & 5    & 8               & \ding{51} & 2               & \ding{51}   &           &           & $\log 2$  & $\log 2$              & \ding{51}        \\
		\rowcolor{nicegreen!10}[\tabcolsep] $\FR{5}{0}{1}{62}$ & 5    & 8               & \ding{51} & 2               & \ding{51}   &           &           & $\log 2$  & $\log 2$              & \ding{51}        \\
		\rowcolor{nicegreen!10}[\tabcolsep] $\FR{5}{0}{1}{63}$ & 5    & 8               & \ding{51} & 2               & \ding{51}   &           &           & $\log 2$  & $\log 2$              & \ding{51}        \\
		$\FR{5}{0}{3}{0}$                                      & 5    & 12              & \ding{51} & 1               & Mod.        &           &           & 0         & 0                     & \ding{51}        \\
		$\FR{5}{0}{3}{1}$                                      & 5    & 12              & \ding{51} & 1               & Mod.        &           &           & 0         & 0                     & \ding{51}        \\
		$\FR{5}{0}{3}{2}$                                      & 5    & 12              & \ding{51} & 1               & Mod.        &           &           & 0         & 0                     & \ding{51}        \\
		$\FR{5}{0}{3}{3}$                                      & 5    & 12              & \ding{51} & 1               & Mod.        &           &           & 0         & 0                     & \ding{51}        \\
		$\FR{5}{0}{3}{4}$                                      & 5    & 12              & \ding{51} & 1               & Mod.        &           &           & 0         & 0                     & \ding{51}        \\
		$\FR{5}{0}{3}{5}$                                      & 5    & 12              & \ding{51} & 1               & Mod.        &           &           & 0         & 0                     & \ding{51}        \\
		$\FR{5}{0}{3}{6}$                                      & 5    & 12              & \ding{51} & 1               & Mod.        &           &           & 0         & 0                     & \ding{51}        \\
		$\FR{5}{0}{3}{7}$                                      & 5    & 12              & \ding{51} & 1               & Mod.        &           &           & 0         & 0                     & \ding{51}        \\
		$\FR{5}{0}{4}{0}$                                      & 5    & 14              & \ding{51} & 5               & Symm.       &           &           & $\log 14$ & $\log 14$             & \ding{51}        \\
		\rowcolor{nicegreen!10}[\tabcolsep] $\FR{5}{0}{4}{1}$  & 5    & 14              & \ding{51} & 2               & \ding{51}   &           &           & $\log 2$  & $\log 2$              & \ding{51}        \\
		\rowcolor{nicegreen!10}[\tabcolsep] $\FR{5}{0}{4}{2}$  & 5    & 14              & \ding{51} & 2               & \ding{51}   &           &           & $\log 2$  & $\log 2$              & \ding{51}        \\
		\rowcolor{nicegreen!10}[\tabcolsep] $\FR{5}{0}{4}{3}$  & 5    & 14              & \ding{51} & 2               & \ding{51}   &           &           & $\log 2$  & $\log 2$              & \ding{51}        \\
		\rowcolor{nicegreen!10}[\tabcolsep] $\FR{5}{0}{4}{4}$  & 5    & 14              & \ding{51} & 2               & \ding{51}   &           &           & $\log 2$  & $\log 2$              & \ding{51}        \\
		\rowcolor{nicegreen!10}[\tabcolsep] $\FR{5}{0}{4}{5}$  & 5    & 14              & \ding{51} & 2               & \ding{51}   &           &           & $\log 2$  & $\log 2$              & \ding{51}        \\
		\rowcolor{nicegreen!10}[\tabcolsep] $\FR{5}{0}{4}{6}$  & 5    & 14              & \ding{51} & 2               & \ding{51}   &           &           & $\log 2$  & $\log 2$              & \ding{51}        \\
		$\FR{5}{0}{6}{0}$                                      & 5    & 24              & \ding{51} & 5               & Symm.       &           &           & $\log 24$ & $\log 24$             & \ding{51}        \\
		\rowcolor{nicegreen!10}[\tabcolsep] $\FR{5}{0}{6}{1}$  & 5    & 24              & \ding{51} & 3               & \ding{51}   &           &           & $\log 6$  & $\log 6$              & \ding{51}        \\
		$\FR{5}{0}{6}{2}$                                      & 5    & 24              & \ding{51} & 5               & Symm.       &           &           & $\log 24$ & $\log 24$             & \ding{51}        \\
		\rowcolor{nicegreen!10}[\tabcolsep] $\FR{5}{0}{6}{3}$  & 5    & 24              & \ding{51} & 3               & \ding{51}   &           &           & $\log 6$  & $\log 6$              & \ding{51}        \\
		\rowcolor{nicegreen!10}[\tabcolsep] $\FR{5}{0}{7}{0}$  & 5    & 5 $\sqrt{5}+15$ & \ding{51} & 2               & \ding{51}   &           &           & $\log 2$  & $\log 2$              & \ding{51}        \\
		\rowcolor{nicegreen!10}[\tabcolsep] $\FR{5}{0}{7}{1}$  & 5    & 5 $\sqrt{5}+15$ & \ding{51} & 2               & \ding{51}   &           &           & $\log 2$  & $\log 2$              & \ding{51}        \\
		$\FR{5}{0}{10}{0}$                                     & 5    & $\sim34.65$     & \ding{51} & 1               & Mod.        &           &           & 0         & 0                     & \ding{51}        \\
		$\FR{5}{0}{10}{1}$                                     & 5    & $\sim34.65$     & \ding{51} & 1               & Mod.        &           &           & 0         & 0                     & \ding{51}        \\
		$\FR{5}{4}{1}{0}$                                      & 5    & 5               & \ding{51} & 5               & Symm.       & \ding{51} &           & $\log 5$  & $\log 5$              & \ding{51}        \\
		$\FR{5}{4}{1}{1}$                                      & 5    & 5               & \ding{51} & 1               & Mod.        & \ding{51} &           & 0         & 0                     & \ding{51}        \\
		$\FR{5}{4}{1}{2}$                                      & 5    & 5               & \ding{51} & 1               & Mod.        & \ding{51} &           & 0         & 0                     & \ding{51}        \\
		$\FR{5}{4}{1}{3}$                                      & 5    & 5               & \ding{51} & 1               & Mod.        & \ding{51} &           & 0         & 0                     & \ding{51}        \\
		$\FR{5}{4}{1}{4}$                                      & 5    & 5               & \ding{51} & 1               & Mod.        & \ding{51} &           & 0         & 0                     & \ding{51}        \\
		\toprule[1pt]
	\end{tabular}
}

\end{document}